\pdfoutput=1

\documentclass[12pt,english]{article}
\usepackage[T1]{fontenc}
\usepackage[latin9]{inputenc}
\usepackage{geometry}
\usepackage{babel}
\usepackage{array}
\usepackage{multirow}
\usepackage{amsthm}
\usepackage{amsmath}
\usepackage{amssymb}
\usepackage{setspace}
\usepackage{color}
\usepackage{soul}
\usepackage{framed}
\usepackage{rotating}
\usepackage{graphicx}
\usepackage[authoryear,round]{natbib}
\usepackage[section]{placeins}
\usepackage[normalem]{ulem}
\usepackage[unicode=true,pdfusetitle,
 bookmarks=true,bookmarksnumbered=false,bookmarksopen=false,
 breaklinks=false,pdfborder={0 0 1},backref=false,colorlinks=false]{hyperref}
\usepackage{array}
\usepackage{versions}
\usepackage{xcolor}
\usepackage{pdfpages}

\setcounter{MaxMatrixCols}{10}

\definecolor{bluez}{rgb}{0,0,0}
\geometry{verbose,tmargin=2cm,bmargin=2cm,lmargin=2cm,rmargin=2cm}
\PassOptionsToPackage{normalem}{ulem}
\onehalfspacing
\makeatletter

\newcolumntype{L}[1]{>{\raggedright\let\newline\\\arraybackslash\hspace{0pt}}m{#1}}
\newcolumntype{C}[1]{>{\centering\let\newline\\\arraybackslash\hspace{0pt}}m{#1}}
\newcolumntype{R}[1]{>{\raggedleft\let\newline\\\arraybackslash\hspace{0pt}}m{#1}}
\theoremstyle{plain}
\newtheorem{thm}{\protect\theoremname}
\newtheorem{prop}{Proposition}

\newtheorem{cor}{Corollary}
\newtheorem{lem}{Lemma}
\makeatother
\providecommand{\theoremname}{Theorem}
\begin{document}

\title{\textbf{Identifying Network Ties from Panel Data: Theory and an
Application to Tax Competition}\thanks{{\scriptsize We gratefully
acknowledge financial support from the ESRC through the Centre for the
Microeconomic Analysis of Public Policy (ES/T014334/1), the Centre for
Microdata Methods and Practice (RES-589-28-0001) and the Large Research
Grant ES/P008909/1 and from the ERC (SG338187). We thank Edo Airoldi, Luis
Alvarez, Michele Aquaro, Oriana Bandiera, Larry Blume, Yann Bramoull\'{e},
Stephane Bonhomme, Vasco Carvalho, Gary Chamberlain, Andrew Chesher,
Christian Dustmann, S\'{e}rgio Firpo, Jean-Pierre Florens, Eric Gautier, Giacomo
de Giorgi, Matthew Gentzkow, Stefan Hoderlein, Bo Honor\'{e}, Matt Jackson, Dale
Jorgensen, Christian Julliard, Maximilian Kasy, Miles Kimball, Thibaut
Lamadon, Simon Sokbae Lee, Arthur Lewbel, Tong Li, Xiadong Liu, Elena
Manresa, Charles Manski, Marcelo Medeiros, Angelo Mele, Francesca Molinari,
Pepe Montiel, Andrea Moro, Whitney Newey, Ariel Pakes, Eleonora Pattachini,
Michele Pelizzari, Martin Pesendorfer, Christiern Rose, Adam Rosen, Bernard
Salanie, Olivier Scaillet, Sebastien Siegloch, Pasquale Schiraldi, Tymon
Sloczynski, Kevin Song, John Sutton, Adam Szeidl, Thiago Tachibana, Elie
Tamer, and seminar and conference participants for valuable comments. We
also thank Tim Besley and Anne Case for comments and sharing data. Daniel
Barbosa provided outstanding research assistance. A previous version of
this paper was circulated as \textquotedblleft Recovering Social Networks
from Panel Data: Identification, Simulations and an
Application.\textquotedblright\ All errors remain our own. Codes are available on Zenodo (\texttt{https://zenodo.org/XXX}) and github (\texttt{https://github.com/YYY}) repositories.
}}}
\author{\'{A}ureo de Paula \and Imran Rasul \and Pedro CL Souza\thanks{%
{\scriptsize de Paula: University College London, CeMMAP and IFS,
a.paula@ucl.ac.uk; Rasul: University College London and IFS,
i.rasul@ucl.ac.uk; Souza: Queen Mary University, p.souza@qmul.ac.uk.}}.}
\date{October 2023}

\maketitle

\begin{abstract}
\noindent Social interactions determine many economic behaviors, but
information on social ties does not exist in most publicly available and
widely used datasets. We present results on the identification of social
networks from observational panel data that contains no information on
social ties between agents. In the context of a canonical social
interactions model, we provide sufficient conditions under which the social
interactions matrix, endogenous and exogenous social effect parameters are
globally identified if networks are constant over time. We also provide an
extension of the method for time-varying networks. We then describe how
high-dimensional estimation techniques can be used to estimate the
interactions model based on the Adaptive Elastic Net Generalized Method of
Moments. We employ the method to study tax competition across US states. The identified social interactions matrix implies that tax competition
differs markedly from the common assumption of competition between
geographically neighboring states, providing further insights into the
long-standing debate on the relative roles of factor mobility and yardstick
competition in driving tax setting behavior across states. Most broadly, our
identification and application show that the analysis of social interactions
can be extended to economic realms where no network data exists.\newline
\textit{JEL Classification: C31, D85, H71.}$\smallskip \smallskip \smallskip
\smallskip \smallskip \smallskip $
\end{abstract}

\section{Introduction}

In many economic environments, behavior is shaped by social interactions
between agents. In individual decision problems, social interactions have
been key to understanding outcomes as diverse as educational test scores,
the demand for financial assets, and technology adoption (%
\citealp{Sacerdote2001}; \citealp{Bursztynetal2014}; \citealp{ConleyUdry2010}%
). In macroeconomics, the structure of firms' production and credit networks
propagate shocks, or help firms to learn (\citealp{Acemogluetal2012}; %
\citealp{Chaney2014}). In political economy and public economics, ties between jurisdictions are
key to understanding tax setting behavior (\citealp{Tiebout1956}; %
\citealp{Shleifer1985}; \citealp{BesleyCase1994}).

Underpinning all these bodies of research is some measurement of the
underlying social ties between agents. However, information on social ties
does not exist in most publicly available and widely used datasets. To
overcome this limitation, studies of social interaction either \emph{%
postulate} ties based on common observables or homophily, or \emph{elicit}
data on networks. However, it is increasingly recognized that postulated and
elicited networks remain imperfect solutions to the fundamental problem of
missing data on social ties, because of econometric concerns that arise with
either method, or simply because of the cost of collecting network data.%
\footnote{%
As detailed in \citet{DePaula2017}, elicited networks are often
self-reported and can introduce error to the outcome of interest. Network
data can be censored if only a limited number of links can feasibly be
reported. Incomplete survey coverage of nodes in a network may lead to
biased aggregate network statistics. \citet{ChandrasekharLewis2016} show
that even when nodes are randomly sampled from a network, partial sampling
leads to non-classical measurement error and biased estimation. Collecting
social network data is also a time- and resource-intensive process. In
response to these concerns, a nascent strand of literature explores
cost-effective alternatives to full elicitation to recover aggregate network
statistics (\citealp{Brezaetal2017}).}

Two consequences are that (i) the classes of problems in which social
interactions occur are understudied, because social networks data is missing
or too costly to collect; and (ii) there is no way to validate social
interactions analysis in contexts where ties are postulated. In this paper,
we tackle this challenge by deriving sufficient conditions under which
global identification of the \emph{entire structure} of social networks\ is
obtained, using only observational panel data that itself contains \emph{no}
information on network ties. Our identification results allow the study of
social interactions without data on social networks, and the validation of
structures of social interaction where social ties have hitherto been
postulated. The recovered networks are economically meaningful to explain
the effects under study, since they are entirely estimated from the data
itself, and not driven by \emph{ex ante} assumptions on how individuals
interact.

A researcher is assumed to have panel data on individuals $i=1,...,N$ for
instances $t=1,...,T$. An instance refers to a specific observation for $i$
and need not correspond to a time period (for example, if $i$ refers to a
firm, $t$ could refer to market $t$). The outcome of interest for individual 
$i$ in instance $t$ is $y_{it}$ and is generated according to a canonical
structural model of social interactions:\footnote{\citet{Blumeetal2015}
present micro-foundations for this estimating equation based on
non-cooperative games of incomplete information for individual choice
problems.}%
\begin{equation}
y_{it}=\rho _{0}\sum_{j=1}^{N}W_{0,ij}y_{jt}+\beta _{0}x_{it}+\gamma
_{0}\sum_{j=1}^{N}W_{0,ij}x_{jt}+\alpha _{i}+\alpha _{t}+\epsilon _{it}.
\label{model motivation}
\end{equation}%
Outcome $y_{it}$ depends on the outcomes of other individuals to whom $i$ is
socially tied, $y_{jt}$, and $x_{jt}$ includes characteristics of those
individuals.\footnote{%
In the case in which $t$ is considered to be a time period, $x_{it}$ may
also include lagged values of $y_{it}$.} $W_{0,ij}$ measures how the outcome
and characteristics of $j$ causally impact the outcome for $i$. The network
is initially assumed to be fixed over time, and we later provide an extension of
the method for time-varying networks. As outcomes for all individuals obey
equations analogous to (\ref{model motivation}), the system of equations can
be written in matrix notation, where the structure of interactions is
captured by the adjacency matrix, denoted by $W_{0}$. Our approach allows
for unobserved heterogeneity across individuals $\alpha _{i}$ and common
shocks to individuals $\alpha _{t}$. This framework encompasses\ a classic
linear-in-means specification as in \citet{Manski1993}. In his terminology, $%
\rho _{0}$ and $\gamma _{0}$ capture endogenous and exogenous social
effects, and $\alpha _{t}$ captures correlated effects. The distinction
between endogenous and exogenous peer effects is critical, as only the
former generates social multiplier effects. In line with the literature, we
maintain that the same $W_{0}$ governs the structure of both endogenous and
exogenous effects. We later discuss relaxing this assumption when more than
one regressor is used.

Manski's seminal contribution set out the reflection problem of separately
identifying endogenous, exogenous, and correlated effects in linear models.
However, it has been somewhat overlooked that he also set out another
challenge in the identification of the social network in the first place.%
\footnote{\citet{Manski1993} highlights difficulties (and potential
restrictions) in identifying $\rho _{0},\beta _{0}$ and $\gamma _{0}$ when
\emph{all} individuals interact with each other, and when this is observed by the
researcher. In (\ref{model motivation}), this corresponds to $%
W_{0,ij}=N^{-1} $, for $i,j=1,\dots ,N$. At the same time, he states (p.
536), \textquotedblleft I have presumed that researchers know how
individuals form reference groups and that individuals correctly perceive
the mean outcomes experienced by their supposed reference groups. There is
substantial reason to question these assumptions (...) If researchers do not
know how individuals form reference groups and perceive reference-group
outcomes, then it is reasonable to ask whether observed behavior can be used
to infer these unknowns (...) The conclusion to be drawn is that informed
specification of reference groups is a necessary prelude to analysis of
social effects.\textquotedblright} This is the problem we tackle, and thus, we
expand the scope of identification beyond $\rho _{0}$, $\beta_{0}$, and $\gamma_{0}$. Our point of departure from much of the literature is 
therefore to presume $W_{0}$ is \emph{entirely unknown} to the researcher. We
derive sufficient conditions under which all the entries in $W_{0}$, and the
endogenous and exogenous social effect parameters, $\rho _{0}$ and $\gamma
_{0},$ are globally identified from \textquotedblleft reduced
form\textquotedblright\ parameters. By identifying the social interactions
matrix $W_{0}$, our results allow the recovery of aggregate network
characteristics, such as the degree distribution and patterns of homophily,
as well as node-level statistics such as the strength of social interactions
between nodes,   and the centrality of nodes. Such aggregate and node-level
statistics often map back to underlying models of social interaction (%
\citealp{Ballesteretal2006}; \citealp{Jacksonetal2017}; \citealp{DePaula2017}%
).

Our identification strategy is new and fundamentally different from those
employed elsewhere in the literature and does not rely on requirements about
network sparsity. However, it delivers sufficient conditions that are mild
and relate to existing results on the identification of social effects
parameters when $W_{0}$ is known (\citealp{Bramoulleetal2009}; %
\citealp{DeGiorgietal2010}; \citealp{Blumeetal2015}). The intuition for our
identification result is simple: model (1) has $N^{2}$ reduced-form
parameters, and there are $N(N-1)+3$ structural unknowns (as no unit affects
itself, so $W_{0,ii}=0$). So there are more equations than unknowns if $%
N\geq 2$, and we demonstrate those can be solved for the parameters of
interest under the assumptions we invoke. Our identification result is also
useful in other estimation contexts, such as when a researcher has partial
knowledge of $W_{0}$,\footnote{%
One such example is the nascent literature of Aggregate Relational Data
(ARD) as in \citet{Brezaetal2017}. Another possibility is that individuals
are known to belong to subgroups, so $W_{0}$ is block diagonal. 
} or in navigating between priors on reduced-form and structural parameters
in a Bayesian framework (see, e.g., \citealp{gefanghalltavlas2023}), thus avoiding issues the raised by %
\citet{KlineTamer2016}.

Global identification is a necessary requirement for consistency of extremum
estimators such as those based on the GMM (Hansen 1982; Newey and McFadden
1994). Our identification analysis provides primitives for this condition.
To estimate the model, we employ the Adaptive Elastic Net GMM method (%
\citealp{CanerZhang2014}), as this allows us to deal with a potentially
high-dimensional parameter vector (in comparison to the time dimension in
the data) including all the entries of the social interactions matrix $W_{0}$%
, although other estimation protocols may also be entertained (e.g. using
Bayesian methods or \emph{a priori} information).\footnote{The Elastic Net was introduced by \citet{zou2005regularization} in part to circumvent difficulties faced by alternative estimation protocols (e.g., LASSO) when the number of parameters, $p$, exceeds the number of observations, $n$ (where $p$ and $n$ follow the notation in that paper).  Whereas the theoretical results on the large-sample properties of elastic net estimators usually have not exploited sparsity, several articles have demonstrated their performance in data scenarios where this occurs. 
 In Section 3, we provide an informal discussion on the performance in our context. \label{footnote:elasticnet}}

We showcase the method using Monte Carlo simulations based on stylized
random network structures as well as real-world networks. In each case, we
take a fixed network structure $W_{0}$ and simulate panel data as if the
data generating process were given by (\ref{model motivation}). We then
apply the method to the simulated panel data to recover estimates of all
elements in $W_{0}$, as well as the endogenous and exogenous social effect
parameters ($\rho _{0}$, $\gamma _{0}$). The networks considered vary in
size, complexity, and their aggregate and node-level features. In small
samples, we find that the majority of links are identified even for $T=5$,
and the proportion of true non-links (zeros in $W_{0}$) captured correctly
as zeros is over $85$\% even when $T=5$. Of course, there are important
limitations to the use of the method in small-$T$ cases. Biases are expected
and manifest themselves in two ways. First, weak links can be shrunk to
zero, and the strength of strong edges can be overestimated. Second, the estimates of $\rho$ and $\gamma$ can
suffer from small-sample bias, being analogous to well-known results for
autoregressive time series models. Both properties rapidly improve with $T$.
For instance, biases in the estimation of endogenous and exogenous effects
parameters ($\hat{\rho},$ $\hat{\gamma}$) fall quickly with $T$ and are
close to zero for large sample sizes. The endogenous and exogenous social
effects are also correctly captured as $T$ increases. \emph{A fortiori}, we
estimate aggregate and node-level statistics of each network, demonstrating
the accurate recovery of key players in networks, for example.

In the final part of our analysis, we apply the method to shed new light on
a classic real-world social interactions problem:\ tax competition between
US\ states. The literatures in political economy and public economics have
long recognized the behavior of state governors might be influenced by
decisions made in ``neighboring'' states. The typical empirical approach has
been to postulate the relevant neighbors as being geographically contiguous
states. Our approach allows us to infer the set of ``economic'' neighbors
determining social interactions in tax setting behavior from panel data on
outcomes and covariates alone. In this application, the panel data
dimensions cover mainland US states, $N=48$, for the years 1962-2015, $T=53$.

The identified network structure of tax competition differs markedly from
the assumption of competition between geographic neighbors. The identified
economic network has fewer edges, and we identify non-adjacent states that
influence tax setting behaviors. Differences in the structure of the identified
economic and geography-based networks are reflected in\ the far lower
clustering coefficient in the former ($.042$ versus $.419$). With the
recovered social interactions matrix we establish, beyond geography, which
covariates correlate to the existence of ties between states and so shed new
light on hypotheses for social interactions in tax setting: factor mobility
and yardstick competition (\citealp{Tiebout1956}; \citealp{Shleifer1985}; %
\citealp{BesleyCase1994}). The identified network highlights significant
predictors of tax competition between states beyond distance:\ political
homophily \emph{reduces} the likelihood of a link, suggesting any yardstick
competition driving social interactions occurs when voters compare their
governor to those of the opposing party in other states. Tax haven states
appear to be less influential in tax setting behaviors, easing concerns
over a race-to-the-bottom in tax setting. Labor mobility between states does
not robustly predict the existence of economic ties between states in tax
setting behavior.

Given the relatively long study period in this application, at a final stage
of analysis we extend our method to allow the strength of social
interactions in tax competition ($\rho _{0}$, $\gamma _{0}$) and the
structure of links in the economic network ($W_{0}$) to vary over time as we
change the weight placed on observations from any given time period. We
document the gradual increase in strength of social interactions over time,
and the changing nature of the network of interactions. We utilize these
findings to conduct counterfactual simulations of the general equilibrium
propagation of tax shocks from a given state to all other mainland US\
states, and how these general equilibrium effects of the same policy shock
vary as we place weight on observations later in our study period.

Our paper contributes to the literature on the identification of social
interactions models. The first generation of papers studied the case where $%
W_{0}$ is known, so only the endogenous and exogenous social effects
parameters needed to be identified. It is now established that if the known $%
W_{0}$ differs from the linear-in-means example where all units are linked
with equal weights, $\rho _{0}$ and $\gamma _{0}$ can be identified (%
\citealp{Bramoulleetal2009}; \citealp{DeGiorgietal2010}). Intuitively,
identification in those cases can use peers-of-peers, are not
necessarily connected to individual $i$ and can be used to leverage
variation from exclusion restrictions in (\ref{model motivation}), or can
use groups of different sizes within which all individuals interact with
each other (\citealp{Lee2007}). \citet{Bramoulleetal2009} show these
conditions are met if $I,$ $W_{0}$, and $W_{0}^{2}$ are linearly independent,
which is shown to hold generically by \citet{Blumeetal2015}. However, as
made precise in Section \ref{sec:Identification}, the linear algebraic
arguments employed by \citet{Bramoulleetal2009} or \citet{Blumeetal2015} do
not apply when $W_{0}$ is unobserved, and other arguments have to be used
instead.\footnote{%
Alternative identification approaches when $W_{0}$ is known focus on higher
moments (variances and covariances across individuals) of outcomes (%
\citealp{DePaula2017}) and rely on additional restrictions on higher
moments of $\epsilon _{it}$. Note that (\ref{model motivation}) is a spatial
autoregressive model. In that literature, $W_{0}$ is also typically assumed to be 
known (\citealp{Anselin2010}).}

\citet{Blumeetal2015} investigate the case when $W_{0}$ is \emph{partially}
observed and show that if
two individuals are \emph{known} not to be directly connected, the
parameters of interest in a model related to (\ref{model motivation}) can be
identified. Blume \emph{et al.} (2011) take an alternative approach: suggesting a parameterization of $W_{0}$ according to a
pre-specified distance between nodes. We do not impose such restrictions,
but note that partial observability of $W_{0}$ or placing additional structure on $W_{0}$ is complementary to our approach, as it reduces the number of
parameters in $W_{0}$ to be retrieved. \citet{BonaldiHortacsuKastl2015} and %
\citet{Manresa2016} estimate models like (\ref{model motivation}) when $%
W_{0} $ is not observed, but where $\rho _{0}$ is set to zero so there are
no endogenous social effects. They use sparsity-inducing methods from the
statistics literature, but the presence of $\rho _{0}$ in our case
complicates identification because it introduces issues of simultaneity that
we address.\footnote{\citet{Manresa2016} allows for unit-specific $\beta
_{0} $ parameters. While in many applications those are taken to be
homogeneous, we also discuss extensions on how heterogeneity in those
parameters can be handled when $\rho _{0}\neq 0$ in Appendix B.}

\citet{Rose2015} also presents related identification results for linear
models like (\ref{model motivation}), assuming the sparsity of the neighborhood
structure. Intuitively,
given two observationally equivalent systems, sparsity guarantees the
existence of pairs that are not connected in either. Since observationally
equivalent systems are linked via the reduced-form coefficient matrix, this
pair allows one to identify certain parameters in the model. Having
identified those parameters, \citet{Rose2015} shows that one can proceed to
identify other aspects of the structure (see also \citealp{GautierRose2016}%
). This is related to the ideas in Blume \emph{et al.} (2015),
who show identification results can be leveraged if individuals are \emph{%
known} not to be connected. Our main identification results do $not$ rely on
properties of sparse networks, and make use of plausible and intuitive
conditions, whereas the auxiliary rank conditions necessary may be computationally complex to verify. 
\textcolor{black}{More recently,
\citet{Lewbeletal2019} propose an estimation strategy for the parameters
$\rho_0$, $\beta_0$, and $\gamma_0$ of model \eqref{model motivation} in the
absence of network links if many different groups can be observed.} 
\textcolor{black}{\citet{Battaglinietal2019} estimate a structural model
specifically for the case of unobserved social connections in the US
Congress.}

Finally, in the statistics literature, \citet{LamSouza2019} study the
penalized estimation of model \eqref{model motivation} when $W_{0}$ is not
observed, assuming the model and social interactions are identified. The
statistical literature on graphical models has investigated the estimation
of neighborhoods defined by the covariance structure of the random variables
at hand (\citealp{MeinshausenBuhlmann2006}). This corresponds to a model
where $y_{t}=(I-\rho _{0}W_{0})^{-1}\epsilon _{t}$ is jointly normal
(abstracting from covariates). On a graph with $N$ nodes corresponding to
the variables in the model, an edge between two nodes (variables) $i$ and $j$
is absent when these two variables are conditionally independent given the
other nodes. In the
model above, the inverse covariance matrix is $(I-\rho _{0}W_{0})^{\top
}\Sigma _{\epsilon }^{-1}(I-\rho _{0}W_{0})$, where $\Sigma _{\epsilon }$ is
the variance covariance structure for $\epsilon _{t}$. The discovery of zero
entries in this matrix is not equivalent to the identification of $W_{0}$
and involves $\Sigma _{\epsilon }$ (as do identification strategies using
higher moments when $W_{0}$ is known).\footnote{%
\citet{MeinshausenBuhlmann2006}'s and \citet{LamSouza2019}'s neighborhood
estimates rely on (penalized) regressions of $y_{it}$ on $y_{1t},\dots
,y_{i-1,t},y_{i+1,t},\dots ,y_{N,t}$, which do not address the endogeneity
in estimating $W_{0}$.} 

We build on these papers by studying the problem where $W_{0}$ is
potentially entirely unknown to the researcher. In so doing, we open up the
study of social interactions to realms where social network data does not
exist. In our case, we consider the definition of the network as the one
that mediates, together with the variables $x_{it}$, the outcome process $%
y_{it}$ according to Equation \eqref{model motivation}. The identified
network may be a combination of elicited types of social interactions --
such as friendship formation, lending and borrowing relations, links with
relatives -- or different from elicited data, as long as the links are
relevant in determining the outcomes. In our case, and in line with the
literature, the network ties $W_{ij}$ are considered to be deterministic
parameters or predetermined. Alternatively, the networks are assumed to be
the outcome of a stochastic process, such as the latent space model (%
\citealp{hoff2002latent}; \citealp{Brezaetal2017}) or Exponential Random
Graphs models (\citealp{HollandLeinhardt1981}).

Our conclusions discuss how our approach can be modified, and assumptions
weakened, to integrate partial knowledge of $W_{0}$. We discuss further
applications and the steps required to simultaneously identify models of
network formation and the structure of social interactions. The practical
use of our proposed method has already been demonstrated in applications.
For example, \citet{fetzeretal2020} study the impact on conflict of the
transition of security responsibilities between international and Afghan
forces. Our proposed method is used to control for violation of SUTVA-type
hypotheses that might occur because of spillover and displacement effects of
insurgent forces across districts. Since the pattern of displacement is
unobserved -- and, in fact, insurgents have incentives to obfuscate their
strategy -- the current method is applied to fully recover the network and
bound the effects of the end of the military occupation on conflict.%
\footnote{\citet{Zhou2019} applies our identification results, focusing on
unobserved networks with grouped heterogeneity, to suggest a nonlinear least
squares procedure for estimation on a single network observation.%
}

We proceed as follows. Section 2 presents our core result:\ the sufficient
conditions under which the social interactions matrix, endogenous and
exogenous social effects are globally identified. Section 3 describes the
high-dimensional techniques used for estimation based on the
Adaptive Elastic Net GMM method and presents simulation results from
stylized and real-world networks. Section 4 applies our methods to study tax
competition between US\ states. Section 5 concludes. The Appendix provides
proofs and further details on estimation and simulations.

\section{Identification\label{sec:Identification}}

\subsection{Setup\label{subsec:setup}}

Consider a researcher with panel data covering $i=1,\dots ,N$ individuals
repeatedly observed over $t=1,\dots ,T$ instances. The number of individuals 
$N$ in the network is fixed but potentially large. The aim is to use this
data to identify a social interactions model with no data on actual social
ties. For expositional ease, we first consider identification in a simpler
version of the canonical model in (\ref{model motivation}), where we drop
individual-specific ($\alpha _{i}$) and time-constant fixed effects ($\alpha
_{t}$) and assume $x_{it}$ is a one-dimensional regressor for individual $i$
and instance $t$. We later extend the analysis to include
individual-specific, time-constant fixed effects and allow for
multidimensional covariates $x_{k,it}$, $k=1,\dots ,K$. We adopt the
subscript \textquotedblleft 0\textquotedblright\ to denote parameters
generating the data, and non-subscripted parameters are generic values in
the parameter space:%
\begin{equation}
y_{it}=\rho _{0}\sum_{j=1}^{N}W_{0,ij}y_{jt}+\beta _{0}x_{it}+\gamma
_{0}\sum_{j=1}^{N}W_{0,ij}x_{jt}+\epsilon _{it}.  \label{eq:modelSF}
\end{equation}%
As the outcomes for all individuals $i=1,\dots ,N$ obey equations analogous to (%
\ref{eq:modelSF}), the system of equations can be more compactly written in
matrix notation as:%
\begin{equation}
y_{t}=\rho _{0}W_{0}y_{t}+\beta _{0}x_{t}+\gamma _{0}W_{0}x_{t}+\epsilon
_{t}.  \label{eq:Model}
\end{equation}%
The vector of outcomes $y_{t}=(y_{1t},\dots ,y_{Nt})^{\prime }$ assembles
the individual outcomes in instance $t$; the vector $x_{t}$ does the same
with individual characteristics. $y_{t}$, $x_{t}$, and $\epsilon _{t}$ have
dimension $N\times 1$, the social interactions matrix $W_{0}$ is $N\times N$%
, and $\rho _{0}$, $\beta _{0}$, and $\gamma _{0}$ are scalar parameters. We
do not make any distributional assumptions on $\epsilon _{t}$ beyond $%
\mathbb{E}(\epsilon _{t}|x_{t})=0$ (or $\mathbb{E}(\epsilon _{t}|z_{t})=0$
for an appropriate instrumental variable $z_{t}$ if $x_{t}$ is endogenous).
We assume the network structure is predetermined and constant, and that the
number of individuals $N$ is fixed and repeated. In reality, networks may
evolve over time. We thus later expand the method for dynamic network cases.
The network structure $W_{0}$ is a parameter to be identified and estimated.%

The social interaction model \eqref{eq:Model} has been widely studied (%
\citealp{Manski1993}; \citealp{Manresa2016}; and \citealp{Blumeetal2015}, among many
others), but it is also restrictive in at least two senses. First, we
consider $W_{ij}$ to be fixed and predetermined, and not through models
of strategic network formation (\citealp{jackson1996strategic}; %
\citealp{dePaulaRichardsTamer2018}) or of stochastic nature, as in the class
of Exponential Random Graphs (\citealp{HollandLeinhardt1981}) or Latent
Distance models (\citealp{hoff2002latent}; \citealp{Brezaetal2017}). If
there is feedback between outcome determination and link formation, and
especially if this involves unobservables, it would be important to model
network formation more explicitly.

A regression of outcomes on covariates corresponds, then, to the reduced
form for (\ref{eq:Model}),%
\begin{equation}
y_{t}=\Pi _{0}x_{t}+\nu _{t},  \label{eq2:modelRF}
\end{equation}%
with $\Pi _{0}=(I-\rho _{0}W_{0})^{-1}(\beta _{0}I+\gamma _{0}W_{0})$ and $%
\nu _{t}\equiv (I-\rho _{0}W_{0})^{-1}\epsilon _{t}$.   If $W_{0}$ is observed, \citet{Bramoulleetal2009} note that a structure $%
(\rho ,\beta ,\gamma )$ that is observationally equivalent to $(\rho
_{0},\beta _{0},\gamma _{0})$ is such that $(I-\rho _{0}W_{0})^{-1}(\beta
_{0}I+\gamma _{0}W_{0})=(I-\rho W_{0})^{-1}(\beta I+\gamma W_{0})$. This can
be written as a linear equation in $I,W_{0}$, and $W_{0}^{2}$, and
identification is established if those matrices are linearly independent. If 
$W_{0}$ is not observed, the putative unobserved structure comprises $W_{0}$,
and an observationally equivalent parameter vector will instead satisfy $%
(I-\rho _{0}W_{0})^{-1}(\beta _{0}I+\gamma _{0}W_{0})=(I-\rho W)^{-1}(\beta
I+\gamma W)$. Following the strategy in \citet{Bramoulleetal2009} would lead
to an equation in $I,W,W_{0}$, and $WW_{0}$, so the insights obtained in
that paper do \emph{not} carry over to the case we study when $W_{0}$
is unknown.

We establish identification of the structural parameters of the model,
including the social interactions matrix $W_{0}$, from the coefficients
matrix $\Pi _{0}$. Without data on the network $W_{0}$, we treat it as an
additional parameter in an otherwise standard model relating outcomes and
covariates. Our identification strategy relies on how changes in covariates $%
x_{it}$ reverberate through the system and impact $y_{it}$, as well as
outcomes for other individuals. These are summarized by the entries of the
coefficient matrix $\Pi _{0}$, which, in turn, encode information about $%
W_{0}$ and $(\rho _{0},\beta _{0},\gamma _{0})$. A non-zero partial effect of $%
x_{it}$ on $y_{jt}$ indicates the existence of direct \emph{or} indirect
links between $i$ and $j$. When $\rho _{0}=0$ (and $\Pi _{0}=\beta
_{0}I+\gamma _{0}W_{0}$), only direct links  produce such a
correlation. When $\rho \neq 0$, both direct and indirect connections may
generate a non-zero response, but distant connections will lead to a lower
response. Our results formally determine sufficient conditions to precisely
disentangle these forces.

We set out six assumptions underpinning our main identification results.
Three of these are entirely standard. A fourth is a normalization required
to separately identify ($\rho _{0}$, $\gamma _{0}$) from $W_{0}$, and the
fifth is closely related to known results on the identification of ($\rho
_{0}$, $\gamma _{0}$) when $W_{0}$ is known (\citealp{Bramoulleetal2009}).
The sixth assumption pertains to the relation between the nature of repeated
multiple observations of the outcome and covariates and restrictions on the
stability of $W$. These Assumptions (A1-A6) deliver an identified set of up
to two points.

Our first assumption explicitly states that no individuals affect
themselves and is a standard condition in social interaction models:

\begin{itemize}
\item[(A1)] $(W_{0})_{ii}=0$, $i=1,\dots ,N$.
\end{itemize}

Assumption (A1) rules out applications with self-influence. For example,
Input-Output matrices typically feature $(W_{0})_{ii}>0$, as firms tend to
source from other firms in the same industry. With Assumption (A1), we can
omit elements on the diagonal of $W_{0}$ from the parameter space. We thus
can denote a generic parameter vector as $\theta =\left( W_{12},\dots
,W_{N,N-1},\rho ,\gamma ,\beta \right) ^{\prime }\in \mathbb{R}^{m}$, where $%
m=N\left( N-1\right) +3$, and $W_{ij}$ is the $(i,j)$-th element of $W$.
Reduced-form parameters can be tied back to the structural model (\ref%
{eq:Model}) by letting $\Pi :\mathbb{R}^{m}\rightarrow \mathbb{R}^{N^{2}}$
define the relation between structural and reduced-form parameters:%
\begin{equation*}
\Pi (\theta )=\left( I-\rho W\right) ^{-1}\left( \beta I+\gamma W\right) ,
\end{equation*}%
where $\theta \in \mathbb{R}^{m}$, and $\Pi _{0}\equiv \Pi (\theta _{0})$.

As $\epsilon _{t}$ (and, consequently, $\nu _{t}$) is mean-independent from $%
x_{t}$, $\mathbb{E}[\epsilon _{t}|x_{t}]=0$, the matrix $\Pi _{0}$ can be
identified as the linear projection of $y_{t}$ on $x_{t}$. We do not impose
additional distributional assumptions on the disturbance term, except for
conditions that allow us to identify the reduced-form parameters in %
\eqref{eq2:modelRF}. If $x_{t}$ is endogenous, i.e., $\mathbb{E}[\epsilon
_{t}|x_{t}]\neq 0$, a vector of instrumental variables $z_{t}$ may still be
used to identify $\Pi _{0}$. In either case, identification of $\Pi _{0}$
requires variation of the regressor across individuals $i$ and through
instances $t$. In other words, either $\mathbb{E}[x_{t}x_{t}^{\prime }]$ (if
exogeneity holds) or $\mathbb{E}[x_{t}z_{t}^{\prime }]$ (otherwise) is
full-rank.

Our next assumption controls the propagation of shocks and guarantees that they
die as they reverberate through the network. This provides adequate
stability and is related to the concept of stationarity in network models.
It implies the maximum eigenvalue norm of $\rho _{0}W_{0}$ is less than one
and ensures $(I-\rho _{0}W_{0})$ is a non-singular matrix. As the variance
of $y_{t}$ exists, the transformation $\Pi (\theta _{0})$ is well-defined,
and the Neumann expansion $(I-\rho _{0}W_{0})^{-1}=\sum_{j=0}^{\infty }(\rho
_{0}W_{0})^{j}$ is appropriate.

\begin{itemize}
\item[(A2)] \textcolor{black}{$\sum_{j=1}^N|\rho_0 (W_0)_{ij}| < 1$} for
every $i=1,\dots ,N$, 
\textcolor{black}{$\|W_0\| < C$ for some positive $C
\in \mathbb{R}$} and $|\rho _{0}|<1$.
\end{itemize}

\noindent We next assume that network effects do not cancel out, another
standard assumption. As we will show, this assumption rules out the
pathological case in which endogenous and exogenous effects exactly cancel
each other out:

\begin{itemize}
\item[(A3)] $\beta _{0}\rho _{0}+\gamma _{0}\neq 0$.
\end{itemize}

\noindent The need for this assumption can be shown by expanding the
expression for $\Pi (\theta _{0})$, which is possible by (A2):%
\begin{equation}
\Pi (\theta _{0})=\beta _{0}I+(\rho _{0}\beta _{0}+\gamma
_{0})\sum_{k=1}^{\infty }\rho _{0}^{k-1}W_{0}^{k}.  \label{eq:RFexpanded}
\end{equation}%
If Assumption (A3) were violated, $\beta _{0}\rho _{0}+\gamma _{0}=0$ and $%
\Pi _{0}=\beta _{0}I$, so the endogenous and exogenous effects would balance each
other out, and network effects would be altogether eliminated in the reduced form.%
\footnote{%
\textcolor{black}{One important case is when networks do not
determine outcomes, which we interpret as $\rho_0=\gamma_0=0$ or with $W_0$
representing the empty network. From equation \eqref{eq:RFexpanded}, it is
clear that if $\Pi(\theta_0)$ is \emph{not} diagonal with constant entries,
then it must be that $(\rho_0\beta_0+\gamma_0)\neq0$, which implies that
$\rho_0\neq0$ or $\gamma_0\neq0$, and also that $W_0$ is non-empty. Taken
together, this suggests that the observation that $\Pi(\theta_0)$ is not
diagonal is sufficient to ensure that network effects are present and
Assumption (A3) is not violated.}}

Identification of the social effects parameters $(\rho _{0},\gamma _{0})$
requires that at least one row of $W_{0}$ adds to a fixed and known number.
Otherwise, $\rho _{0}$ and $\gamma _{0}$ cannot be separately identified
from $W_{0}$. Clearly, no such condition would be required if $W_{0}$ were
observed.

\begin{itemize}
\item[(A4)] There is an $i$ such that $\sum_{j=1,\dots ,N}(W_{0})_{ij}=1$.
\end{itemize}

Letting $W_{y}\equiv \rho _{0}W_{0}$ and $W_{x}\equiv \gamma _{0}W_{0}$
denote the matrices that summarize the influence of peers' outcomes (the
endogenous social effects) and characteristics on one's outcome (the
exogenous social effects), respectively, the assumption above can be seen as
a normalization. In this case, $\rho _{0}$ and $\gamma _{0}$ represent the
row-sum for individual $i$ in $W_{y}$ and $W_{x}$, respectively.\footnote{%
\textcolor{black}{Alternatively, one could normalize $\rho^*=1$
and rescale the network accordingly. In this case,
$W^*=\rho_0W_0$ would be identified instead. Also,
$W_x=\frac{\gamma_0}{\rho_0}W^*$ so $\gamma_0$ would be identified relative
to $\rho_0$. $W_y$ and $W_x$ would be unchanged.}}

The fifth assumption allows for a specific kind of network asymmetry. We
require the diagonal of $W_{0}^{2}$ not to be constant as one of our
sufficient conditions for identification.

\begin{itemize}
\item[(A5)] There exists $l,$ $k$ such that $(W_{0}^{2})_{ll}\neq
(W_{0}^{2})_{kk}$, i.e., the diagonal of $W_{0}^{2}$ is not proportional to $%
\iota $, where $\iota$ is the $N\times1$ vector of ones.
\end{itemize}

In unweighted networks, the diagonal of the square of the social
interactions matrix captures the number of reciprocated links for each
individual or, in the case of undirected networks, the popularity of those
individuals. Assumption (A5) hence intuitively suggests differential
popularity across individuals in the social network.

This assumption is related to the network asymmetry condition proposed
elsewhere, such as in \citet{Bramoulleetal2009}. They show that when $W_{0}$
is known, the structural model (\ref{eq:modelSF}) is identified if $I$, $%
W_{0}$, and $W_{0}^{2}$ are linearly independent. Given the remaining
assumptions, this condition is satisfied if (A5) is satisfied, but the
converse is \emph{not} true: one can construct examples in which $I$, $W_{0}$%
, and $W_{0}^{2}$ are linearly independent when $W_{0}^{2}$ has a constant
diagonal, so $\Pi _{0}$ does not pin down $\theta _{0}$. See Example 1
in Appendix A. The strengthening of this hypothesis is the formal price to
pay for the social interactions matrix $W_{0}$ being unknown to the
researcher. 

Before proceeding to our formal results, we provide a very simple
illustration to shed light on how the assumptions above come together to
provide identification. Suppose the observed reduced-form matrix is,%
\begin{equation*}
\Pi _{0}=\frac{1}{455}\left[ 
\begin{array}{ccc}
275 & 310 & 0 \\ 
310 & 275 & 0 \\ 
0 & 0 & 182%
\end{array}%
\right] ,
\end{equation*}%
and that, following (A4), the first row is normalized to one. From the third
row and column of $\Pi _{0}$, we see there is no path of any length
connecting the individual in row 3 to or from those in rows 1 or 2, since her
outcome is not affected by their covariates and their outcomes are not
affected by her covariates. In other words, individual 3, is isolated and $%
(W_{0})_{13}=(W_{0})_{23}=(W_{0})_{31}=(W_{0})_{32}=0$. On the other hand,
individuals 1 and 2 cannot be isolated, as their covariates are correlated
with the other individual's outcome, reflecting (A5).\footnote{%
If on the other hand, $(W_{0})_{ij}=0.5,i\neq j$ in violation of (A5), and
all agents were connected, the model would not be identified.} Due to the
row-sum normalization of the first row, $(W_{0})_{12}=1$. Using (A3), it can
be seen that $W_{0}$ is symmetric if $\Pi _{0}$ is symmetric. We thus find
that $(W_{0})_{21}=1$. This and (A1) map all elements of $W_{0}$, and thus,%
\begin{equation*}
W_{0}=\left[ 
\begin{array}{ccc}
0 & 1 & 0 \\ 
1 & 0 & 0 \\ 
0 & 0 & 0%
\end{array}%
\right] .
\end{equation*}
As the third individual is isolated, she will only be affected by her
exogenous $x_{i}$ and not by endogenous or exogenous peer effects. Hence, the 
$(3,3)$ element of $\Pi _{0}$ is equal to $\beta _{0}=\frac{182}{455}=.4$. \
To find $\rho _{0}$, note that $(I-\rho _{0}W_{0})\Pi _{0}=\beta
_{0}I+\gamma _{0}W_{0}$. Hence, focusing on the (1,1) elements of the
matrices above, we find that $\frac{275}{455}-\rho _{0}\frac{310}{455}=.4$,
implying $\rho _{0}=.3$ (complying with (A2)). Finally, $\gamma _{0}$ is
identified from entry $(1,2)$, giving $\gamma _{0}=\frac{310}{455}-.3\frac{%
275}{455}=.5$.

Our final assumption articulates the need for a constant network $W_0$
observed over multiple instances of $y_t$ and $x_t$:

\begin{itemize}
\item[(A6)] $y_t$ and $x_t$ are observed for individuals $i=1,\dots,N$, and
instances $t=1,\dots,T$, and the network $W_0$ does not depend on $t$
\end{itemize}

Here, ``instances'' can refer to time but also to settings in which the same
units are observed over multiple episodes. For example, if $i$ are firms,
then $t$ can be segmented markets in which they operate. For simplicity, we
refer to an instance as a time period. If $\Pi_0$ is known, the main
identification result we articulate below will state that $W_0$, $\rho_0$, $%
\beta_0$ and $\gamma_0$ are globally identified. However, in practice, $\Pi_0$
is rarely observed and thus all quantities need to be estimated. For this
purpose, when $\Pi_0$ is not known, multiple observations of $y_t$ and $x_t$
with a constant $W_0$ are required to implement the estimator. We expand on
estimation requirements in Section \ref{sec:estimation}.

Importantly, the main identification results (for a given $\Pi_0$) could, in
principle, be applied for each time period $t$. That is, one can write a
version of Equation \eqref{eq:Model} as 
\begin{eqnarray*}
y_t&=&\rho_0W_{0t}y_t+\beta_0x_t+\gamma_0W_{0t}x_t+\epsilon_t,
\end{eqnarray*}
where $W_{0t}$ is time-varying and, consequently, the reduced-form
interaction matrix $\Pi_{0t}=(I-\rho_0 W_{0t})^{-1}(I\beta_0+W_{0t}\gamma_0)$
is also time-varying. If the reduced-form matrices were known, the
identification results we develop below could be applied to the reduced-form
element by element for each $\Pi_{0t}$. Again, one rarely observes $\Pi_{0t}$%
, for all $t\in[1,T]$. This observation will motivate an extension of the
method, presented in Section \ref{subsec:timevaryingW}, where Assumption
(A6) is relaxed and $W_{0}$ is allowed to vary with $t$.

In an extension, we allow the network $W_{0t}$ to vary over time and
introduce kernel weights. Akin to the nonparametric regression $%
Y=f(X)+\epsilon $, $f$ is identified if $\mathbb{E}(\epsilon |X)=0$, and it
is possible to estimate $f$ using neighboring observations if $f$ is
sufficiently smooth or varies slowly. Similar considerations extend to
varying-coefficient models and, in particular, time-varying coefficient
models where local stability conditions as those discussed in %
\citet{Dahlhaus2012} are usually invoked (see also %
\citet{HastieTibshirani1993} and their Example (e)).

\subsection{Main Identification Results\label{subsec:mainresults}}

Under the assumptions above, we can begin to identify parameters related to
the network. These results are then useful for our main identification
theorems. Let $\lambda _{0j}$ denote an eigenvalue of $W_{0}$ with
corresponding eigenvector $v_{0,j}$ for $j=1,\dots ,N$. Assumptions (A2) and
(A3) allow us to identify the eigenvectors of $W_{0}$ directly from the
reduced form. As $|\rho _{0}|<1$:%
\begin{eqnarray}
\Pi _{0}v_{0,j} &=&\beta _{0}v_{0,j}+(\rho _{0}\beta _{0}+\gamma
_{0})\sum_{k=1}^{\infty }\rho _{0}^{k-1}W_{0}^{k}v_{0,j}  \notag \\
&=&\left[ \beta _{0}+(\rho _{0}\beta _{0}+\gamma _{0})\sum_{k=1}^{\infty
}\rho _{0}^{k-1}\lambda _{0,j}^{k}\right] v_{0,j}  \notag \\
&=&\frac{\beta _{0}+\gamma _{0}\lambda _{0,j}}{1-\rho _{0}\lambda _{0,j}}%
v_{0,j}.  \label{eq:eigenvaluePiW}
\end{eqnarray}%
The infinite sum converges as $|\rho _{0}\lambda _{0,j}|<1$ by (A2). The
equation above implies that $v_{0,j}$ is also an eigenvector of $\Pi _{0}$
with the associated eigenvalue $\lambda _{\Pi ,j}=\frac{\beta _{0}+\gamma
_{0}\lambda _{0,j}}{1-\rho _{0}\lambda _{0,j}}$. The fact that eigenvectors
of $W_{0}$ are also eigenvectors of $\Pi _{0}$ has a useful implication:
eigencentralities may be identified from the reduced form, even when $W_{0}$
is not identified. As detailed in \citet{DePaula2017} and %
\citet{Jacksonetal2017}, such eigencentralities often play an important role
in empirical work as they allow a mapping back to underlying models of
social interaction.\footnote{%
To identify the eigencentralities, we identify the eigenvector that
corresponds to the dominant eigenvalue. If $W_{0}$ is non-negative and
irreducible, this is the (unique) eigenvector with strictly positive
entries, by the Perron-Frobenius theorem for non-negative matrices (see %
\citealp{HornJohnson2013}, p. 534).}

Now let $\Theta \equiv \{\theta \in \mathbb{R}^{m}:$ A$\text{ssumptions
(A1)-(A6) are satisfied}\}$ be the structural parameter space of interest.
Our identification argument is structured as follows: a) we first establish
local identification of the mapping $\Pi (\theta )$ using classical results
on the rank of the gradient of \citealp{Rothenberg1971} (Theorem 1); b) we
then show that $\Pi (\theta )$ is proper (Corollary 1); and c) has a
connected image (Lemma 2, in the Appendix); d) allowing us to state the
cardinality of the pre-image $\Pi ^{-1}(\bar{\Pi})$ is constant for any $%
\bar{\Pi}$ in the image of $\Pi (\cdot )$, and that the cardinality is at
most 2 (Theorem 2). We then provide additional conditions to narrow the
identified set to a singleton (Corollaries 2-4).

We now formally present our results. Our first theorem establishes local
identification of the mapping. A parameter point $\theta _{0}$ is locally
identifiable if there exists a neighborhood of $\theta _{0}$ containing no
other $\theta $ which is observationally equivalent. Using classical results
in \citet{Rothenberg1971}, we show that our assumptions are sufficient to
ensure that the Jacobian of $\Pi $ relative to $\theta $ is non-singular,
which, in turn, suffices to establish local identification.

\begin{thm}
\label{thm:localidentification} Assume (A1)-(A6). $\theta_0\in\Theta$ is
locally identified.
\end{thm}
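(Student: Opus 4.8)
The plan is to establish local identification through the classical sufficient condition of \citet{Rothenberg1971}: it is enough to show that the Jacobian of $\theta\mapsto\Pi(\theta)$ at $\theta_0$ has trivial kernel along the tangent space of $\Theta$. Since (A2), (A3) and (A5) are open conditions and (A1) merely fixes the diagonal (already omitted from $\theta$), the only equality cutting down the admissible perturbations is the normalization (A4). I would therefore aim to prove that the only perturbation $c=(C,c_\rho,c_\gamma,c_\beta)$ --- with $C$ collecting the off-diagonal directions $c_{ij}$ and subject to the first-order constraint $\sum_j C_{i^*j}=0$ inherited from (A4) --- that annihilates the Jacobian is $c=0$.

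First I would compute the directional derivative of $\Pi=(I-\rho W)^{-1}(\beta I+\gamma W)$. Writing $A=I-\rho W$, $B=\beta I+\gamma W$, and letting $E_{ij}$ be the matrix with a one in entry $(i,j)$ and zeros elsewhere, the blocks are $\partial\Pi/\partial\beta=A^{-1}$, $\partial\Pi/\partial\gamma=A^{-1}W$, $\partial\Pi/\partial\rho=A^{-1}W\Pi$, and $\partial\Pi/\partial W_{ij}=A^{-1}E_{ij}(\rho\Pi+\gamma I)$. Setting the corresponding linear combination to zero and left-multiplying by $A$ collapses the kernel condition to the matrix equation $C(\rho\Pi+\gamma I)+c_\rho W\Pi+c_\gamma W+c_\beta I=0$.

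The key simplification --- which I expect to be the crux --- is the identity $\rho\Pi+\gamma I=\delta A^{-1}$ with $\delta=\rho\beta+\gamma\neq0$ by (A3), together with the fact that $W$, $A$, $B$, $\Pi$ all commute (being rational functions of $W$). Substituting this identity, right-multiplying by $A$, and using $\Pi A=B$, the condition becomes the polynomial identity
\[
\delta C+(c_\rho\beta+c_\gamma-c_\beta\rho)W+(c_\rho\gamma-c_\gamma\rho)W^2+c_\beta I=0.
\]
Reading off the diagonal, where $C$ and $W$ vanish by (A1), leaves $(c_\rho\gamma-c_\gamma\rho)\,\text{diag}(W^2)+c_\beta\iota=0$; since (A5) makes $\text{diag}(W^2)$ and $\iota$ linearly independent, this forces $c_\beta=0$ and $c_\rho\gamma-c_\gamma\rho=0$, which in turn deletes the $W^2$ and $I$ terms entirely and reduces the identity to $\delta C=-(c_\rho\beta+c_\gamma)W$, i.e. $C=\kappa W$ for a scalar $\kappa$.

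It then remains to kill this residual scaling direction, the step where (A4) does its work: the tangent constraint $\sum_j C_{i^*j}=0$ together with $\sum_j W_{i^*j}=1$ gives $\kappa=0$, so $C=0$ and $c_\rho\beta+c_\gamma=0$. Combining $c_\gamma=-c_\rho\beta$ with $c_\rho\gamma=c_\gamma\rho$ yields $c_\rho\delta=0$, hence $c_\rho=0$ and then $c_\gamma=0$. Every coefficient vanishes, so the Jacobian has trivial kernel on the tangent space and Rothenberg's criterion delivers local identification. The main obstacle is purely algebraic: spotting the identity $\rho\Pi+\gamma I=\delta A^{-1}$ that linearizes the inverse, and then correctly separating the diagonal relation (which isolates (A5)) from the off-diagonal one (where the (A4) normalization removes the scaling non-identification $C=\kappa W$).
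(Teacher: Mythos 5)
Your proposal is correct and follows essentially the same route as the paper's proof: both invoke Rothenberg's rank criterion with the (A4) normalization entering as a constraint, reduce the kernel condition to the identical polynomial identity $(\rho\beta+\gamma)C+c_\beta I+(\beta c_\rho-c_\beta\rho+c_\gamma)W+(c_\rho\gamma-\rho c_\gamma)W^2=0$, and then eliminate the coefficients in the same order via the diagonal (A5), the normalized row sum (A4), and finally (A3). The only difference is cosmetic: you reach that identity through the shortcut $\rho\Pi+\gamma I=(\rho\beta+\gamma)(I-\rho W)^{-1}$, whereas the paper pre- and post-multiplies by $(I-\rho W)$ and invokes commutativity.
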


An immediate consequence of local identification is that the set $\{\theta
\in \Theta :\Pi (\theta )=\Pi (\theta _{0})\}$ is discrete (i.e., its
elements are isolated points). The following corollary establishes that $\Pi 
$ is a proper function, i.e. the inverse image $\Pi ^{-1}(K)$ of any compact
set $K\subset \mathbb{R}^{N^{2}}$ is also compact (\citealp{KrantzParks2013}%
, p.\ 124). Since it is discrete, the identified set must be finite.

\begin{cor}
\label{thm:finiteidentification} Assume (A1)-(A6). Then $\Pi(\cdot)$ is a
proper mapping. Moreover, the set $\{\theta:\Pi(\theta) = \Pi(\theta_0)\}$
has a finite number of elements.
\end{cor}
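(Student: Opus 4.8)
The plan is to obtain finiteness as the conjunction of two properties of the fiber $\Pi^{-1}(\{\Pi(\theta_0)\})=\{\theta\in\Theta:\Pi(\theta)=\Pi(\theta_0)\}$: that it is \emph{discrete} and that it is \emph{compact}; a discrete compact set is finite. Discreteness is already in hand, since the Jacobian computation behind Theorem~\ref{thm:localidentification} is not special to $\theta_0$ and yields local identification---hence isolation---at \emph{every} point of $\Theta$. The corollary therefore reduces to establishing that $\Pi$ is proper, because then the fiber, being the preimage $\Pi^{-1}(K)$ of the compact singleton $K=\{\Pi(\theta_0)\}$, is compact. First I would record the defining polynomial identity $(I-\rho W)\,\Pi(\theta)=\beta I+\gamma W$, obtained by left-multiplying $\Pi(\theta)=(I-\rho W)^{-1}(\beta I+\gamma W)$ by $(I-\rho W)$; this is the workhorse for all the bounds below.

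Next I would prove that $\Pi^{-1}(K)$ is bounded for any compact $K$. On $\Theta$, Assumption (A2) already pins down the $W$ and $\rho$ coordinates, giving $\|W\|<C$ and $|\rho|<1$, so only $\beta$ and $\gamma$ can a priori escape to infinity. Reading the identity $(I-\rho W)\Pi(\theta)=\beta I+\gamma W$ on the diagonal and using $(W)_{ii}=0$ from (A1) yields $\beta=\big((I-\rho W)\Pi(\theta)\big)_{ii}$, which is bounded once $\Pi(\theta)\in K$ and $W,\rho$ are bounded. For $\gamma$, I would use (A4): the row $i^\star$ whose entries sum to one must contain an off-diagonal entry with $|(W)_{i^\star j^\star}|\ge 1/(N-1)$, and the corresponding off-diagonal entry of the identity gives $\gamma\,(W)_{i^\star j^\star}=(\Pi(\theta))_{i^\star j^\star}-\rho\,(W\Pi(\theta))_{i^\star j^\star}$, whose right-hand side is bounded; dividing by the entry, which is bounded below in modulus, bounds $\gamma$. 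Since $i^\star,j^\star$ range over a finite set, one passes to a subsequence along which they are fixed. Hence $\Pi^{-1}(K)$ is bounded.

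The remaining---and main---difficulty is closedness: I must rule out a sequence $\theta_n\in\Pi^{-1}(K)$ converging to a point $\theta^\star$ either on the singular locus $\det(I-\rho W)=0$, where $\Pi$ is undefined, or on $\partial\Theta$. The singular locus is excluded by the \emph{strict} contraction in (A2): since $\max_i\sum_j|\rho (W)_{ij}|<1$, one has the uniform bound $\|(I-\rho W)^{-1}\|_\infty\le\big(1-\max_i\sum_j|\rho (W)_{ij}|\big)^{-1}$ in the maximum absolute row-sum norm, so along $\Pi^{-1}(K)$ the inverse cannot blow up and no limit point can be singular. Controlling accumulation at $\partial\Theta$ (the boundaries of the strict inequalities in (A2), (A3), (A5)) is the delicate point, and I expect it to be the crux of the whole argument; the cleanest way to finish---and to sidestep the boundary bookkeeping entirely---is to observe that the fiber equals $V\cap\Theta$, where $V$ is the real-algebraic zero set of the polynomial map $\theta\mapsto(I-\rho W)\Pi(\theta_0)-\beta I-\gamma W$ and $\Theta$ is semialgebraic (its defining conditions, including $\|W\|<C$, are all expressible by polynomial (in)equalities). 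A classical theorem on semialgebraic sets guarantees finitely many connected components, and since the fiber is discrete each component is a single point; finiteness follows. Combining properness, for the stated first claim, with discreteness then gives the compact, discrete, hence finite identified set.
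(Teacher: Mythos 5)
Your skeleton matches the paper's: discreteness of the fiber from the rank condition in Theorem~\ref{thm:localidentification} holding at \emph{every} point of $\Theta$ (you are right, and more explicit than the paper, that this is needed beyond $\theta_0$ itself), plus compactness of the fiber, gives finiteness. Your boundedness argument, however, is genuinely different from the paper's and arguably cleaner: you read $\beta$ and $\gamma$ off the entries of the polynomial identity $(I-\rho W)\Pi(\theta)=\beta I+\gamma W$ --- the diagonal pins down $\beta$ (using A1), and the unit-sum row of A4 supplies an off-diagonal entry $|W_{i^\star j^\star}|\ge 1/(N-1)$ that pins down $\gamma$ --- whereas the paper bounds $\|\beta I+\gamma W\|$ from below via a Frobenius-norm inequality and a Lagrangian computation showing $\|W\|^2\ge 1/(N-1)$ under A4. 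Both routes hinge on A4 in the same way. Your semialgebraic endgame (the fiber is the intersection of the algebraic set $\{(I-\rho W)\Pi(\theta_0)=\beta I+\gamma W\}$ with the semialgebraic set $\Theta$, hence has finitely many connected components, hence, being discrete, is finite) is a valid and genuinely different proof of the \emph{finiteness} claim; notably, it sidesteps closedness entirely, which the paper handles with a one-line appeal to continuity ("the pre-image of a compact set, which is closed, is also closed") that in fact glosses over precisely the issue you flag: $\Theta$ is open, so a preimage closed \emph{in} $\Theta$ need not be closed in $\mathbb{R}^m$.

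The genuine gap is that you do not prove the first claim of the corollary: that $\Pi(\cdot)$ is \emph{proper}. Your treatment of closedness has two pieces, and neither closes it. For the singular locus, the bound $\|(I-\rho W)^{-1}\|_\infty\le\bigl(1-\max_i\sum_j|\rho W_{ij}|\bigr)^{-1}$ is \emph{not} uniform along a sequence in $\Pi^{-1}(K)$: the right-hand side blows up as the row sums approach one, which is exactly the boundary behavior you would need to exclude, so "the inverse cannot blow up" does not follow. For accumulation at $\partial\Theta$ (failure in the limit of the strict inequalities in A2, A3, or A5), you explicitly defer the problem and then resolve only finiteness via the semialgebraic argument --- which says nothing about preimages of general compact sets $K$. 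Your closing sentence then invokes "properness, for the stated first claim" as if it had been established, but it has not. This matters beyond bookkeeping: properness (not finiteness of one fiber) is the hypothesis that the paper feeds into the Ambrosetti--Prodi covering result in the proof of Theorem~\ref{thm:globalidentification}, so the unproven half is the half that carries the downstream weight. To repair the proposal you would need either to prove that sequences in $\Pi^{-1}(K)$ cannot accumulate on $\partial\Theta$, or to recast properness relative to a codomain/domain for which that obstruction disappears; as your own discussion suggests, this is the crux, and it is the one point where neither your argument nor the paper's one-line claim is complete.
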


\noindent Under additional assumptions, the identified set is at most a
singleton in each of the partitioning sets $\Theta _{-}\equiv \Theta \cap
\{\rho \beta +\gamma <0\}$ and $\Theta _{+}\equiv \Theta \cap \{\rho \beta
+\gamma >0\}$.\footnote{%
The global inversion results we use are related to, but different from,
variations on a classic inversion result of Hadamard that has been used in
the literature. In contrast, we employ results on the cardinality of the
pre-image of a function, relying on less stringent assumptions. While the
Hadamard result requires the image of the function to be simply-connected
(Theorem 6.2.8 of \citealp{KrantzParks2013}), the results we rely on do not.}

Since $\Theta =\Theta _{-}\cup \Theta _{+}$, if the sign of $\rho _{0}\beta
_{0}+\gamma _{0}$ is unknown, the identified set contains, at most, two
elements. In the theorem that follows, we show global identification only
for $\theta \in \Theta _{+}$, since arguments are mirrored for $\theta \in
\Theta _{-}$.

\begin{thm}
\label{thm:globalidentification} Assume (A1)-(A6). Then for every $%
\theta\in\Theta_{+}$, we have $\Pi(\theta)=\Pi(\theta_0)\Rightarrow\theta=%
\theta_0$. That is, $\theta_0$ is globally identified with respect to the
set $\Theta_{+}$.
\end{thm}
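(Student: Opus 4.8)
The plan is to promote the local identification of Theorem~\ref{thm:localidentification} to a global statement on $\Theta_{+}$ by using properness to control the global structure of the fibers of $\Pi$. By Theorem~\ref{thm:localidentification} the Jacobian of $\Pi$ has full column rank at every point of $\Theta$, so $\Pi$ is locally injective (a local homeomorphism onto its image) and its fibers are discrete; by Corollary~\ref{thm:finiteidentification} the map is proper, so each fiber $\{\theta\in\Theta_{+}:\Pi(\theta)=p\}$ is compact and hence finite. The quantity $n(p)\equiv\#\{\theta\in\Theta_{+}:\Pi(\theta)=p\}$ is thus well defined and finite on $\Pi(\Theta_{+})$, and the theorem reduces to showing $n\equiv1$.

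First I would show that $n(\cdot)$ is constant on $\Pi(\Theta_{+})$. Viewing $\Pi$ as a proper local homeomorphism onto its image with the subspace topology, the cardinality-of-preimage results for proper maps give that $n$ is locally constant; this is the step for which, unlike the Hadamard-type global inversion theorems, simple-connectedness of the image is not required. Because $\Theta_{+}$ is cut out of the connected structural set $\Theta$ by the open sign condition $\rho\beta+\gamma>0$, one checks it is connected, so its continuous image $\Pi(\Theta_{+})$ is connected; a locally constant integer-valued function on a connected set is constant. It then suffices to evaluate $n$ at one convenient point of the image.

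To analyze the fiber and exhibit a point with $n=1$, I would exploit the spectral structure in \eqref{eq:eigenvaluePiW}. The eigenvectors of $W_{0}$ are eigenvectors of $\Pi_{0}$ by \eqref{eq:eigenvaluePiW}, and the same identity applied to a competing $\theta=(W,\rho,\gamma,\beta)\in\Theta_{+}$ with $\Pi(\theta)=\Pi(\theta_{0})$ shows they are also eigenvectors of $W$; hence, at least when $\Pi_{0}$ has distinct eigenvalues, $W$ and $W_{0}$ are diagonalized in a common basis. Writing $\mu_{j}$ and $\lambda_{0,j}$ for the eigenvalues of $W$ and $W_{0}$, the shared eigenvalues of $\Pi_{0}$ satisfy the M\"obius relation $\lambda_{\Pi,j}=\frac{\beta+\gamma\mu_{j}}{1-\rho\mu_{j}}=\frac{\beta_{0}+\gamma_{0}\lambda_{0,j}}{1-\rho_{0}\lambda_{0,j}}$. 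Inverting it writes each $\mu_{j}$, and therefore $W$ itself, as a function of the three scalars $(\rho,\gamma,\beta)$ alone, so the entire fiber is carried by a three-parameter family. Imposing the structural constraints then pins the scalars: applying the zero-diagonal condition (A1) to both $W$ and $W_{0}$ and subtracting yields $N$ linear relations in the differences $\mu_{j}-\lambda_{0,j}$, the normalization (A4) fixes the overall scale, and Assumption (A5)---non-constant $\mathrm{diag}(W_{0}^{2})$---is exactly what excludes the degenerate configurations, illustrated by the five-node example in the text, in which a distinct (often ``complementary'') network reproduces the same reduced form.

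The main obstacle is this last step: showing that, once the sign of $\rho\beta+\gamma$ is fixed, these constraints force $\theta=\theta_{0}$. The ambiguity surviving (A1)--(A5) is precisely two-fold, realized by a reflection that flips the sign of $\rho\beta+\gamma$ and carries a point of $\Theta_{+}$ to a companion point of $\Theta_{-}$; this is visible in the worked example, where the two observationally equivalent structures have $\rho_{0}\beta_{0}+\gamma_{0}>0$ and $\rho\beta+\gamma<0$. Restricting to $\Theta_{+}$ deletes this companion, leaving a single preimage, so $n\equiv1$ on $\Pi(\Theta_{+})$ and $\Pi(\theta)=\Pi(\theta_{0})$ forces $\theta=\theta_{0}$. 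I expect the delicate points to be justifying the constancy of $n$ when $\dim\Theta<N^{2}$, so that $\Pi$ must be treated as a map onto its image rather than between equidimensional spaces, and verifying via (A5) that the linear system in $\mu_{j}-\lambda_{0,j}$ admits only the trivial solution on the correct sign branch.
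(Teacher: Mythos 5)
Your high-level architecture is exactly the paper's: $\Pi$ is continuous, proper (Corollary \ref{thm:finiteidentification}), and locally invertible (Theorem \ref{thm:localidentification}), so by a cardinality-of-preimage result for such maps (the paper invokes Corollary 1.4 of Ambrosetti and Prodi, 1995) the fiber cardinality $n(\cdot)$ is constant on a connected image, and it remains to (i) prove the image is connected and (ii) exhibit one point of $\Pi(\Theta_{+})$ whose fiber is a singleton. But both (i) and (ii) contain genuine gaps as you present them. For (i), your justification --- that $\Theta_{+}$ is ``cut out of the connected structural set $\Theta$ by the open sign condition $\rho\beta+\gamma>0$'' --- does not work: $\Theta$ is itself disconnected, since (A3) removes the set $\{\rho\beta+\gamma=0\}$ and splits $\Theta$ into $\Theta_{-}\cup\Theta_{+}$, and in any case an open subset of a connected set need not be connected. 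The paper's Lemma \ref{lem:pathconnected} does real work here: it path-connects any two admissible parameter points by routing through explicit anchor networks $W_{*}$ and $W_{**}$, precisely because a direct convex combination of two admissible matrices can violate (A5) (the diagonal of $W(t)^{2}$ can become constant along the segment), while simultaneously choosing the path in $(\rho,\beta,\gamma)$ so that $\rho(t)\beta(t)+\gamma(t)$ stays strictly positive. Without some such construction, constancy of $n$ gives you nothing.

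For (ii), your spectral sketch does not close. The common-eigenbasis/M\"obius-relation argument requires $\Pi_{0}$ to have distinct eigenvalues (otherwise eigenspaces of dimension greater than one leave the eigenvectors, hence $W$, underdetermined), and even granting that, your concluding claim --- that the ambiguity surviving (A1)--(A5) ``is precisely two-fold, realized by a reflection that flips the sign of $\rho\beta+\gamma$'' --- is exactly what needs to be proven (it is the content of this theorem together with Corollary \ref{cor:globalidentification1}), not something you may invoke; you yourself flag this step as the main obstacle. The paper resolves it with Lemma \ref{lem:gamma0globallyidentified}: at the concrete point $\gamma=0$, $(W)_{12}=(W)_{21}=1$, all other entries zero, $\rho\neq0$, $\beta\neq0$, an entry-by-entry manipulation of $(I-\rho_{0}W_{0})^{-1}(\beta_{0}I+\gamma_{0}W_{0})=(I-\rho W)^{-1}(\beta I+\gamma W)$ shows the fiber is a singleton in all of $\Theta$. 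Note that this anchor network has eigenvalues $\pm1$ and $0$ (with multiplicity $N-2$), so its $\Pi$ has highly repeated eigenvalues --- your distinct-eigenvalue route could not even be run at the kind of simple point where the fiber computation is tractable. In short, your skeleton is the paper's, but the two lemmas that actually carry the proof are missing, and the substitutes you sketch for them are either incorrect (connectedness of $\Theta_{+}$) or circular (the two-fold ambiguity claim).
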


\noindent Similar arguments apply if Theorem \ref{thm:globalidentification}
instead were to be restricted to $\theta\in\Theta_{-}$. The proof of the
corollary below is immediate and therefore omitted.

\begin{cor}
\label{cor:globalidentification1} Assume (A1)-(A6). If $\rho _{0}\beta
_{0}+\gamma _{0}>0$, then the identified set contains at most one element,
and similarly if $\rho _{0}\beta _{0}+\gamma _{0}<0$. Hence, if the sign of $%
\rho _{0}\beta _{0}+\gamma _{0}$ is unknown, the identified set contains, at
most, two elements.\footnote{%
\textcolor{black}{Under some special
conditions, the mirror image of $\theta_0$ can be characterized from
equation \eqref{eq:RFexpanded}. If $-W_0$ satisfies Assumption (A4), we may
set $\rho^*=-\rho_0$, $\beta^*=\beta_0$, $\gamma^*=-\gamma_0$ and
$W^*=-W_0$. Then, $\rho_0\beta_0+\gamma_0=-(\rho^*\beta^*+\gamma^*)$. Also
note that $\sum_{k=1}^\infty \rho_0^{k-1}W_0^k=-\sum_{k=1}^\infty
(\rho^*)^{k-1}(W^*)^k$, so $(\rho _{0}\beta
_{0}+\gamma_{0})\sum_{k=1}^{\infty }\rho _{0}^{k-1}W_{0}^{k}=(\rho ^*\beta
^*+\gamma^*)\sum_{k=1}^{\infty }(\rho^*)^{k-1}(W^*)^{k}$. It follows that
$\Pi(\theta_0)=\Pi(\theta^*)$, where
$\theta^*=(\rho^*,\beta^*,\gamma^*,W^*)$.}}
\end{cor}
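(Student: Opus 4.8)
The statement is an immediate corollary of the global identification result, so the substance lies in proving Theorem~\ref{thm:globalidentification}; I sketch that proof and then read off the corollary. The plan is to upgrade the \emph{local} identification of Theorem~\ref{thm:localidentification} to \emph{global} identification on $\Theta_{+}$ by combining two facts already in hand: by Theorem~\ref{thm:localidentification} the Jacobian $D\Pi(\theta)$ has full rank $m$ at every $\theta\in\Theta$, so $\Pi$ is a local diffeomorphism onto its image (an immersed $m$-manifold in $\mathbb{R}^{N^{2}}$); and by Corollary~\ref{thm:finiteidentification} the map $\Pi$ is proper with finite fibers. The engine is the classical ``cardinality of the pre-image'' result referenced after Corollary~\ref{cor:globalidentification1} (weaker than Hadamard's theorem, since it does not demand a simply connected image): a proper local homeomorphism onto a connected, locally connected target is a covering map with a \emph{constant} finite number of sheets. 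Global identification on $\Theta_{+}$ then reduces to showing that this constant equals one, i.e. to an anchoring computation at a single conveniently chosen base point.

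Before anchoring, I would exploit the algebraic structure to collapse the problem from the $m$-dimensional $\theta$ to the three scalars $(\rho,\beta,\gamma)$. If $\Pi(\theta)=\Pi_{0}$, rearranging $(I-\rho W)\Pi_{0}=\beta I+\gamma W$ gives $W=(\Pi_{0}-\beta I)(\gamma I+\rho\Pi_{0})^{-1}$, so any observationally equivalent $W$ is a rational function of $\Pi_{0}$; it therefore commutes with $\Pi_{0}$, shares its eigenvectors $v_{0,j}$, and along them the eigenvalues obey the M\"{o}bius relation recorded in \eqref{eq:eigenvaluePiW}. Consequently $W$ is \emph{determined} once $(\rho,\beta,\gamma)$ is known, and the constraints defining $\Theta$ become scalar conditions on $(\rho,\beta,\gamma)$: Assumption (A4) fixes the scale, Assumption (A1) (the reconstructed $W$ must have zero diagonal) supplies the equations tying $(\rho,\beta,\gamma)$ to $\Pi_{0}$, and Assumption (A5) (non-constant diagonal of $W^{2}$) is precisely what keeps those equations non-degenerate, mirroring its role in the Jacobian behind Theorem~\ref{thm:localidentification}.

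The decisive step is the anchor, where the sign restriction defining $\Theta_{+}$ enters. Expanding the reduced form as in \eqref{eq:RFexpanded}, $\Pi(\theta)=\beta I+(\rho\beta+\gamma)\sum_{k\geq1}\rho^{k-1}W^{k}$, the competing structure that can reproduce a given $\Pi_{0}$ via the symmetry in the footnote to Corollary~\ref{cor:globalidentification1} is the mirror $\theta^{*}=(-\rho_{0},\beta_{0},-\gamma_{0},-W_{0})$, for which $\rho^{*}\beta^{*}+\gamma^{*}=-(\rho_{0}\beta_{0}+\gamma_{0})$ carries the opposite sign and so lands in $\Theta_{-}$. Since $\rho\beta+\gamma$ is continuous and non-vanishing on $\Theta$ by (A3), it has constant sign on each connected component of $\Theta$; hence $\Theta_{+}$ is a union of whole components, cleanly separated from $\Theta_{-}$ and from the mirror. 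I would then pin the sheet count by deforming within $\Theta_{+}$ toward the slice $\rho=0$, where $\Pi=\beta I+\gamma W$ is affine: there $\beta$ is read from the constant diagonal of $\Pi$ (since $W$ has zero diagonal), $\gamma$ from the normalized row sum via (A4), and $W=(\Pi-\beta I)/\gamma$ uniquely, so the fibre is a single point. Constancy of the sheet number along the deformation then forces a single preimage of $\Pi_{0}$ in $\Theta_{+}$, i.e. $\Pi(\theta)=\Pi(\theta_{0})\Rightarrow\theta=\theta_{0}$.

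The hard part is the topological bookkeeping that legitimizes ``constant number of sheets'' on $\Theta_{+}$ rather than on all of $\mathbb{R}^{m}$: I must check that preimages cannot escape as parameters approach the boundary where (A2), (A3) or (A5) degenerate or where $\rho\beta+\gamma\to0$ (so the restricted map stays proper), and confirm that the anchor sits in the same sheet-count component as $\Pi(\theta_{0})$ — or, absent a connectedness argument, that the count is one on every component. This is where properness (Corollary~\ref{thm:finiteidentification}) and the clean separation $\Theta=\Theta_{+}\sqcup\Theta_{-}$ by the sign of $\rho\beta+\gamma$ do the real work, and where the approach gains over a Hadamard-type global inverse theorem. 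Given Theorem~\ref{thm:globalidentification} and its mirror analogue on $\Theta_{-}$, Corollary~\ref{cor:globalidentification1} is immediate: the identified set meets each of $\Theta_{+}$ and $\Theta_{-}$ in at most one point, hence is a singleton when the sign of $\rho_{0}\beta_{0}+\gamma_{0}$ is known and has at most two elements otherwise.
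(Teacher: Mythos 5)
Your reading of the corollary itself is exactly the paper's: given Theorem \ref{thm:globalidentification} on $\Theta_{+}$ and its mirror on $\Theta_{-}$, the corollary is immediate, which is why the paper omits its proof. Your sketch of Theorem \ref{thm:globalidentification} also follows the paper's architecture: local invertibility (Theorem \ref{thm:localidentification}), properness and finite fibers (Corollary \ref{thm:finiteidentification}), connectedness of the image, the constant-cardinality (covering) result, and then an anchoring computation at one conveniently chosen point. Your reduction $W=(\Pi_{0}-\beta I)(\gamma I+\rho\Pi_{0})^{-1}$ is correct and is a nice observation the paper does not use (indeed $\gamma I+\rho\Pi_{0}=(\rho\beta+\gamma)(I-\rho W)^{-1}$, which is invertible by (A3)).

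However, your anchoring step has a genuine gap. You anchor at the slice $\rho=0$ and claim the fiber there is a single point because, when $\rho=0$, $\Pi=\beta I+\gamma W$, so $\beta$ is the constant diagonal, $\gamma$ the normalized row sum, and $W=(\Pi-\beta I)/\gamma$. That argument only shows that no \emph{other point of the slice} $\{\rho=0\}$ maps to the same $\Pi$; the constant-sheet-count theorem requires the fiber in \emph{all of} $\Theta_{+}$ to be a singleton. You must also exclude $\theta=(W,\rho,\gamma,\beta)\in\Theta_{+}$ with $\rho\neq 0$ satisfying $(I-\rho W)^{-1}(\beta I+\gamma W)=\beta_{a}I+\gamma_{a}W_{a}$, equivalently
\begin{equation*}
(\beta-\beta_{a})I+(\gamma+\rho\beta_{a})W-\gamma_{a}W_{a}+\rho\gamma_{a}WW_{a}=0,
\end{equation*}
and the within-slice recovery says nothing about such solutions. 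For instance, matching diagonals only forces $\sum_{k\geq 2}\rho^{k-1}\mathrm{diag}(W^{k})$ to be constant, which (A5) alone does not preclude, since cancellations across powers of $W$ are possible. This is precisely the work the paper's Lemma \ref{lem:gamma0globallyidentified} does for its anchor ($\gamma=0$ together with the two-node reciprocal network): an element-by-element analysis of the analogous matrix equation, exploiting the anchor's sparsity, to rule out \emph{every} competing $\theta$, not merely those in the same slice. A comparable computation would be needed for your $\rho=0$ anchor (it plausibly goes through if $W_{a}$ is also taken to be the two-node network, but it is not done). Separately, you defer the connectedness of $\Pi(\Theta_{+})$ as ``topological bookkeeping''; this is not bookkeeping but a substantive step, which the paper proves constructively (Lemma \ref{lem:pathconnected}) by exhibiting explicit paths through auxiliary networks that respect (A1)--(A2) and (A4)--(A5). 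Without both pieces, the constant-cardinality engine has nothing to bite on.
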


We now turn our attention to the problem of identifying the sign of $\rho
_{0}\beta _{0}+\gamma _{0}$ from the observation of $\Pi _{0}$. This would
then allow us to establish global identification using Theorem \ref%
{thm:globalidentification}. It is apparent from (\ref{eq:RFexpanded}) that
if $\rho _{0}>0$ and $(W_{0})_{ij}\geq 0$, for all $i,j=\{1,\dots ,N\}$, the
off-diagonal elements of $\Pi _{0}$ identify the sign of $\rho _{0}\beta
_{0}+\gamma _{0}$.

\begin{cor}
\label{cor:globalidentification2} Assume (A1)-(A6). If $\rho_0 > 0$ and $%
(W_0)_{ij} \ge 0$, the model is globally identified.
\end{cor}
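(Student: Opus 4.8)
The plan is to reduce global identification to the determination of the sign of $\rho_0\beta_0+\gamma_0$ and then to invoke Theorem \ref{thm:globalidentification} together with Corollary \ref{cor:globalidentification1}. Corollary \ref{cor:globalidentification1} already guarantees that the identified set meets each of $\Theta_+$ and $\Theta_-$ in at most one point, so the entire identified set has at most two elements; the task under $\rho_0>0$ and $(W_0)_{ij}\geq0$ is to show that only the point in the partition containing $\theta_0$ survives.

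First I would extract $\sign(\rho_0\beta_0+\gamma_0)$ from the off-diagonal entries of $\Pi_0$. Starting from the expansion \eqref{eq:RFexpanded}, I would write $\Pi_0-\beta_0 I=(\rho_0\beta_0+\gamma_0)\,W_0(I-\rho_0W_0)^{-1}$, where the series is legitimate by (A2). When $\rho_0>0$ and $(W_0)_{ij}\geq0$, the Neumann expansion $(I-\rho_0W_0)^{-1}=\sum_{j\geq0}\rho_0^{j}W_0^{j}$ is entrywise nonnegative, hence so is the product $W_0(I-\rho_0W_0)^{-1}=\sum_{k\geq1}\rho_0^{k-1}W_0^{k}$. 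Therefore every off-diagonal entry of $\Pi_0$ equals $(\rho_0\beta_0+\gamma_0)$ times a nonnegative number, and all nonzero off-diagonal entries share the common sign $\sign(\rho_0\beta_0+\gamma_0)$. To ensure that at least one such entry is nonzero I would use (A5), which forces $W_0\neq0$; since $(W_0)_{ii}=0$ by (A1), some off-diagonal $(W_0)_{ij}>0$, and the corresponding entry of $W_0(I-\rho_0W_0)^{-1}$ is bounded below by $(W_0)_{ij}>0$, so by (A3) its product with $(\rho_0\beta_0+\gamma_0)$ is nonzero. Thus $\sign(\rho_0\beta_0+\gamma_0)$ is recovered from $\Pi_0$, placing $\theta_0$ in a known partition, say $\Theta_+$.

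Having located $\theta_0\in\Theta_+$, Theorem \ref{thm:globalidentification} gives that $\theta_0$ is the unique element of $\Theta_+$ mapping to $\Pi_0$. The remaining, and I expect genuinely delicate, step is to rule out any observationally equivalent point $\theta'$ in the opposite partition $\Theta_-$. Here I would exploit the mirror-image relation recorded in the footnote to Corollary \ref{cor:globalidentification1}: the sign flip $(\rho,\beta,\gamma,W)\mapsto(-\rho,\beta,-\gamma,-W)$ leaves $\Pi(\cdot)$ unchanged and sends a $\Theta_-$ point to a $\Theta_+$ point. The crucial observation is that this map cannot preserve membership in $\Theta$ in our setting, because $(W_0)_{ij}\geq0$ makes every row sum of $-W_0$ nonpositive, so $-W_0$ violates the unit row-sum normalization (A4). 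The main obstacle is to show rigorously that every candidate $\theta'\in\Theta_-$ is tied to $\theta_0$ through exactly this mirror relationship, so that its existence would force $W'=-W_0$ and hence a violation of (A4), rather than merely treating the mirror as one possible second solution. I would therefore concentrate on establishing that, within the at-most-two-point identified set furnished by Corollary \ref{cor:globalidentification1}, the two points (when both are present) are necessarily sign-flip images of one another; once that is secured, the nonnegativity of $W_0$ eliminates the $\Theta_-$ point and leaves $\{\theta_0\}$, delivering global identification.
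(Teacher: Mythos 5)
Your first step is exactly the paper's proof: under $\rho_0>0$ and $(W_0)_{ij}\geq 0$, the matrix $\sum_{k\geq 1}\rho_0^{k-1}W_0^{k}$ is entrywise non-negative, so by \eqref{eq:RFexpanded} every nonzero off-diagonal entry of $\Pi_0$ carries the sign of $\rho_0\beta_0+\gamma_0$, and (A1), (A3) together with $W_0\neq 0$ guarantee that such an entry exists. The paper then concludes in one line by Theorem \ref{thm:globalidentification}. The gap is in your final step, and it is one you acknowledge yourself: you never prove that an observationally equivalent $\theta'\in\Theta_-$ must be the sign-flip image $(-\rho_0,\beta_0,-\gamma_0,-W_0)$ of $\theta_0$. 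That claim appears nowhere in the paper --- the footnote to Corollary \ref{cor:globalidentification1} only says the mirror \emph{is} a second solution when it happens to be admissible, not that any second solution must be the mirror --- and establishing it would be at least as hard as the corollary itself.

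More importantly, the detour is unnecessary, because the hypotheses $\rho>0$ and $W_{ij}\geq 0$ function as maintained restrictions on the parameter space: they hold for every candidate $\theta$, exactly as (A1)--(A5) do in the definition of $\Theta$, not only for the true $\theta_0$. So take any candidate $\theta'$ in this restricted space with $\Pi(\theta')=\Pi_0$. Your own sign-extraction argument, applied to $\theta'$ rather than to $\theta_0$, shows that the nonzero off-diagonal entries of $\Pi(\theta')=\Pi_0$ have the sign of $\rho'\beta'+\gamma'$. Since $\Pi_0$ has at least one nonzero off-diagonal entry, $\rho'\beta'+\gamma'$ and $\rho_0\beta_0+\gamma_0$ share the same sign, so $\theta'$ and $\theta_0$ lie in the same partition, and Theorem \ref{thm:globalidentification} forces $\theta'=\theta_0$. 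In other words, candidates in the opposite partition are excluded not because they are mirror images of $\theta_0$, but because, being themselves non-negative with positive $\rho$, they would generate reduced forms whose off-diagonal entries have the wrong sign. If instead you insist on reading the sign restrictions as constraining only the true parameter while candidates range over all of $\Theta$, then your worry is genuine --- but your mirror-image route would not close it, and that is not the statement the paper proves.
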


Real-world applications often suggest endogenous social interactions are
positive ($\rho _{0}>0$), in which case global identification is fully
established by Corollary \ref{cor:globalidentification2}. On the other hand,
if $\rho _{0}<0$ (e.g., if outcomes are strategic substitutes), $\rho
_{0}^{k} $ in (\ref{eq:RFexpanded}) alternates signs with $k$, and the
off-diagonal elements no longer carry the sign of $\rho _{0}\beta
_{0}+\gamma _{0}$. Nonetheless, if $W_{0}$ is non-negative and irreducible
(i.e., not permutable into a block-triangular matrix or, equivalently, a
strongly connected social network), the model is also identifiable without
further restrictions on $\rho _{0}$:

\begin{cor}
\label{cor:globalidentification3} Assume (A1)-(A6), $(W_0)_{ij} \ge 0$ and $%
W_0$ is irreducible. If $W_0$ has at least two real eigenvalues or $|\rho_0|
< \sqrt{2}/2$, then the model is globally identified.
\end{cor}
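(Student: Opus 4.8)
The plan is to reduce the statement to a single question: whether the sign of $\rho_0\beta_0+\gamma_0$ is pinned down by the observed reduced form $\Pi_0$. By Corollary \ref{cor:globalidentification1}, under (A1)--(A5) the identified set contains at most one point in $\Theta_+$ and at most one in $\Theta_-$, so it suffices to show that every $\theta$ with $\Pi(\theta)=\Pi(\theta_0)$ lies on the same side as $\theta_0$; Theorem \ref{thm:globalidentification} then forces $\theta=\theta_0$. Two facts drive the argument. First, the expansion \eqref{eq:RFexpanded} gives $\Pi_0=\beta_0 I+(\rho_0\beta_0+\gamma_0)S$ with $S\equiv W_0(I-\rho_0W_0)^{-1}=\sum_{k\geq1}\rho_0^{k-1}W_0^{k}$, so each off-diagonal entry of $\Pi_0$ is $(\rho_0\beta_0+\gamma_0)$ times the corresponding entry of $S$. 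Second, by \eqref{eq:eigenvaluePiW} the matrices $W_0$ and $\Pi_0$ share eigenvectors, and the eigenvalue map $\pi(\lambda)=\frac{\beta_0+\gamma_0\lambda}{1-\rho_0\lambda}$ has $\pi'(\lambda)=\frac{\rho_0\beta_0+\gamma_0}{(1-\rho_0\lambda)^{2}}$, so $\mathrm{sign}\,\pi'=\mathrm{sign}(\rho_0\beta_0+\gamma_0)$ on the real line. Since $W_0\geq0$ is irreducible, Perron--Frobenius supplies a simple real Perron root $r>0$ with strictly positive right and left eigenvectors $v,u$; these are identifiable from $\Pi_0$ as its unique positive right and left eigenvectors, and (A2) guarantees $1-\rho_0 r>0$ and that the pole $1/\rho_0$ lies outside the interval spanned by the real eigenvalues of $W_0$.

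In the case where $W_0$ has a second real eigenvalue $\mu$, I would exploit monotonicity of $\pi$. A Möbius map with real coefficients maps the real line bijectively to itself, so real eigenvalues of $\Pi_0$ correspond exactly to real eigenvalues of $W_0$; thus $\Pi_0$ has a second real eigenvalue $\pi(\mu)$, observable as its eigenvalue on a real eigenvector other than $v$, while $\pi(r)$ is the eigenvalue on the positive eigenvector $v$. Because $r>\mu$ (Perron) and $\pi$ is strictly monotone on $[\mu,r]$ (the pole is excluded by (A2)), we obtain $\mathrm{sign}(\rho_0\beta_0+\gamma_0)=\mathrm{sign}(\pi(r)-\pi(\mu))$, a quantity computable from $\Pi_0$. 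Injectivity of $\pi$ together with simplicity of $r$ ensures $\pi(r)\neq\pi(\mu)$, so the sign is well defined and Corollary \ref{cor:globalidentification1} closes this case.

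In the remaining case, where $W_0$ may have only the Perron root as a real eigenvalue, the two-eigenvalue comparison is unavailable and I would instead read the sign off the off-diagonal entries of $\Pi_0$, aggregated by the Perron weights. Define the observable scalar $Q\equiv u^{\top}\Pi_0 v-\sum_i u_iv_i(\Pi_0)_{ii}$. Because $\beta_0 I$ is diagonal, its contribution cancels, leaving $Q=(\rho_0\beta_0+\gamma_0)\,P$ with $P\equiv\sum_{i\neq j}u_i(S)_{ij}v_j$, so it suffices to sign $P$. Using the identity $S=W_0+\rho_0 W_0 S$ and $\mathrm{diag}(W_0)=0$, the leading piece of $P$ equals $\sum_{i\neq j}u_i(W_0)_{ij}v_j=u^{\top}W_0 v=r\,(u^{\top}v)>0$, while the remainder $\rho_0\sum_{i\neq j}u_i(W_0 S)_{ij}v_j$ is controlled by $|\rho_0|$ through the elementary bounds $\sum_i u_iv_i(W_0^{k})_{ii}\leq r^{k}(u^{\top}v)$, which follow from nonnegativity and the Perron identity $u^{\top}W_0^{k}v=r^{k}(u^{\top}v)$. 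A sufficiently careful estimate then yields $P>0$ whenever $|\rho_0|<\sqrt{2}/2$, so $\mathrm{sign}(\rho_0\beta_0+\gamma_0)=\mathrm{sign}(Q)$ and Corollary \ref{cor:globalidentification1} again closes the argument.

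The main obstacle is precisely this quantitative estimate: for $\rho_0<0$ the series $S=\sum_{k\geq1}\rho_0^{k-1}W_0^{k}$ has alternating signs, and a naive triangle-inequality bound on the higher-order (return-walk) terms is too lossy, securing only a condition of the form $|\rho_0|r<\sqrt{2}-1$ rather than the stated threshold. Reaching $\sqrt{2}/2$ requires exploiting the cancellation between consecutive even and odd powers of $\rho_0$ in $S$ (the appearance of $\cos 45^{\circ}=\sqrt{2}/2$ suggests the sharp condition is ultimately angular, controlling how far $\pi$ can rotate the complex eigenvalues), and this is where I expect the real work to lie. A secondary point requiring care is the matching of eigen-structure when $\Pi_0$ has repeated eigenvalues: one should verify, using (A3) so that $\pi$ is a non-degenerate Möbius map, that the positive eigenvector of $\Pi_0$ is exactly the Perron eigenvector $v$ of $W_0$ and that real eigenvalues are paired correctly, which is routine given the simplicity of the Perron root.
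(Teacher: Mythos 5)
Your overall strategy---reduce everything to identifying the sign of $\rho_0\beta_0+\gamma_0$ from $\Pi_0$ and then invoke Theorem \ref{thm:globalidentification} via Corollary \ref{cor:globalidentification1}---is exactly the paper's, and your first case is essentially the paper's own argument for it: when $W_0$ has a second real eigenvalue, the real M\"obius map $\pi(\lambda)=(\beta_0+\gamma_0\lambda)/(1-\rho_0\lambda)$ is strictly monotone on the real eigenvalues with slope sign equal to $\mathrm{sign}(\rho_0\beta_0+\gamma_0)$, and since the Perron eigenvector is observable from $\Pi_0$ as its unique strictly positive eigenvector, whether its eigenvalue is the largest or the smallest real eigenvalue of $\Pi_0$ reveals the sign. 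No complaints there.

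The genuine gap is your second case, which is the only place the hypothesis $|\rho_0|<\sqrt{2}/2$ does any work, and you concede it yourself: your Perron-weighted observable $Q=(\rho_0\beta_0+\gamma_0)P$ identifies the sign only if $P>0$ can be proved, and the estimate you actually possess ($0\le d_k\equiv\sum_i u_iv_i(W_0^k)_{ii}\le r^k\,u^\top v$ plus the triangle inequality on the alternating series) delivers $P>0$ only for $|\rho_0|r<\sqrt{2}-1\approx 0.414$, well short of $\sqrt{2}/2\approx 0.707$; the hoped-for ``cancellation between consecutive powers'' is never supplied, and nothing in (A1)--(A5) gives more control over the return-walk weights $d_k$ than the bound above (they can oscillate with the parity of $k$, e.g.\ for nearly bipartite graphs, which is precisely where an alternating series is dangerous). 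The paper closes this case by a different mechanism that never leaves eigenvalue space: writing a generic eigenvalue of $W_0$ as $\lambda_0=a_0+b_0 i$, it computes
\[
\frac{\partial\,\mathrm{Re}(\lambda_{\Pi_0})}{\partial a_0}=\frac{(1-\rho_0a_0)^2-\rho_0^2b_0^2}{\bigl[(1-\rho_0a_0)^2+\rho_0^2b_0^2\bigr]^2}\,(\rho_0\beta_0+\gamma_0),
\]
and shows by a Kuhn--Tucker minimization over the disk $a_0^2+b_0^2\le 1$ that the numerator's minimum equals $1/2-\rho_0^2$ when $|\rho_0|\ge 1/2$ (and $(1-|\rho_0|)^2>0$ otherwise), hence is strictly positive precisely under $|\rho_0|<\sqrt{2}/2$. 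Thus $\mathrm{Re}(\lambda_{\Pi_0})$ is monotone in $\mathrm{Re}(\lambda_0)$ with the sign of $\rho_0\beta_0+\gamma_0$, so the eigenvalue of $\Pi_0$ attached to the observable positive eigenvector has the largest or smallest real part among all eigenvalues of $\Pi_0$ accordingly---which identifies the sign, after which Theorem \ref{thm:globalidentification} finishes. Your parting intuition that the threshold is ``angular'' is in fact correct: the numerator equals $\mathrm{Re}\bigl[(1-\rho_0\lambda_0)^2\bigr]$, i.e.\ the condition is that $1-\rho_0\lambda_0$ stay within $45^\circ$ of the real axis for all $|\lambda_0|\le 1$, which holds exactly when $|\rho_0|<\sqrt{2}/2$. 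But your walk-expansion route does not capture this, so as written the case that carries the stated $\sqrt{2}/2$ bound is unproven.
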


\noindent Corollary \ref{cor:globalidentification3} requires that $W_0$ be
irreducible, i.e., that it is not permutable into a block upper-triangular
matrix. In the context of directed graphs, this is similar to requiring that
the matrix be strongly connected, that is, that any node can be reached from
any other node. The corollary then rules out cases when the network is not
connected, for example, if there are two disjoint groups (with no connection
across groups), or a star network pointing from the center towards the
edges. The corollary holds if there are at least two real eigenvalues, or if 
$\rho _{0}$ is appropriately bounded. Since $W_{0}$ is non-negative, it has
at least one real eigenvalue by the Perron-Frobenius theorem. If $W_{0}$ is
symmetric, for example, its eigenvalues are all real, and Corollary \ref%
{cor:globalidentification3} holds. It also holds if $(W_{0})_{ij}\leq 0$, as
we can rewrite the model as $\rho W_{0}=-\rho |W_{0}|$, where $|W_{0}|$ is
the matrix whose entries are the absolute values of the entries in $W_{0}$.
However, Corollary \ref{cor:globalidentification3} rules out cases that mix
positive $(W_0)_{ij}\geq 0$ and negative interactions $(W_0)_{ij}\leq 0$. In
any case, the bound on $|\rho _{0}|$ is sufficient and holds in most (if not
all) empirical estimates we are aware of obtained from either elicited or
postulated networks, and in our application on tax competition.

\subsection{Extensions}

We present three extensions of the method for individual fixed effects,
common shocks, and time-varying $W$. Appendix B describes extensions for
multivariate covariates and heterogeneous $\beta _{0}$.

\subsubsection{Individual Fixed Effects\label{subsec:corrfixedeffects}}

We observe outcomes for $i=1,\dots ,N$ individuals repeatedly through $%
t=1,\dots ,T$ instances. If $t$ corresponds to time, it is natural to think
of there being unobserved heterogeneity across individuals, $\alpha _{i}$,
to be accounted for when estimating $\Pi _{0}$. The structural model (\ref%
{eq:modelSF}) is then,%
\begin{equation*}
y_{it}=\rho _{0}\sum_{j=1}^{N}W_{0,ij}y_{jt}+\beta _{0}x_{it}+\gamma
_{0}\sum_{j=1}^{N}W_{0,ij}x_{jt}+\alpha _{i}+\epsilon _{it},
\end{equation*}%
which can be written in matrix form as%
\begin{equation*}
y_{t}=\rho _{0}W_{0}y_{t}+x_{t}\beta _{0}+W_{0}x_{t}\gamma _{0}+\alpha
^{\ast }+\epsilon _{t},
\end{equation*}%
where $\alpha ^{\ast }$ is the vector of fixed effects. Individual-specific
and time-constant fixed effects can be eliminated using the standard
subtraction of individual time averages. Defining $\bar{y}%
_{t}=T^{-1}\sum_{t=1}^{T}y_{t}$, $\bar{x}_{t}=T^{-1}\sum_{t=1}^{T}x_{t}$, and 
$\bar{\epsilon}_{t}=T^{-1}\sum_{t=1}^{T}\epsilon _{t}$, 
\begin{equation*}
y_{t}-\bar{y}_{t}=\rho _{0}W_{0}\left( y_{t}-\bar{y}_{t}\right) +\left(
x_{t}-\bar{x}_{t}\right) \beta _{0}+W_{0}\left( x_{t}-\bar{x}_{t}\right)
\gamma _{0}+\epsilon _{t}-\bar{\epsilon}_{t},
\end{equation*}%
if $W_{0}$ does not change with time. Identification from the reduced form
follows from previous theorems, since $\Pi _{0}$ is unchanged when
regressing $y_{t}-\bar{y}_{t}$ on $x_{t}-\bar{x}_{t}$.\footnote{%
\textcolor{black}{As is the case in panel data, this would require strict
exogeneity ($\mathbb{E}[\epsilon _{s}|x_{t}]=0$ for any $s$ and $t$) or
predetermined errors ($\mathbb{E}[\epsilon _{s}|x_{t}]=0$ for $s \ge t$) so
that the matrix $\Pi _{0}$ can be consistently estimated.}}

\subsubsection{Common Shocks\label{subsec:commonshocks}}

We next allow for unobserved common shocks to all individuals in the network
in the same instance $t$. Such correlated effects\emph{\ }$\alpha _{t}$ can
confound the identification of social interactions. As we have not placed
any distributional assumption on the covariance matrix of the disturbance
term, our analysis readily incorporates correlated effects that are
orthogonal to $x_{t}$. When this is not the case, one possibility is to
model the correlated effects $\alpha _{t}$ explicitly. The model then is,%
\begin{equation*}
y_{t}=\rho _{0}W_{0}y_{t}+x_{t}\beta _{0}+\gamma _{0}W_{0}x_{t}+\alpha
_{t}\iota +\epsilon _{t},
\end{equation*}%
where $\alpha _{t}$ is a scalar capturing shocks in the network common to
all individuals. Let $\Pi _{01}=\left( I-\rho _{0}W_{0}\right) ^{-1}$ and $%
\Pi _{02}=\left( \beta _{0}I+\gamma _{0}W_{0}\right) $ such that $\Pi
_{0}=\Pi _{01}\Pi _{02}$. The reduced-form model is%
\begin{equation*}
y_{t}=\Pi _{0}x_{t}+\alpha _{t}\Pi _{01}\iota +v_{t}.
\end{equation*}%
We propose a transformation to eliminate the correlated effects:\ exclude
the individual-invariant $\alpha _{t}$, subtracting the mean of the
variables in a given period (global differencing). For this purpose, define $%
H=\frac{1}{n}\iota \iota ^{\prime }$. We note that in empirical and
theoretical work, it is customary to strengthen Assumption (A4) and require
that \emph{all} rows of $W_{0}$ sum to one if no individual is isolated (see
for example \citealp{Blumeetal2015}). This strengthened assumption is
usually referred to as row-sum normalization, and is stated below:

\begin{itemize}
\item[(A4$^\prime$)] For all $i=1,...N$, we have that $\sum_{j=1,\dots
,N}(W_{0})_{ij}=1$.
\end{itemize}

This can be written compactly as $W_{0}\iota =\iota $. In this case, $W_{0}$
can be interpreted as the normalized adjacency matrix. Under row-sum
normalization we have that,%
\begin{eqnarray*}
\left( I-H\right) y_{t} &=&\left( I-H\right) \left( I-\rho _{0}W_{0}\right)
^{-1}\left( \beta _{0}I+\gamma _{0}W_{0}\right) x_{t}+\left( I-H\right)
\left( I-\rho _{0}W_{0}\right) ^{-1}\epsilon _{t} \\
&=&\left( I-H\right) \Pi _{0}x_{t}+\left( I-H\right) v_{t},
\end{eqnarray*}%
because $\left( I-H\right) \left( I-\rho _{0}W_{0}\right) ^{-1}\alpha
_{t}\iota =0$ if Assumption (A4$^\prime$) holds. It then follows that $\tilde{\Pi}%
_{0}=(I-H)\Pi _{0}$ is identified. The next proposition shows that, under
row-sum normalization of $W_{0}$, $\Pi _{0}$ is identified from $\tilde{\Pi}%
_{0}$ (and, as a consequence, the previous results immediately apply).

\begin{prop}
\label{prop:globalidentification1} If $W_0$ is non-negative, irreducible, and
row-sum normalized, $\Pi _{0}$ is identified from $\tilde{\Pi}_{0}$.
\end{prop}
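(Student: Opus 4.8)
The plan is to exploit the single known eigenvector created by row-sum normalization together with diagonalizability to reduce the problem to the recovery of one rank-one piece of $\Pi_0$. First I would record that $W_0\iota=\iota$ makes $\iota$ a right eigenvector of $W_0$ with eigenvalue $1$, hence by \eqref{eq:eigenvaluePiW} a right eigenvector of $\Pi_0$ with eigenvalue $\lambda_{\Pi,1}=\frac{\beta_0+\gamma_0}{1-\rho_0}$. Since $H=\frac{1}{n}\iota\iota'$ is the orthogonal projector onto $\mathrm{span}(\iota)$, so that $H\iota=\iota$ and $\iota'H=\iota'$, I would write $\Pi_0=(I-H)\Pi_0+H\Pi_0=\tilde\Pi_0+\frac{1}{n}\iota(\iota'\Pi_0)$. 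The term $H\Pi_0=\frac{1}{n}\iota(\iota'\Pi_0)$ is rank one of the form $\iota g'$ with $g'=\frac{1}{n}\iota'\Pi_0$, so identifying $\Pi_0$ from $\tilde\Pi_0$ is equivalent to recovering the vector of column sums $\iota'\Pi_0$ that global differencing discards.

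Next I would extract from $\tilde\Pi_0$ whatever spectral data of $\Pi_0$ survives. Writing the biorthogonal decomposition $\Pi_0=\sum_{j}\lambda_{\Pi,j}v_{0,j}w_{0,j}'$ with $w_{0,j}'v_{0,k}=\delta_{jk}$, $v_{0,1}=\iota$ and $\lambda_{0,1}=1$, the key observation is that $w_{0,j}'\iota=w_{0,j}'v_{0,1}=0$ for $j\ge2$, whence $w_{0,j}'(I-H)=w_{0,j}'$ and therefore $w_{0,j}'\tilde\Pi_0=w_{0,j}'\Pi_0=\lambda_{\Pi,j}w_{0,j}'$. Likewise $\iota'\tilde\Pi_0=\iota'(I-H)\Pi_0=0$ and $\tilde\Pi_0\iota=0$. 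Thus $\tilde\Pi_0$ shares with $\Pi_0$ the $n-1$ left eigenvectors $w_{0,j}$ and eigenvalues $\lambda_{\Pi,j}$ for $j\ge2$, carries the right eigenvectors $(I-H)v_{0,j}$, and replaces the top pair by a zero eigenvalue with eigenvector $\iota$. Diagonalizability supplies a full eigenbasis, so (with distinct eigenvalues, or working with eigenspaces) $\tilde\Pi_0$ returns $\{\lambda_{\Pi,j},w_{0,j},(I-H)v_{0,j}\}_{j\ge2}$ exactly; the only quantities of $\Pi_0$ left undetermined are the top eigenvalue $\lambda_{\Pi,1}$ and the scalars $a_j=\frac{1}{n}\iota'v_{0,j}$ that record the $\iota$-components of the remaining right eigenvectors, since $g'=\lambda_{\Pi,1}w_{0,1}'+\sum_{j\ge2}\lambda_{\Pi,j}a_j w_{0,j}'$.

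To close the argument I would inject the structural information that $\tilde\Pi_0$ alone cannot carry. Eliminating $w_{0,1}$ via the expansion of the known vector $\iota'$ in the left-eigenvector basis, $w_{0,1}'=\frac{1}{n}\iota'-\sum_{j\ge2}a_j w_{0,j}'$, gives $g'=\frac{\lambda_{\Pi,1}}{n}\iota'+\sum_{j\ge2}(\lambda_{\Pi,j}-\lambda_{\Pi,1})a_j w_{0,j}'$, so the task is to pin down $\lambda_{\Pi,1}$ and the $a_j$. The lever is Assumption (A1): the $n$ identities $(W_0)_{ii}=\sum_j\lambda_{0,j}v_{0,j,i}w_{0,j,i}=0$ tie the unknown eigenvalues $\lambda_{0,j}$ and the $\iota$-components together, while the M\"obius relation $\lambda_{\Pi,j}=\frac{\beta_0+\gamma_0\lambda_{0,j}}{1-\rho_0\lambda_{0,j}}$ links each recovered $\lambda_{\Pi,j}$ to $\lambda_{0,j}$ through the common $(\rho_0,\beta_0,\gamma_0)$. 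When $W_0$ is symmetric this is transparent: the $v_{0,j}$ are orthogonal to $\iota$, so every $a_j=0$, $g'$ collapses to $\frac{\lambda_{\Pi,1}}{n}\iota'$, and the zero-diagonal relations become $n$ equations in the three unknowns $\rho_0,\beta_0,\gamma_0$ (through $\lambda_{0,j}=(\lambda_{\Pi,j}-\beta_0)/(\gamma_0+\rho_0\lambda_{\Pi,j})$), which overdetermine and hence identify them; then $\lambda_{\Pi,1}$ is computed and $\Pi_0=\tilde\Pi_0+\frac{\lambda_{\Pi,1}}{n}\iota\iota'$. For general diagonalizable $W_0$ I would run the same elimination while additionally accounting for the $a_j$, and finish by invoking the earlier theorems to pass from $\Pi_0$ to $\theta_0$.

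The hard part will be this last step in the non-symmetric case. Everything destroyed by global differencing is concentrated in the mean ($\iota$-direction) response $\iota'\Pi_0$, and no spectral manipulation of $\tilde\Pi_0$ by itself returns it, so the proof must show that (A1) together with diagonalizability and the non-degeneracy in (A2)--(A5) leaves no residual freedom in the pair $(\lambda_{\Pi,1},\{a_j\}_{j\ge2})$. I would establish this through a genericity and rank argument on the system formed by the zero-diagonal identities and the M\"obius constraints, verifying that it admits a unique admissible solution. This determinacy---rather than the bookkeeping of the spectral decomposition---is where the real content of the proposition lies, and it is precisely the point at which row-sum normalization (A4') and the asymmetry built into (A5) are used to exclude the coincidences that would otherwise let two distinct $\Pi_0$ share the same $\tilde\Pi_0$.
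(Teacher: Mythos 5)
Your setup and spectral bookkeeping are correct and run parallel to the paper's: both arguments note that the spectral data of $\Pi_0$ associated with directions deflated by $\iota$ survives the projection $(I-H)$, and both reduce the problem to recovering the mean components of the right eigenvectors (your $a_j$, the paper's $\bar v_j=\tfrac{1}{N}\iota'v_j$) that global differencing destroys; you are in fact more explicit than the paper that the top eigenvalue $\lambda_{\Pi,1}$ is also annihilated. But there is a genuine gap exactly where you locate the ``real content'': the determinacy of $(\lambda_{\Pi,1},\{a_j\}_{j\ge2})$ in the general diagonalizable case is never established. You carry it out only for symmetric $W_0$, where $a_j=0$ holds automatically, and even there you conclude identification of $(\rho_0,\beta_0,\gamma_0)$ because the zero-diagonal system is ``overdetermined'' --- an overdetermined nonlinear system can still admit multiple solutions, so this is an assertion, not a proof. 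For general $W_0$ you defer to a ``genericity and rank argument'' that is left unexecuted; genericity would in any case prove a weaker claim than the proposition, which asserts identification for \emph{every} $W_0$ satisfying its hypotheses.

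The idea you are missing is that the constants $a_j$ need none of this machinery: they are pinned down by row-sum normalization alone. The admissible full eigenvector must have the form $v_j=\tilde v_j+a\iota$ and satisfy $W_0v_j=\lambda_{0,j}v_j$. If two constants $a$ and $a^{*}$ both worked, subtracting the two eigenvector equations gives $(a-a^{*})W_0\iota=\lambda_{0,j}(a-a^{*})\iota$; since $W_0\iota=\iota$, this forces $(1-\lambda_{0,j})(a-a^{*})=0$, and because $\lambda_{0,j}\neq 1$ for $j\ge2$ (the paper invokes $|\lambda_{0,j}|<1$ for the non-Perron eigenvalues), we get $a=a^{*}$; for $j=1$ the eigenvector $v_1=\iota$ is known outright. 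That is essentially the paper's whole proof: a perturbation of an eigenvector along $\iota$ is excluded precisely because $\iota$ is the eigenvector with eigenvalue one --- no use of (A1), no M\"obius system, no rank condition. If you substitute this observation for your conjectured determinacy step, your decomposition $\Pi_0=\tilde\Pi_0+\tfrac{1}{N}\iota(\iota'\Pi_0)$ and the rest of your bookkeeping reproduce the paper's argument in different notation. The one piece your write-up handles more honestly than the paper is $\lambda_{\Pi,1}$ itself, whose recovery the paper's proof passes over silently when it declares identification of $\Pi_0$ ``equivalent to identification of $\bar v_j$''; your zero-diagonal/M\"obius system is a plausible route to close that residual step, but as proposed it remains unproven.
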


Under row-sum normalization of $W_{0}$, a common group-level shock affects
individuals homogeneously since $(I-\rho _{0}W_{0})^{-1}\alpha _{t}\iota
=\alpha _{t}(I+\rho _{0}W_{0}+\rho _{0}^{2}W_{0}^{2}+\cdots )\iota =\frac{%
\alpha _{t}}{1-\rho _{0}}\iota $, which is a vector with no variation across
entries. Consequently, global differencing eliminates correlated effects and 
$\left( I-H\right) \left( I-\rho _{0}W_{0}\right) ^{-1}\alpha _{t}\iota
=\left( I-\rho _{0}W_{0}\right) ^{-1}\alpha _{t}\left( I-H\right) \iota =0$.
Absent row-sum normalization, global differencing does not ensure correlated
effects are eliminated. To see this, note that $(I-\rho _{0}W_{0})^{-1}$ is
no longer row-sum normalized and $\alpha _{t}(I-\rho _{0}W_{0})^{-1}\iota $
does not have constant entries.

The next proposition makes this point formally: that the stronger Assumption
(A4$^\prime$) is \emph{necessary} to eliminate group-level shocks by showing it is
not possible to construct a data transformation that eliminates group
effects in the absence of row-sum normalization.

\begin{prop}
\label{prop:globalidentification2} Define $r_{W_{0}}=(I-\rho
_{0}W_{0})^{-1}\iota $. If in space $\Theta =\{\theta \in \mathbb{R}^{m}:$
Assumptions (A1)-(A6) are satisfied$\}$, there are $N$ matrices $%
W_{0}^{(1)},\dots ,W_{0}^{(N)}$ such that $[r_{W_{0}^{(1)}}\;\cdots
\;r_{W_{0}^{(N)}}]$ has rank $N$, then the only transformation such that $(I-%
\tilde{H})(I-\rho _{0}W_{0})^{-1}\iota =0$ is $\tilde{H}=I$.
\end{prop}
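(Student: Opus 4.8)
The plan is to reformulate the requirement on $\tilde{H}$ as a condition that must hold \emph{uniformly} over the parameter space $\Theta$, and then exploit the stated rank hypothesis. The conceptual crux is that a valid data transformation cannot be tailored to a particular $W_0$, because $W_0$ is precisely the unknown object to be identified; hence any transformation $\tilde{H}$ that eliminates the correlated effects must do so for \emph{every} admissible network, not just the true one. Writing $r_{W_0}=(I-\rho_0 W_0)^{-1}\iota$ for the coefficient vector multiplying the common shock $\alpha_t$ in the reduced form $y_t=\Pi_0 x_t+\alpha_t(I-\rho_0 W_0)^{-1}\iota+v_t$, the requirement $(I-\tilde{H})r_{W_0}=0$ must therefore hold in particular for each of the $N$ matrices $W_0^{(1)},\dots,W_0^{(N)}$ appearing in the hypothesis.

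First I would collect these $N$ conditions into a single matrix equation. Let $R=[\,r_{W_0^{(1)}}\;\cdots\;r_{W_0^{(N)}}\,]$ be the $N\times N$ matrix whose columns are the vectors $r_{W_0^{(k)}}$. Then the family of requirements $(I-\tilde{H})r_{W_0^{(k)}}=0$, for every $k=1,\dots,N$, is equivalent to the single equation $(I-\tilde{H})R=0$.

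Next I would invoke the rank hypothesis directly. By assumption $R$ has rank $N$, so $R$ is invertible; multiplying $(I-\tilde{H})R=0$ on the right by $R^{-1}$ yields $I-\tilde{H}=0$, i.e.\ $\tilde{H}=I$. Since $\tilde{H}=I$ makes $(I-\tilde{H})=0$, it annihilates the entire reduced form (the $\Pi_0 x_t$ signal together with $v_t$), not merely the correlated effects; thus no informative global-differencing transformation exists in the absence of row-sum normalization, which is exactly the necessity claim.

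I do not anticipate a genuine technical obstacle once the problem is framed correctly: the whole content is the step of recognizing that $\tilde{H}$ must be valid for all $W_0\in\Theta$ simultaneously, after which the $N$ linearly independent vectors $r_{W_0^{(k)}}$ span $\mathbb{R}^N$ and force $\tilde{H}=I$. The one point worth stating carefully is why $\tilde{H}$ cannot depend on $W_0$, since this is what converts the per-network condition into the uniform condition $(I-\tilde{H})R=0$ that links the rank hypothesis to the conclusion. It is also illuminating to contrast with the row-sum-normalized case of Proposition~\ref{prop:globalidentification1}: there $W_0\iota=\iota$ forces $r_{W_0}=(1-\rho_0)^{-1}\iota$ for every admissible $W_0$, so the vectors $r_{W_0}$ never span $\mathbb{R}^N$ and the nontrivial choice $\tilde{H}=H=\tfrac{1}{n}\iota\iota'$ does eliminate the common shock — precisely the escape route that this proposition shows is unavailable once (A4$'$) fails and the $r_{W_0}$ attain full rank.
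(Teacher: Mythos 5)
Your proof is correct and follows essentially the same route as the paper's: both impose the condition $(I-\tilde{H})r_{W_0^{(k)}}=0$ across the $N$ admissible networks and use the rank-$N$ hypothesis to force $I-\tilde{H}=0$. The only cosmetic difference is that you stack the conditions into $(I-\tilde{H})R=0$ and right-multiply by $R^{-1}$, whereas the paper argues row by row that each row of $I-\tilde{H}$ is orthogonal to $N$ linearly independent vectors and hence zero — the same linear-algebra fact in matrix rather than componentwise form.
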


It is useful to be able to test for row-sum normalization (A4$^\prime$) as it
enables common shocks to be accounted for in the social interactions model.
This is possible as%
\begin{eqnarray}
\Pi _{0}\iota &=&\beta _{0}\iota +(\rho _{0}\beta _{0}+\gamma
_{0})\sum_{k=1}^{\infty }\rho _{0}^{k-1}W_{0}^{k}\iota  \notag \\
&=&\left[ \beta _{0}+(\rho _{0}\beta _{0}+\gamma _{0})\sum_{k=1}^{\infty
}\rho _{0}^{k-1}\right] \iota  \notag \\
&=&\frac{\beta _{0}+\gamma _{0}}{1-\rho _{0}}\iota .  \label{eq:PiRowSum}
\end{eqnarray}%
The last equality follows from the observation that, under row-normalization
of $W_{0}$, $W_{0}^{k}\iota =W_{0}\iota =\iota $, $k>0$. This implies $\Pi
_{0}$ has constant row-sums, which suggests row-sum normalization is
testable. In the Appendix, we derive a Wald test statistic to do so.\footnote{%
For ease of explanation, in the Appendix, we derive the test under the
asymptotic distribution of the OLS estimator. The test generally holds with
minor adjustments for estimators with known asymptotic distributions.}

\subsubsection{Time-varying $W$\label{subsec:timevaryingW}}

We now relax Assumption (A6), which states that $W_0$ does not vary across
the time periods $t=1,\dots,T$. The version of Equation \eqref{eq:Model}
with time-varying network is 
\begin{eqnarray*}
y_t&=&\rho_0W_{0t}y_t+\beta_0x_t+\gamma_0W_{0t}x_t+\epsilon_t,
\end{eqnarray*}
with the reduced-form matrix $\Pi_{0t}=(I-\rho_0
W_{0t})^{-1}(I\beta_0+W_{0t}\gamma_0)$. We note that the identification results
developed in Subsection \ref{subsec:mainresults} can, in principle, be
applied element by element to each $\Pi_{0t}$, leading to the identification
of a time-varying $W_{0t}$ (and, potentially, of the parameters $\rho_0$, $%
\beta_0 $ and $\gamma_0$).

In practice, implementing any estimation strategy with a time-varying $%
\Pi_{0t}$ (or $W_{0t}$) is not feasible using only observation from the
single time period $t$. We instead adopt a kernel-weighted version. Define
period-specific weights $\omega _{t}$, and consider the transformed data $%
\tilde{y}_{s}=\omega _{s}(t)y_{s}$ and $\tilde{x}_{s}=\omega_{s}(t)x_{s}$, $%
s=1,\dots,T$. Evidently, uniform weights $\omega _{s}=1, s=1,\dots,T$ are
equivalent to the strategy not considering time-varying networks, and assuming that the networks are fixed within those windows.
Alternatively, one could estimate $W_{t}$ in time windows by setting $%
\omega_{t}=1[\underbar{t}\leq t\leq \bar{t}]$, where $\underbar{t}$ and $%
\bar{t}$ are the start and end of the time window for which $W_{t}$ is
estimated. In this case, the minimum effective window length $\bar{t}-%
\underbar{t}$ can be computed as we discuss in Section 3.1. In the context
of DSGE models with time-varying parameters, \citet{kapetanios2019time}
suggests a Gaussian kernel with positive weights throughout the entire
sample. As in nonparametric regression with smooth kernel weights, it also
assumes that the network evolves slowly over time. We further discuss this
strategy in the estimation section, and it is implemented in the empirical
application section below.

\section{Implementation}

We now transition from our core identification results to their practical
implementation. In practice, Ordinary Least Squares (OLS) can only be used to
estimate $\theta $ if $T\gg N$, which is in practice unlikely to be met, as
this is a high-dimensional problem.   Our preferred approach makes use of penalized estimation techniques that
can be used for any given $T$. More specifically, we make use of the
Adaptive Elastic Net GMM (\citealp{CanerZhang2014}), which is based on the
penalized GMM objective function. Given the identification results presented
in Section \ref{sec:Identification}, the population moments used in forming
the GMM objective function will be satisfied at the true parameter vector.

After setting out the estimation procedure, we showcase the method using
Monte Carlo simulations based on stylized and real-world network structures.
In each case, we take a fixed network structure $W_{0}$, and simulate panel
data as if the data generating process were given by the model in (\ref%
{model motivation}). We apply the method to the simulated panel data to
recover estimates of all elements in $W_{0}$, as well as the endogenous and exogenous
social effect parameters.

\subsection{Estimation\label{sec:estimation}}

The parameter vector to be estimated is high-dimensional: $\theta =\left(
W_{12},\dots ,W_{N,N-1},\rho ,\gamma ,\beta \right) ^{\prime }\in \mathbb{R}%
^{m}$, where $m=N\left( N-1\right) +3$ and $W_{ij}$ is the $(i,j)$-th
element of the $N\times N$ social interactions matrix $W_{0}$. 
\textcolor{black}{To be clear, in a network with $N$ individuals, there are
$N(N-1)$ potential interactions because an individual could interact with
everyone else but herself (which would violate Assumption A1). As a
consequence, even with a modest $N$, there are many more parameters to
estimate, and $m$ is large. For example, a network with $N=50$ implies more
than 2,000 parameters to estimate. While we consider $N$ (and thus
$m$) fixed, we still refer to $\theta$ as high-dimensional.} OLS estimation
requires $m\ll NT(\Rightarrow N\ll T)$, so many more time periods than
individuals: a requirement often met in finance data sets (%
\citealp{vanVliet2018}) 
\textcolor{black}{or in other fields (see, e.g.,
Section 4.2 in \citealp{rothenhausler2015})}. Instead, to estimate a large
number of parameters with limited data, we utilize high-dimensional
estimation methods, which are the focus of a rapidly growing literature.

Sparsity is a key assumption underlying many high-dimensional estimation
techniques. In the context of social interactions, we say that $W_{0}$ is
sparse if $\tilde{m}$, the number of non-zero elements of $W_{0}$, is such
that $\tilde{m}\ll NT$. The notion of sparsity thus depends on the number of
time periods. Sparsity corresponds to assuming that individuals influence or
are influenced by a small number of others, relative to the overall size of
the potential network and the time horizon in the data. As such, sparsity is
typically \emph{not} a binding constraint in social networks analysis.%
\footnote{%
Common stylized networks are sparse, such as the star, lattice (each
individual is a source of spillover only to one other individual), or
interactions in pairs, triads or small groups (\citealp{DeGiorgietal2010}%
). Real-world economic networks are also sparse. The sparsity in 
\textit{AddHealth} friendship network is around $98$\%. Sparsity of the
production networks in the US\ is above $99$\% (\citealp{Atalayetal2011}).}

In the estimation of sparse models, the \textquotedblleft effective number
of parameters\textquotedblright\ (or \textquotedblleft effective degrees of
freedom\textquotedblright ) relates to the number of variables with non-zero
estimated coefficients (\citealp{TibshiraniTaylor2012}). In the context of
the current social network model, this is equivalent to $m$ parameters,
where $m=dN(N-1)+K$ and $d$ is the network density defined as $\tilde{m}/(N(N-1))$. The Adaptive Elastic Net
GMM estimator presented by \citet{CanerZhang2014} converges at a rate of $%
\sqrt{NT/\tilde{m}}=\sqrt{NT/[dN(N-1)+K]}=O(\sqrt{T/(dN)})$ (see remark 7 in %
\citealp{CanerZhang2014}). Hence the quality of the large sample results
relies on a comparison between $T$ and $dN$. In line with this, we thus
require $NT\gg dN(N-1)+K$. For example, in the high-school network of %
\citet{Coleman1964} that is part of our simulation exercise, $N=70$ and $%
d=0.076$. Assuming $K=3$, $N(N-1)\times d+3=370.1$.\footnote{%
As pointed out by a referee, variation in $x$ will also matter for
estimation precision. This is reflected in the asymptotic distribution for
this estimator, shown later in this subsection.}

Finally, to reiterate, our identification results themselves do \emph{not}
depend on the sparsity of networks. In particular, Assumptions (A1)-(A6) 
\emph{do not} impose restrictions on the number of links in $W_{0}$, or $%
\tilde{m}$.\footnote{%
If $N\rightarrow \infty $, Assumption (A2) would imply vanishing $%
(W_{0})_{ij}$ entries. As highlighted previously, we consider $N$ to be
fixed, in line with many practical applications. Furthermore, Assumption
(A2) is used to represent inverse matrices as Neumann series in our
identification results. What is necessary for this to hold is that a
sub-multiplicative norm on $\rho W$ be less than one. Here we use a specific
norm (i.e., the maximum row-sum norm), but other (induced) norms are also
possible (i.e., the 2-norm or the 1-norm) (see \citealp{HornJohnson2013},
Chapter 5.6).} The identification results presented in Section \ref%
{sec:Identification} apply more broadly and irrespective of the estimation
procedure.

Our preferred approach estimates the interaction matrix in the reduced form
while penalizing and imposing sparsity on the structural object $W_{0}$. 
\textcolor{black}{We impose sparsity and penalization in the structural-form matrix $W_{0}$ because this is a weaker requirement than imposing sparsity and
penalization in the reduced-form matrix $\Pi _{0}$.}\footnote{%
\textcolor{black}{Note that even if $W$ is sparse, $\Pi$ may not be sparse.}
In Appendix C.1, we show that $[\Pi _{0}]_{ij}=0$ if, and only if, there are no paths
between $i$ and $j$ in $W_{0}$, so the pair is not connected. So,
sparsity in $\Pi _{0}$ is understood as $W_{0}$ being ``sparsely connected'',
which is a stronger assumption than sparsity in $W_{0}$.} To do so, we use
the Adaptive Elastic Net GMM (\citealp{CanerZhang2014}), which is based on
the penalized GMM objective function,%
\begin{equation}
G_{NT}(\theta ,p)\equiv g_{NT}\left( \theta \right) ^{\prime
}M_{T}g_{NT}\left( \theta \right) +p_{1}\sum_{\substack{ i,j=1  \\ i\neq j}}%
^{N}\left\vert W_{i,j}\right\vert +p_{2}\sum_{\substack{ i,j=1  \\ i\neq j}}%
^{N}\left\vert W_{i,j}\right\vert ^{2}  \label{eq:enetGMMs1}
\end{equation}%
where $\theta =(W_{1,2},\dots ,W_{N,N-1}\,\rho ,\gamma ,\beta )^{\prime }$
with dimension $m=N(N-1)+3$, and $p_{1}$ and $p_{2}$ are the penalization
terms. The term $g_{NT}\left( \theta \right) ^{\prime }M_{T}g_{NT}\left(
\theta \right) $ is the unpenalized GMM objective function with moment
conditions based on orthogonality between the structural disturbance term
and the covariates: $g_{NT}\left( \theta \right) =\sum_{t=1}^{T}\left[
x_{1t}e_{t}(\theta )^{\prime }\;\cdots \;x_{Nt}e_{t}(\theta )^{\prime }%
\right] ^{\prime }$, $e_{t}(\theta )=y_{t}-\rho Wy_{t}-\beta x_{t}-\gamma
Wx_{t}$. There are $q\equiv N^{2}$ moment conditions since $x_{it}$ is
orthogonal to $e_{jt}$ for each $i,j=1,\dots ,N$. Hence, the GMM weight
matrix $M_{T}$ is of dimension $N^{2}\times N^{2}$, symmetric, and positive
definite. For simplicity, we use $M_{T}=I_{N^{2}\times N^{2}}$. Note that if 
$x_{t}$ is econometrically endogenous, one can also exploit moment
conditions with respect to available instrumental variables.\footnote{%
For expositional ease, we describe estimation in the context of the reduced-form model (\ref{eq2:modelRF}), thereby abstaining from individual fixed or
correlated effects. As the GMM estimator uses moments between the structural
disturbance terms and covariates, this endogeneity is built into the
estimation procedure.} Given the identification results presented in Section %
\ref{sec:Identification}, if $\theta \neq \theta _{0}$ and does not belong
to the identified set, then $\Pi (\theta )\neq \Pi (\theta _{0})$.
Consequently, the populational version of the GMM objective function is
uniquely minimized at the true parameter vector $\theta _{0}$.

The penalization terms in Equation (\ref{eq:enetGMMs1}) are what makes this
different from a standard GMM problem. The first term, $p_{1}\sum_{{%
i,j=1,i\neq j}}^{N}\left\vert W_{i,j}\right\vert $, penalizes the sum of the
absolute values of $W_{ij}$, i.e., the sum of the strength of links, for all
node-pairs. Depending on the choice of $p_{1}$, some $W_{i,j}$'s will be
estimated as exact zeros. A larger share of parameters will be estimated as
zeros if $p_{1}$ increases. The second term, $p_{2}\sum_{{i,j=1,i\neq j}%
}^{N}\left\vert W_{i,j}\right\vert ^{2}$, penalizes the sum of the square of
the parameters. This term has been shown to provide better model-selection
properties, especially when explanatory variables are correlated (%
\citealp{ZouZhang2009}). The first-stage estimate is%
\begin{equation}  \label{eq:1ststage}
\tilde{\theta}(p)=(1+p_{2}/T)\cdot \underset{\theta \in \mathbb{R}^{m}}{\arg
\min }\;\;G_{NT}(\theta ,p)
\end{equation}%
where $(1+p_{2}/T)$ is a bias-correction term also used by %
\citet{CanerZhang2014}.

Implementing the numerical optimization embedded in Equation %
\eqref{eq:1ststage} is computationally challenging, as $m=N(N-1)+3$ may
entail a large number of function arguments. We instead implement the
following modification to use fast Least-Angle Regression (LARS) algorithms (%
\citealp{efron2004least}). For any given $\rho $, $\beta$, and $\gamma $,
the expression for $e_{t}(\theta )$ is linear in $W$:%
\begin{equation*}
e_{t}(\theta )=y_{t}-x_{t}\beta -W(\rho y_{t}+x_{t}\gamma )\;\;=\;\;\tilde{y}%
_{it}(\beta )-W\tilde{x}_{t}(\rho ,\gamma )
\end{equation*}%
where $\tilde{y}_{it}(\beta )\equiv y_{t}-x_{t}\beta $ and $\tilde{x}%
_{t}(\rho ,\gamma )\equiv \rho y_{t}+x_{t}\gamma $ and, following the
strategy above, is instrumented with $x_{t}$. This motivates a two-step
optimization routine: 
\begin{equation*}
\underset{\theta \in \Theta =\Theta _{1}\times \Theta _{2}}{\min }%
G_{NT}(\theta ,p)=\underset{(\rho ,\beta ,\gamma )\in \Theta _{1}}{\min }\;\;%
\left[ \underset{W_{ij}\in \Theta _{2}}{\min }G_{NT}(\theta ,p)\right] ,
\end{equation*}%
where the expression in brackets has a computationally efficient solution
through the LARS algorithm. The numerical optimization is then subsequently
conducted over the parameter space of $(\rho ,\beta ,\gamma )$ only. We also
impose row-sum normalization. Details of the implementation are expanded in
Appendix Subsection C.2.

A second (adaptive) step provides improvements by re-weighting the
penalization by the inverse of the first-step estimates (\citealp{Zou2006}):%
\begin{equation}
\tilde{\theta}^*(p)=\left( 1+p_{2}/T\right) \cdot \underset{\theta \in 
\Theta}{\arg \min }\;\left\{ g_{NT}\left( \theta \right) ^{\prime
}M_{T}g_{NT}\left( \theta \right) +p_{1}^{\ast }\sum_{\substack{ \{i,j:%
\tilde{W}_{ij}\neq 0,  \\ i,j=1,\dots ,N,  \\ i\neq j\}}}\frac{|W_{i,j}|}{|%
\tilde{W}_{i,j}|^{c }}+p_{2}\sum_{\substack{ \{i,j:\tilde{W}_{ij}\neq 0,  \\ %
i,j=1,\dots ,N,  \\ i\neq j\}}}\left\vert W_{i,j}\right\vert ^{2}\right\} ,
\label{eq:enetGMMs2}
\end{equation}%
where $\tilde{W}_{i,j}$ is the $\left( i,j\right) $-th element of the
first-step estimate of $W$. We follow \citet{CanerZhang2014} and set $c =2.5$%
. If $\tilde{W}_{i,j}<0.05$, we set $\tilde{W}_{i,j}=0.05$. This ensures
that the second-stage estimates can be non-zero even if the first-stage estimates
were zero or small. The computational improvement -- described above for the
first-stage estimator -- is also applied in the adaptive stage.

As a third and final step, we fix the support of $\tilde{\theta}^{\ast }(p)$%
, $\mathcal{S}=\{\rho ,\beta ,\gamma \}\cup \{W_{ij}:\tilde{W}_{ij}^{\ast
}\neq 0\}$ and estimate the final parameters without penalization. This
takes as arguments only the elements of $\tilde{\theta}^{\ast }(p)$ that
were estimated as non-zero in the adaptive step. In essence, this step boils
down to a standard GMM approach, 
\begin{equation}
\hat{\theta}_{\mathcal{S}}(p)=\underset{\theta \in \mathcal{S}}{\arg \min }%
\;\left\{ g_{NT}\left( \theta \right) ^{\prime }M_{T}g_{NT}\left( \theta
\right) \right\}.  \label{eq:enetGMMs3}
\end{equation}%
Importantly, \citet{CanerZhang2014} show that the third-step estimator is
asymptotically normal, with a known and easy-to-compute distribution, 
\begin{equation*}
\delta ^{\prime }\left[ \big(\hat{G}^{\prime }M_{T}\hat{G}\big)^{-1}\cdot %
\big(\hat{G}^{\prime }M_{T}\Omega M_{T}\hat{G}\big)\cdot \big(\hat{G}%
^{\prime }M_{T}\hat{G}\big)^{-1}\right] ^{1/2}\cdot \sqrt{NT/\tilde{m}}\cdot
(\hat{\theta}_{\mathcal{S}}-\theta _{0})\overset{d}{\longrightarrow }N(0,1),
\end{equation*}%
where $\hat{G}\equiv \hat{G}(\hat{\theta})=\nabla g_{NT}(\theta )$ and $%
\Omega \equiv E[g_{NT}(\theta )g_{NT}(\theta )^{\prime }]$.\footnote{%
This applies in the case of small $p_{2}$. In the case of large $p_{2}$,
the asymptotic distribution is pre-multiplied by $K_{n}=\frac{I+p_{2}\left[ 
\hat{G}(\hat{\theta})^{\prime }\hat{\Omega}^{-1}\hat{G}(\hat{\theta})\right]
^{-1}}{1+p_{2}/{NT}}$. See Theorem 4 of \citet{CanerZhang2014}.} This allows
us to conduct hypothesis testing and inference on the $\rho $, $\beta $, $%
\gamma $ and the non-zero elements of $W$.

We write $p=(p_{1},p_{1}^{\ast },p_{2})$ as the final set of penalization
parameters. Conditional on $p$, the estimate of the procedure is $\hat{\theta%
}(p)$. As in \citet[p.
35]{CanerZhang2014}, the penalization parameters $p$ are chosen by the BIC
criterion. This balances model fit with the number of parameters included in
the model.\footnote{\textcolor{black}{Following \citealp{CanerZhang2014},}
the choice of $p$, which we denote as $\hat{p}$, is the one that minimizes 
\begin{equation*}
\text{BIC}(p)=\log \left[ g_{NT}\Big(\hat{\theta}(p)\Big)^{\prime
}M_{T}g_{NT}\Big(\hat{\theta}(p)\Big)\right] +A\Big(\hat{\theta}(p)\Big)%
\cdot \frac{\log T}{T}
\end{equation*}%
where $A\Big(\hat{\theta}(p)\Big)$ counts the number of non-zero
coefficients among $\{W_{1,2},\dots ,W_{N,N-1}\}$, and larger than a
numerical tolerance, which we set at $10^{-5}$. See also %
\citet{zouhastietibshirani2007}.}

In Appendix C.2, 
we provide further implementation details, including the choice of initial
conditions. Of course, other estimation methods are available, and our
identification results do not hinge on any particular estimator. Our aim is
to demonstrate the practical feasibility of using the Adaptive Elastic Net
estimator rather than claim it is the optimal estimator.\footnote{%
See the alternative approaches of \citet{GautierTsybakov2014}, %
\citet{Manresa2016}, \citet{lamsouza2016}, and \citet{GautierRose2016}.} 

\subsection{Simulations}

\label{subsec:simulation}

We showcase the method using Monte Carlo simulations. We describe the
simulation procedures, results, and robustness checks in more detail in 
Appendix D.1. Here, we just provide a brief overview to highlight how well
the method works to recover social networks even in relatively short panels.

For each simulated network, we take a fixed network structure $W_{0}$ and
simulate panel data as if the data generating process were given by (\ref%
{model motivation}). We then apply the method to the simulated panel data to
recover estimates of all elements in $W_{0}$, as well as the endogenous and
exogenous social effect parameters ($\rho _{0}$, $\gamma _{0}$). Our result
identifies entries in $W_{0}$ and so naturally recovers links of varying
strength. It is long recognized that link strength might play an important
role in social interactions (\citealp{Granovetter1973}). Data limitations
often force researchers to postulate some ties to be weaker than others
(say, based on interaction frequency). In contrast, our approach identifies
the continuous strength of ties, $W_{0,ij}$, where $W_{0,ij}>0$ implies node 
$j$ influences node $i$.

The stylized networks we consider are a random network and a political
party network in which two groups of nodes each cluster around a central
node. The real-world networks we consider are the high-school friendship
network in \citet{Coleman1964} from a small high school in Illinois, and one
of the village networks elicited in \citet{banerjeeetal2013} from rural
Karnataka, India.

Summary statistics for each network are presented in 
\textcolor{bluez}{Panel
A} of \textcolor{bluez}{Table A1}. 
The four networks differ in their size, complexity, and the relative
importance of strong and weak ties. For example, the Erd\"{o}s-Renyi network
only has strong ties, while the political party network has twice as many strong
as weak ties. For the real-world networks, the mean out-degree distributions
are higher, so the majority of ties are weak, with the high school network
having around $80$\% of its edges being weak ties. All four networks are
also sparse.

For the stylized networks, we assess the performance of the estimator for a
fixed network size, $N=30$. We simulate the real-world networks using
non-isolated nodes in each (so $N=70$ and $65$ respectively).\footnote{%
Like \citet{Bramoulleetal2009}, we exclude isolated nodes because they do
not conform to row-sum normalization.}

We evaluate the procedure over varying panel lengths (starting from short
panels with $T=5$), using various metrics. Given our core contribution is to
identify the social interactions matrix, we first examine the proportion of
true zero entries in $W_{0}$ estimated as zeros and the proportion of true
non-zero entries estimated as non-zeros. A global perspective of the
proximity between the true and estimated networks can be inferred from their
average absolute distance between elements. This is the mean absolute
deviation of $\hat{W}$ and $\hat{\Pi}$ relative to their true values,
defined as $MAD(\hat{W})=\frac{1}{N(N-1)}\sum_{i,j,i\neq j}|\hat{W}%
_{ij}-W_{ij,0}|$ and $MAD(\hat{\Pi})=\frac{1}{N(N-1)}\sum_{i,j,i\neq j}|\hat{%
\Pi}_{ij}-\Pi _{ij,0}|$. As these metrics are closer to zero, more of the
elements in the true matrix are correctly estimated. Finally, we evaluate
the procedure's performance using averaged estimates of the endogenous
and exogenous social effect parameters, $\hat{\rho}$ and $\hat{\gamma}$. In
keeping with the estimation strategy in our empirical application, we report
unpenalised GMM.

\subsection{Results}

\textcolor{bluez}{Figure A1} shows the simulation results as evaluated using
the six metrics described above. \textcolor{bluez}{Panel A} shows that for
each network, the proportion of zero entries in $W_{0}$ correctly estimated
as zeros is above $95$\% even when $T=5$. The proportion approaches $100$\%
as $T$ grows. Conversely, \textcolor{bluez}{Panel B} shows the proportion of
strong non-zero entries estimated as non-zeros (defined as larger than 0.3)
is also high for a small $T$. It is above $70$\% from $T=5$ for the Erd\"{o}s-Renyi network, being at least $85$\% across networks for $T=25$, and
increasing as $T$ grows. As discussed above, the Adaptive Elastic Net
estimator may only recover strong edges well, and not necessarily the weaker
ones, due to the well-known issue with shrinkage estimators that they tend to
shrink small parameters to zero. We return to this issue below.

\textcolor{bluez}{Panels C and D} show that for each simulated network, the
mean absolute deviation between estimated and true networks for $\hat{W}$
and $\hat{\Pi}$ falls quickly with $T$ and is close to zero for large sample
sizes. Finally, \textcolor{bluez}{Panels E and F} show that biases in the
endogenous and exogenous social effects parameters, $\hat{\rho}$ and $\hat{%
\gamma}$, also fall in $T$ (we do not report the bias in $\hat{\beta}$ since
it is close to zero for all $T$). The fact that biases are not zero is as
expected for a small $T$, being analogous to well-known results for
autoregressive time series models.%
\textcolor{black}{\footnote{\textcolor{black}{The bias in spatial auto-regressive models with a small
number of observations \emph{even when the network is observed} is similarly
documented by
\citet{Smith2009}, \citet{NeumanMizruchi2010}, \citet{Wangetal2014}, and others.}}}

\textcolor{bluez}{Figure A2} shows that, as $T$ increases, the procedure detects weaker links. The figure also shows that, with low sample sizes,
weak edges are generally not detected. This pattern is consistent with the
well-known fact that small parameters are likely shrunk to zero due to the
penalization (\citealp{BelloniChernozhukov2011b}). The absence of weak edges
also implies that the strength of strong edges may also be over-estimated,
since rows are normalized to one. In \textcolor{bluez}{Panel A}, we show the
distribution of the estimates of $\hat{W}_{ij}$, with $T=25$ and for the
high-school network. We show the distribution for the five most common
values of $W_{0,ij}$. We find that most edges weaker than $.5$ are not
detected; edges with a strength of $.75$ are substantially more likely to be
estimated as non-zeros. When they are detected as non-zeros, they are more
likely to be over-estimated. When we estimate $W$ with $T=150$, %
\textcolor{bluez}{Panel B} shows that virtually all edges with strength
greater than $.5$ are estimated as non-zeros, and most edges with strength $%
.375$ are also detected. We further see are more continuous distribution of
estimates of edge strength. Only edges smaller than $.25$ are not detected. %
\textcolor{bluez}{Panels C and D} show a similar conclusion for the village
network.

\textcolor{bluez}{Figure A3} shows the simulated and actual networks under $%
T=100$ time periods. The network size is set to $N=30$ in the two stylized
networks, $N=70$ for the high school network, and $N=65$ for the village
household network. In comparing the simulated and true networks, %
\textcolor{bluez}{Figure A3} distinguishes between kept edges, added edges,
and removed edges. Kept edges are depicted in blue:\ these links are
estimated as non-zero in at least $5$\% of the iterations, and are also
non-zero in the true network. Added edges are depicted in green:\ these
links are estimated as non-zero in at least $5$\% of the iterations but the
edge is zero in the true network. Removed edges are depicted in red:\ these
links are estimated as zero in at least $5$\% of the iterations but are
non-zero in the true network. \textcolor{bluez}{Figure A3} further
distinguishes between strong and weak links:\ strong links are shown as
solid edges ($W_{0,ij}>.3$), and weak links are shown as dashed edges.

\textcolor{bluez}{Panel A of Figure A3} compares the simulated and true Erd\"{o}s-Renyi networks. All links are recovered. For the political party network,
Panel B shows that all strong edges are correctly estimated. However, around
half the weak edges are recovered (blue dashed edges), with the others being
missed (red dashed edges). As discussed above, this is not surprising given
that shrinkage estimators force small non-zero parameters to zero. Hence,
a larger $T$ is needed to achieve similar performance to the other
simulated networks in terms of detecting weak links. For the more complex
and larger real-world networks, \textcolor{bluez}{Panel C} shows that in the
high-school network, the strong edges are all recovered. However, around half
the weak edges are missing (red dashed edges), and there are a relatively
small number of added edges (green edges):\ these amount to $87$ edges, or
approximately $1.9 $\% of the $4,534$ zero entries in the true high-school
network. A similar pattern of results is seen in the village network in %
\textcolor{bluez}{Panel D}: the strong edges are all recovered, and here the
majority of weak edges are also recovered.

\textcolor{bluez}{Panel B of Table A1} compares the network- and node-level
statistics calculated from the recovered social interactions matrix $\hat{W}$
to those in Panel A from the true interactions matrix $W_{0}$. The random Erd\"{o}s-Renyi network is perfectly recovered. For the political party network,
the number of recovered edges is slightly lower than in the true network ($41$
vs. $45$), and all edges are classified as strong. The mean of the in- and
out-degree distributions are slightly lower in the recovered network, and
all three nodes with the highest out-degree are correctly captured (nodes $1$%
, $11$, and $28$), include both party leaders (individuals $1$ and $11$%
). We then move to discussing the performance in the two real-world
networks. In the high-school network, 30\% of all edges are correctly
recovered, and they are all strong edges. As already noted in %
\textcolor{bluez}{Figure A2}, weak edges are not well estimated in the high-school network. This draws two main consequences. First, the average in- and
out- degrees are smaller in the recovered network relative to the true
network. Second, we over-estimate the number of strong edges ($61$ vs. $113$%
). This is a downside of row-sum normalization: because some weak edges get
estimated as zeros, the non-zeros are over-estimated so that the row adds to
one. We do, however, recover all three individuals with the highest
out-degree. Finally, in the village network, half the edges are recovered.
The same phenomena of underestimating weak and overestimating strong
edges are again observed. We again recover the three households with the
highest out-degree (nodes $16$, $35$, and $57$).

In the Appendix, we show the robustness of the simulation results to (i)\
varying network sizes and (ii)\ alternative parameter choices and richening the
structure of shocks across nodes. We also demonstrate the gains from using
the Adaptive Elastic Net GMM estimator over alternative estimators, such as
the Adaptive Lasso and OLS.

\section{Application: Tax Competition between US States}

\label{sec:application} We apply our results to shed new light on a classic
social interactions problem:\ tax competition between US states (%
\citealp{Wilson1999}). Defining competing ``neighbors'' remains the central
empirical challenge in this literature. Theory provides some guidance on the
issue through two mechanisms driving interactions across jurisdictions:\
factor mobility and yardstick competition.

On factor mobility, \citet{Tiebout1956} first argued that labor and capital
can move in response to differential tax rates across jurisdictions. Factor
mobility leads naturally to the postulated social interactions matrix being
(i) geographic neighbors, given labor mobility, and (ii) jurisdictions with
similar economic or demographic characteristics, given capital mobility (%
\citealp{Caseetal1989}).

Yardstick competition is driven by voters making comparisons between states
to learn about their own politician's quality (\citealp{Shleifer1985}). %
\citet{BesleyCase1995} formalize the idea in a model where voters use taxes
set by governors in other states to infer their own governor's quality.
Yardstick competition leads naturally to the postulated interactions matrix
being ``political neighbors'': states that voters make comparisons to.

In this application, the number of nodes and time periods is relatively
low:\ the data covers mainland US states, $N=48$, for the years 1962-2015, $T=53$%
. Our approach identifies the structure of social interactions among
``economic neighbors'', denoted ${W}_{econ}$. We contrast this against a null
that state taxes are influenced by geographic neighbors, $W_{geo}$, as shown
in Panel A of \textcolor{bluez}{Figure 1A}. With $W_{econ}$ recovered, we
can establish, beyond geography, what predicts the strength of ties between
states and provide fresh insights on drivers of tax competition.

Before using the real data, we confirm the estimator's performance when the
true network is $W_{geo}$ in simulated settings. In line with the findings
of the previous section, Panel B of Figure 1A shows that (i) the procedure recovers strong edges frequently (more specifically, $89$\% of the true
strong edges are recovered) and (ii)\ performance deteriorates when recovering
weak edges ($72$\%). In all cases, the estimator does not add edges not in
the true network. This suggests recovered economic links that deviate from
geographic links may indeed carry signal, while weak links may not get
detected. Finally, the estimator for $\rho $ and $\gamma $ may show some
downward bias with the sample sizes in the application, consistent with the
simulations in \textcolor{bluez}{Appendix Figure A1}.

\subsection{Data and Empirical Specification}

We denote state tax liabilities for state $i$ in year $t$ as $\tau _{it}$,
covering state taxes collected from real per capita income, sales, and
corporate taxes. We extend the sample used by \citet{BesleyCase1995}, that
runs from 1962-1988 ($T=26$).\footnote{\citet{BesleyCase1995} test their
political agency model using a two-equation set-up: (i) on gubernatorial
re-election probabilities and (ii)\ on tax setting. Our application focuses
on the latter because this represents a social interaction problem. They use
two tax series: (i) TAXSIM data (from the NBER), which runs from 1977-1988 and
(ii)\ state tax liabilities series constructed from data published annually
in the Statistical Abstract of the US, that runs from 1962-1988. All their
results are robust to either series. We extend the second series.} The
outcome considered, $\Delta \tau _{it}$, is the change in tax liabilities
between years $t$ and $(t-2)$ because it might take a governor more than a
year to implement a tax program. Their model implies a standard social
interactions specification for the tax setting behavior of state governors:%
\begin{equation}
\Delta \tau _{it}=\rho_0 \sum_{j=1}^{N}W_{0,ij}\Delta \tau
_{jt}+\sum_{k=1}^{K}\sum_{j=1}^{N}W_{0,ij}x_{jkt}\gamma
_{0,k}+\sum_{k=1}^{K}\beta _{0,k}x_{ikt}+\alpha _{i}+\alpha _{t}+\epsilon
_{it},  \label{BC Estimation}
\end{equation}%
where $k=1,\dots ,K$ are the covariates for state $i$ in period $t$.
Tax setting behavior is determined by (i)\ endogenous social effects arising
through neighbors' tax changes ($\sum_{j=1}^{N}W_{0,ij}\Delta \tau _{jt});$
(ii) exogenous social effects arising through neighbors' characteristics ($%
\sum_{j=1}^{N}W_{0,ij}x_{jkt})$; and (iii)\ state $i$'s characteristics ($%
x_{ikt} $), including income per capita, the unemployment rate, and the
proportions of young and elderly in the state's population. All
specifications include state and time effects ($\alpha _{i},$ $\alpha _{t}$%
). Due to the inclusion of the time effects $\alpha _{t}$, we normalize the
rows of $W_{econ}$ to one. \textcolor{bluez}{Table A6} presents descriptive
statistics for the \citet{BesleyCase1995} sample and our extended sample.

Much of the earlier literature on tax competition has focused on endogenous
social effects and ignored exogenous social effects by setting $\gamma =0$.
Our identification result allows us to relax this restriction and estimate
the full typology of social effects described by \citet{Manski1993}. This is
important because only endogenous social effects lead to social multipliers
from tax competition, and they are crucial to identify as they can lead to a
race-to-the-bottom or suboptimal public goods provision (%
\citealp{BrennanBuchanan1980}; \citealp{Wilson1986}; %
\citealp{OatesSchwab1988}).

\subsection{Preliminary Findings}

\textcolor{bluez}{Table 1} presents our preliminary findings and comparison
to \citet{BesleyCase1995}.  Throughout this section, we refer to \lq\lq OLS estimates\rq\rq\ as the estimates of the main equation (\ref{BC Estimation}) when $W_0$ is postulated as $W_{geo}$ or $W_{econ}$ and $\rho_0$, $\gamma_{0,k}$, and $\beta_{0,k}$ are estimated by OLS.\footnote{We postulate that $W$ is $W_{econ}$ obtained by running the procedure in Section 3, retrieving $\hat{W}$, and re-running model (\ref{BC Estimation}) with $W=\hat{W}$. For such OLS estimates, we use robust standard errors and ignore the sampling uncertainty in the estimated $W_{econ}$.} \textcolor{bluez}{Column 1} shows those estimates where the postulated social interactions matrix is
based on geographic neighbors, exogenous social effects are ignored and the
panel includes all $48$ mainland states but runs only from 1962-1988 as in %
\citet{BesleyCase1995}. Social interactions influence gubernatorial tax
setting behavior:\ $\widehat{\rho }_{OLS}=.375$. \textcolor{bluez}{Column 2}
shows this to be robust to instrumenting neighbors' tax changes using the
instrument set proposed by \citet{BesleyCase1995}:\ namely, instrumenting
for $\Delta \tau _{jt}$ using geographic neighbors' lagged changes in per capita income and unemployment rates. These instruments are in the spirit of
using exogenous social effects to instrument for neighbors' tax changes. $%
\widehat{\rho }_{2SLS}$ is more than double the magnitude of $\widehat{\rho }%
_{OLS}$, suggesting tax setting behaviors across jurisdictions are strategic
complements.

\textcolor{bluez}{Columns 3 and 4} replicate both specifications over the
longer sample, confirming Besley and Case's (1995) \nocite{BesleyCase1995}
finding on social interactions to be robust. $\widehat{\rho }_{2SLS}$ is
again more than double $\widehat{\rho }_{OLS}$. The result in %
\textcolor{bluez}{Column 4} implies that for every dollar increase in the
average tax rates among geographic neighbors, a state increases its own
taxes by $64$ cents. This is similar to the headline estimate of %
\citet{BesleyCase1995}.\footnote{%
Nor is the magnitude very different from earlier work examining fiscal
expenditure spillovers. For example, \citet{Caseetal1989} find that US state
governments' levels of per-capita expenditure are significantly impacted by
the expenditures of their neighbors, with
a one-dollar increase in neighbors' expenditures leading to a seventy-cent increase in
own-state expenditures.}

\subsection{Endogenous and Exogenous Social Interactions}

We now move beyond much of the earlier literature to first establish whether
there are endogenous and exogenous social interactions in tax setting. We
first focus on the endogenous and exogenous social interaction parameters,
and in the next subsection, we detail the identified social interactions
matrix, $\hat{W}_{econ}$. To do so, we need to modify slightly how we
instrument for neighbors' tax changes:\ the instrument set proposed by %
\citet{BesleyCase1995} based on geographic neighbors' characteristics will
generally be weaker when estimating the full specification in (\ref{BC
Estimation}) because the instruments are now directly controlled for in (\ref%
{BC Estimation}). We use an Adaptive Elastic Net GMM approach, which
instruments neighbors' tax changes with the characteristics of all other
states. With the inclusion of endogenous and exogenous social effects, this
represents our preferred approach.

\textcolor{bluez}{Columns 1 and 2 of Table 2} show OLS\ and GMM estimates
for $\rho $ obtained from the Adaptive Elastic Net procedure, where we still
set $\gamma =0$ but use our preferred instrument set:\ $\widehat{\rho }%
_{GMM}=.709>$ $\widehat{\rho }_{OLS}=.649$. 
\textcolor{bluez}{Columns 3 and
4} estimate the full model in (\ref{BC Estimation}). Relative to when
exogenous social effects are assumed away ($\gamma =0$), the OLS\ and GMM
estimates of $\rho $ are smaller, but we continue to find robust evidence of
endogenous social interactions in tax setting. The specification in %
\textcolor{bluez}{Column 4} represents our preferred one: $\widehat{\rho }%
_{GMM}=.452$ (with a standard error of $.132$). This value meets the
requirements on $\rho $ in Corollaries 3 and 4 for global identification.%
\footnote{\textcolor{bluez}{Table A7} shows the full set of exogenous social
effects (so \textcolor{bluez}{Columns 1 and 2} refer to the same
specifications as \textcolor{bluez}{Columns 3 and 4 in Table 2}). Exogenous
social effects operate through economic neighbors' unemployment rate,
demographic characteristics, and their governor's age.}

\subsection{Identified Social Interactions Matrix}

\textcolor{bluez}{Figure 1B} shows how the structures of economic ($\hat{W}%
_{econ}$) and geographic networks ($W_{geo}$) differ, where connected edges
imply that two states are linked in at least one direction (state $i$
causally impacts state taxes in $j$, and/or \emph{vice versa}). This
comparison makes clear whether all states geographically adjacent to $i$
matter for its tax setting behavior and whether there are non-adjacent
states that influence its tax rate.

The left-hand panel of \textcolor{bluez}{Figure 1B}\ shows the network of
geographic neighbors (whose edges are colored blue), onto which we
superimpose edges \emph{not} identified as links in $W_{econ}$;\ dropped
edges are in red. The vast majority of geographically adjacent states are
irrelevant for tax setting behavior. The right-hand panel of %
\textcolor{bluez}{Figure 1B}\ adds new edges identified in $\hat{W}_{econ}$
that are \emph{not} part of $W_{geo}$;\ these added edges are in green and
represent non-geographically adjacent states through which social
interactions occur. For tax-setting behavior, economic distance is
imperfectly measured if we simply assume interactions depend only on
physical distance.

\textcolor{bluez}{Table 3} summarizes the comparison between $W_{geo}$ and $%
\hat{W}_{econ}$. $W_{geo}$ has $214$ edges, while $\hat{W}_{econ}$ has only $%
49$. $\hat{W}_{econ}$ and $W_{geo}$ have $9$ edges in common. Hence, the
vast majority of geographical neighbors ($205/214=96$\%) are not relevant
for tax setting. $\hat{W}_{econ}$ has $40$ edges that are absent in $W_{geo}$%
, and the identified social interactions are more spatially dispersed than
under the assumption of geographic networks. This is reflected in\ the far
lower clustering coefficient in $\hat{W}_{econ}$ than in $W_{geo}$ ($.042$
versus $.419$).\footnote{%
The clustering coefficient is the frequency of the number of fully connected
triplets over the total number of triplets. 
}

\subsection{Links and Reciprocity}

Our estimation strategy identifies the continuous strength of links, $%
W_{0,ij}$, where $W_{0,ij}>0$ is interpreted as state $j$ influencing
outcomes in state $i$. This is useful because recent developments in tax
competition theory, using insights from the social networks literature,
suggest links need not be reciprocal (\citealp{JanebaOsterleh2013}).

\textcolor{bluez}{Table 3} reveals that only $12.2$\% of edges in $\hat{W}%
_{econ}$ are reciprocal (all edges in $W_{geo}$ are reciprocal by
definition). Hence, tax competition is both spatially disperse and
asymmetric. In most cases where tax setting in state $i$ is influenced by
taxes in state $j$, the opposite is not true.

Given common time shocks $\alpha _{t}$ in (\ref{BC Estimation}), row-sum
normalization is required and ensures $\sum_{j}W_{0,ij}=1$. Hence, for every
state $i$, there will be at least one economic neighbor state $j^{\ast }$
that impacts it, so $W_{0,ij^{\ast }}>0$. This just reiterates that social
interactions matter. On the other hand, our procedure imposes no restriction
on the derived columns of $\hat{W}_{econ}$. It could be that a state does
not affect any other state. To see this in more detail, the final rows of %
\textcolor{bluez}{Table 3} report the degree distribution across states,
splitting for in-networks and out-networks. In $W_{geo}$, the in-degree is
by construction equal to the out-degree, as all ties are reciprocal. The
greater sparsity of the network of economic neighbors is reflected in the
degree distribution being lower for $\hat{W}_{econ}$ than $W_{geo}$. In $%
\hat{W}_{econ}$, the dispersion of in- and out-degree networks is very
different (as measured by the standard deviation), being nearly nine times
higher for the in-degree. Hence one reason for so few reciprocal ties
being in the economic network is that out-degree network ties are rarely
also in-degree ties.

This asymmetry in $\hat{W}_{econ}$ further suggests that some highly
influential states drive tax setting behavior in other states. To see which
states these are, \textcolor{bluez}{Figure 2} shows a histogram for the
number of out-degree links from states. Twenty states have an out-degree of
zero, so their tax rates have no direct impact on any other state's tax
setting behavior. The most influential states in terms of the highest
out-degree are Alabama (directly impacts tax setting behavior in five
other states) and South Carolina, Pennsylvania, and Montana (which each
directly impact tax setting behavior in four other states). Taking South
Carolina as an example, the four states that it directly impacts 
include its geographic neighbor, Georgia, as well as non-geographic
neighbors Missouri, Montana, and Virginia.

\subsection{Factor Mobility or Yardstick Competition?}

We use $\hat{W}_{econ}$ to shed light on the roles of factor mobility and
yardstick competition in driving tax competition. To do so, we estimate the
factors correlated with the existence of links between states $i$ and $j$ in 
$\hat{W}_{econ}$ relative to $\hat{W}_{geo}$. For state pairs with non-zero
links in either $\hat{W}_{econ}$ or $\hat{W}_{geo}$, we define a dummy
outcome $\hat{W}_{econ,ij}=1$ if a link between states $i$ and $j$ is
estimated under $\hat{W}_{econ}$ and $\hat{W}_{econ,ij}=0$ if a link between
states $i$ and $j$ exists under $\hat{W}_{geo}$ but not under $\hat{W}%
_{econ} $. We examine correlates of links using the following dyadic
regression:%
\begin{equation}
\hat{W}_{econ,ij}=\lambda _{0}+\lambda _{1}X_{ij}+\lambda _{2}X_{i}+\lambda
_{3}X_{j}+u_{ij},
\end{equation}%
estimated using a linear probability model. The elements $X_{ij},$ $X_{i},$
and $X_{j}$ correspond to characteristics of the pair of states ($i,j$), 
state $i$, and state $j$, respectively. Covariates are time-averaged over
the sample period, and robust standard errors are reported.

\textcolor{bluez}{Table 4} presents the dyadic regression results. %
\textcolor{bluez}{Column 1} controls only for the distance between states $i$
and $j$:$\ $this is highly predictive of an economic link between them. This
reflects that the economic network of state $i$ often comprises states are
in the same region, but not necessarily contiguous to state $i$. %
\textcolor{bluez}{Column 2} adds two $X_{ij}$ covariates to capture the
economic and demographic homophily between states $i$ and $j$. GDP homophily
is the absolute difference in the states' GDP per capita. Demographic
homophily is the absolute difference in the share of young people (aged
5-17) plus the absolute difference of the share of elderly people (aged 65+)
across the states. GDP\ homophily does not predict economic ties, whereas
demographic homophily does.

\textcolor{bluez}{Columns 3-5} then sequentially add in several sets of
controls. For labor mobility, we use net state-to-state migration data to
control for the net migration flow of individuals from state $i$ to state $i$
(defined as the flow from $i$ to $j$ minus the flow from $j$ to $i$).%
\footnote{%
We also experimented with alternative measures of labor migration, and
the results were qualitatively the same. State-to-state migration data are based
on year-to-year address changes reported on individual income tax returns
filed with the IRS. The data cover filing years 1991 through 2015 and
include the number of returns filed, which approximates the number of
households that migrated, and the number of personal exemptions claimed, which
approximates the number of individuals who migrated. The data are available
at https://www.irs.gov/statistics/soi-tax-stats-migration-data (accessed
September 2017).} We then add a political homophily variable between states.
For any given year, this is set to one if a pair of states have governors of
the same political party. As this is time-averaged over our sample, this
element captures the share of the sample in which the states have governors
of the same party. Lastly, we include whether state $j$ is considered a tax
haven (and so might have disproportionate influence on other states). Based
on \citet{Findleyetal2012}, the following states are coded as tax havens:
Nevada, Delaware, Montana, South Dakota, Wyoming, and New York.

\textcolor{bluez}{Column 5} shows that with this full set of controls,
distance remains a robust predictor of the existence of economic links
between states. However, the identified economic network highlights
additional significant predictors of tax competition between states:\
political homophily \emph{reduces} the likelihood of a link, suggesting any
yardstick competition driving social interactions occurs when voters compare
their governor to those of the opposing party in other states. Tax haven
states appear to be especially less influential in the tax setting behaviors
of other states. This mirrors what was observed in 
\textcolor{bluez}{Figure
2}, where some of the prominent tax havens -- Nevada, Delaware, and New York
-- were all identified to have zero out-degree links. The relatively weak
influence of tax haven states eases concerns over a race-to-the-bottom in
tax setting behaviors.

\textcolor{bluez}{Column 6} controls for state $i$ and state $j$ fixed
effects. This reinforces the idea that distance and political homophily
correlate to the strength of influence states tax setting has on others (the
tax haven dummy cannot be separately identified in this specification).
Labor mobility between states does not robustly predict the existence of
economic ties.

\subsection{Dynamics}

As in our identification result, our empirical approach has taken the
network structure as fixed over the entire sample period. In the context of
tax competition over our study period, this might be a strong assumption. We
examine the issue in more detail by allowing the estimated $W_{econ}$ matrix
to vary over time by changing the weight placed on observations from any
given time period. More precisely, for any given time period $t$, we weight
observations using a Gaussian kernel with its center varying
period-by-period from $1962$ to $2015$. The variance of the kernel is set
such that $75$\% of the weight is given to the first half of the data (i.e.,
pre-$1988$) when the kernel is centered in $1962$. 
\textcolor{bluez}{Figure
A4} shows the kernel employed as we vary its center: the solid kernel is
centered in $1962$, the start of our sample -- when we place the most weight
on observations from $1962$. The static case considered previously is akin
to using a uniform kernel over all periods. This kernel weighting approach
is outlined in Section \ref{subsec:timevaryingW}. We fully describe the
algorithm in Appendix Section C.2.

We begin by considering time-varying estimates of the endogenous and
exogenous social interaction parameters from the full model in (\ref{BC
Estimation}). The results for the endogenous social interactions parameter
are shown in \textcolor{bluez}{Figure 3}, where the shaded areas are the $95$\%
confidence intervals of the period-by-period estimates. Panel A shows that OLS\
estimates of $\widehat{\rho }$ drift up over time, so the strength of
endogenous interactions increases from around $.35$ in the late 1960s to $.50
$ by the 2010s. In all periods, we can reject the null that the
endogenous social effect is zero. Recall the earlier static estimate was $%
\widehat{\rho }_{OLS}=.375$.

Panel B shows the estimated endogenous social effect when we use GMM based
on the characteristics of all other states as IVs. This also drifts up from
around $.35$ in the late 1960s to $.50$ by the 2010s. In the majority of
periods, we can reject the null of no endogenous social effect.\footnote{%
Standard errors are estimated without imposing the restriction that
parameters vary slowly over time, and fluctuations across periods reflect
variations in the network across periods.}

\textcolor{bluez}{Figure 4} shows the evolution of $\hat{W}_{econ}$ over
time as we center the kernel on different periods, following the same color-coding as in \textcolor{bluez}{Figure 1}. In all periods, geography-based
edges play little role, and over time, the economic network becomes denser.
This highlights\ not only that economic networks for tax competition always
differ starkly from geography-based networks, but that the nature of
economic networks relevant to tax competition has changed steadily over
time.

\textcolor{bluez}{Figure 5} shows how the features of $\hat{W}_{econ}$ evolve as
we place greater weight from early to later periods. For each statistic, we
plot the period-by-period estimate when we center the kernel in any given
 period. The resulting smoothed estimates are then shown. To ease
exposition, networks edges with $W_{0,ij}<1/47$ are removed. This cutoff is
chosen as, in theory, states can only link at maximum with $47$ other
states. \textcolor{bluez}{Panel A} shows the share of edges that are kept
from the previous estimate (centered in the previous period). We see
relatively high stability in $\hat{W}_{econ}$ with the smoothed estimate
suggesting more than $60$\% of edges always being kept from one estimate to
the next, with this stability increasing from the late 1980s.

\textcolor{bluez}{Panel B} shows how the overlap between $\hat{W}_{econ}$
and $W_{geo}$ varies over time, as measured by the share of edges that are
only present in $\hat{W}_{econ}$. There is little overlap between the two
networks over the entire sample. The smoothed estimate suggests that at
least $80$\% of identified edges in $\hat{W}_{econ}$ are never in $W_{geo}$.
The divergence between economic and geographic neighbors becomes starker
from the mid-$1980$s onwards.

\textcolor{bluez}{Panels C and D} show how the clustering and reciprocity of
links in $\hat{W}_{econ}$ vary as we shift the weight to later observations.
Clustering of $\hat{W}_{econ}$ increases from the $1960$s through to the
early $2000$s. Thereafter, social interactions in tax competition become
sparser. We also observe a reversal in the extent to which social
interactions are reciprocal, with reciprocity rising to a peak in the early
1980s -- when $20$\% of ties were reciprocal -- and slowly falling
thereafter.

Taken together, the results suggest the nature of tax competition between
US\ states has changed over time through two mechanisms: (i) the strength of
endogenous social interactions ($\widehat{\rho }$) has increased over time and
(ii) the\ network of states interacted with ($\hat{W}_{econ}$) varies
over time. This has important implications for policy evaluation:\ the same
intervention might have different spillover effects if implemented at
different moments in time due to the evolution of $\widehat{\rho }$ and $%
\hat{W}_{econ}$. We consider this next using counterfactual simulations.

\subsection{Counterfactuals}

We use a counterfactual exercise to contrast how shocks to tax setting in a
given state propagate under $\hat{W}_{econ}$, relative to what would have
been predicted under $W_{geo}$. We do so for both static and dynamic estimates
of $\hat{W}_{econ}$. We focus on South Carolina (SC), a state with one of the
highest out-degree, as shown in \textcolor{bluez}{Figure 2}. We consider a
scenario in which SC exogenously increases its taxes per capita by $10$\%.
We measure the differential change in equilibrium state taxes in state $j$
under the two network structures using the following statistic:%
\begin{equation}
\Upsilon _{j}=\log (\Delta \tau _{jt}|\hat{W}_{econ})-\log (\Delta \tau
_{jt}|W_{geo}),  \label{Impulse}
\end{equation}%
so that positive (negative) values imply equilibrium taxes being higher
(lower) under $\hat{W}_{econ}$.\footnote{%
For $W_{geo}$, we calculate the counterfactual at $\hat{\rho}_{GMM}=.452$,
the endogenous effect parameter estimated in our preferred specification, %
\textcolor{bluez}{Column 4 of Table 2}.}

Starting with the static case, \textcolor{bluez}{Panel A of Figure 6} shows
for each mainland US state the spillover effects through the economic
network of tax competition. This highlights positive spillovers on tax rates
in many states that are not geographic neighbors of SC. 
\textcolor{bluez}{Panel
B} graphs $\Upsilon _{j}$ to make precise how spillovers derived from $\hat{W%
}_{econ}$ diverge from those predicted under $W_{geo}$. In $26$ states, $%
\Upsilon _{j}$ is smaller than $.01$\% because both networks predict
negligible spillovers to those states. In the remaining $22$ mainland
states, there is a wide discrepancy between the equilibrium state tax rates
predicted under $\hat{W}_{econ}$ relative to $W_{geo}$: $\Upsilon _{j}$
varies from $-1$ to $4.03$. The long-run effect in SC itself is also higher
under $\hat{W}_{econ}$ than under $W_{geo}$. The former states that given feedback
effects, the long-run increase in tax rates in SC from a $10$\% increase is $%
11.4$\%, while the geographically based network implies a smaller equilibrium
increase of $10.3$\%.

As $\hat{W}_{econ}$ is spatially more dispersed than $W_{geo}$, the general
equilibrium effects are different under the two network structures. %
\textcolor{bluez}{Table 5} summarizes the general equilibrium implications
for tax inequality under $\hat{W}_{econ}$ and the $W_{geo}$ counterfactual.
The average tax rate increase under $\hat{W}_{econ}$ is three times that
estimated under $W_{geo}$. Moreover, the dispersion of tax rates across
states increases under $\hat{W}_{econ}$ relative to $W_{geo}$. Finally,
assuming interactions are based solely on geographic neighbors, we miss the
fact that many states have relatively small tax increases.

We can repeat the exercise using the dynamically estimated economic network.
Throughout, we calculate the general equilibrium effects of the \emph{same}
policy experiment: SC increasing its taxes per capita by $10$\%. These
general equilibrium effects vary over the sample period because the strength
of social interactions in tax competition vary ($\widehat{\rho }_{GMM}$), as
shown in \textcolor{bluez}{Figure 3}, and identified economic neighbors vary
over time ($\hat{W}_{econ}$), as shown in \textcolor{bluez}{Figure 4}. The
results are summarized in \textcolor{bluez}{Figure 7}. Placing weight on the
early or later part of the sample generates similar changes in average tax
rates and their variance in general equilibrium -- with both being lower
than simulated under the static model. Placing more weight on the middle of
the sample period generates higher changes in average tax rates and their
variance in general equilibrium.

The differential general equilibrium impacts found as we place different
weights across sample observations links to recent discussions on the external
validity of internally valid causal impacts based on micro-evidence. While
the earlier literature has emphasized the potential interaction of treatment
effects with aggregate shocks (\citealp{rosenzweig2020external}) or how
behavioral responses to social insurance policies vary over the business
cycle (\citealp{kroft2016should}), our analysis provides another explanation
for the changing impacts of policies where social interactions determine
behavior:\ changes in the strength of social interactions and the network of
economic interactions.

\section{Discussion}

In a canonical social interactions model, we provide sufficient conditions
under which the social interactions matrix, and endogenous and exogenous social
effects are all globally identified, even absent information on social
links. Our identification strategy is novel and may bear fruit in other
areas. The method is immediately applicable to other classic social
interactions problems, but where data on social links is either missing or
partial. In fields such as macroeconomics, political economy, and trade,
there are core areas of research where social interactions across
jurisdictions/countries etc. drive key outcomes, panel data exist over many
periods, and the number of nodes is relatively fixed. Moreover, while our
discussion and application have focused on a continuous policy response
(state taxes), our methods can also be applied to the extensive margin of
policy adoption and diffusion. Such diffusion models might generate network
interactions where some states influence the later adoption of economic and
social policies in other jurisdictions. This issue is studied by %
\citet{dellavigna2022policy} in the context of US\ state policies -- they
examine the diffusion of over $700$ policies in the past $70$ years. Their
work also suggests the nature of interactions across states has changed:\
while geographic proximity is a good predictor of policy diffusion, they
also find that since $2000$, political alignment across states has become the
strongest predictor of diffusion.

In finance, high-frequency panel data is readily available and relevant for
the study of core research questions. For example, a long-standing question
has been whether CEOs are subject to relative performance evaluation, and if
so, what is the comparison set of firms/CEOs used (\citealp{edmansgabaix2016}%
). 
More generally, our method can be readily applied to a large class of
economic questions around contagion, risk, and the fragility of economic and
environmental systems. For example, since the financial crisis of 2008, it
has become clear that linkages between actors such as firms or banks are
complex and often hidden, yet because endogenous network interactions cause
feedback loops and have multiplier effects, they can have enormous
implications for the evolution of a financial crisis or the propagation of supply
shocks in aggregate. Identifying such synchronicity is a critical first step
to putting in place policies to reduce the fragility of economic systems (%
\citealp{vanVliet2018}; \citealp{elliott2022networks}; %
\citealp{goldstein2022synchronicity}).

Advances in the availability of administrative data, data from social media,
mobile technologies, and online economic transactions all offer new
possibilities to identify social interactions with long panels or high-frequency data collection, where data on social ties will typically be
missing.

Three further directions for future research are of note. First, under
partial observability of $W_{0}$ (as in \citealp{Blumeetal2015}), the number
of parameters in $W_{0}$ to be retrieved falls quickly. Our approach can
then still be applied to complete knowledge of $W_{0}$, such as if Aggregate
Relational Data is available, and this could be achieved with potentially
weaker assumptions for identification, and in even shorter panels. To
illustrate possibilities, \textcolor{bluez}{Figure A5} shows results from a
final simulation exercise in which we assume the researcher starts with
partial knowledge of $W_{0}$. We do so for Banerjee \emph{et al.} (2013)%
\nocite{banerjeeetal2013} village family network, showing simulation results
for scenarios in which the researcher knows the social ties of the three
(five, ten) households with the highest out-degree. For comparison, we also
show the earlier simulation results when $W_{0}$ is entirely unknown. This
clearly illustrates that with partial knowledge of the social network,
performance on all metrics improves rapidly for any given $T$.

Second, we have developed our approach in the context of the canonical
linear social interactions model (\ref{model motivation}). This builds on %
\citet{Manski1993} when $W_{0}$ is known to the researcher, and the
reflection problem is the main challenge in identifying endogenous and
exogenous social effects. However, the reflection problem is functional-form
dependent and may not apply to many non-linear models (\citealp{Blumeetal2011}%
, \citealp{Blumeetal2015}). An important topic for future research is to
extend the analysis to non-linear social interaction settings. Relatedly,
the canonical social interaction model assumes that the same $W_{0}$ governs
the endogenous and exogenous channels. Despite the relaxation we propose in
Section 2.3.3, we see this as a limitation of the current method, and future
research is needed to allow for a fully flexible approach.

Finally, an important part of the social networks literature examines
endogenous network formation (\citealp{Jacksonetal2017}; %
\citealp{DePaula2017}). Our analysis allows us to begin probing the issue in
two ways. First, the kind of dyadic regression analysis in Section 4 on the
correlates of entries in $W_{0,ij}$ suggests factors driving link formation
and dissolution. Second, this leads naturally to a broad agenda going
forward, to address the challenge of simultaneously identifying and
estimating time varying models of network formation and social interaction,
all in cases where data on social networks is not required.

\singlespacing
\bibliographystyle{ectaCUSTOM}
\bibliography{bibTRIMMED}

\newpage

\appendix

\section{Proofs}

\label{app:proofs}

\subsubsection*{Example 1}
To see how the assumption of \citet{Bramoulleetal2009} grants identification
when $W_{0}$ is known$,$ choose constants $c_{1}$, $c_{2}$, and $c_{3}$ such
that $c_{1}I+c_{2}W_{0}+c_{3}W_{0}^{2}=0.$ Focusing on the diagonal elements of
this condition, we see that if the diagonal of $W_{0}^{2}$ is not
proportional to the diagonal of $I$, then $c_{1}=c_{3}=0$ because $\text{diag%
}(W_{0})=0$. It follows that $c_{2}=0$ if at least one (off-diagonal)
element of $W_{0}$ is non-zero. However, the converse is not true, so 
if Assumptions A1-A6 do not hold, one can construct examples where $\Pi _{0}$
does not pin down $\theta _{0}$. Take, for instance, $N=5$ with $\theta _{0}$
and $\theta $ where $\beta =\beta _{0}=1$, $\rho =1.5$, $\rho _{0}=0.5$, $%
\gamma =-2.5$, and $\gamma _{0}=0.5$:%
\begin{equation*}
W_{0}=\left[ 
\begin{array}{ccccc}
0 & 0.5 & 0 & 0 & 0.5 \\ 
0.5 & 0 & 0.5 & 0 & 0 \\ 
0 & 0.5 & 0 & 0.5 & 0 \\ 
0 & 0 & 0.5 & 0 & 0.5 \\ 
0.5 & 0 & 0 & 0.5 & 0%
\end{array}%
\right] \quad \text{and}\quad W=\left[ 
\begin{array}{ccccc}
0 & 0 & 0.5 & 0.5 & 0 \\ 
0 & 0 & 0 & 0.5 & 0.5 \\ 
0.5 & 0 & 0 & 0 & 0.5 \\ 
0.5 & 0.5 & 0 & 0 & 0 \\ 
0 & 0.5 & 0.5 & 0 & 0%
\end{array}%
\right] .
\end{equation*}%
Both $W$ and $W_{0}$ violate (A5) ($(W^{2})_{kk}=(W_{0}^{2})_{kk}=0.5$ for
any $k$), and $\rho $ violates (A2). Nonetheless, $I,W_{0}$, and $W_{0}^{2}$
are linearly independent and, likewise, so are $I,W$, and $W^{2}$. In this
case, \emph{both} parameter sets produce $\Pi =(I-\rho _{0}W_{0})^{-1}(\beta
_{0}I+\gamma _{0}W_{0})=(I-\rho W)^{-1}(\beta I+\gamma W)$. This arises even
as $W$ and $W_{0}$ represent very different network structures:\ any pair
connected under $W$ is not connected under $W_{0}$ and \textit{vice versa}.

\subsubsection*{Theorem 1}

\begin{proof}
The local identification result follows \citet{Rothenberg1971}.  \textcolor{black}{Under the assumptions in our model,} the parameter space $\Theta \subset \mathbb{R}^m$ is an open set (recall that $m = N(N-1) + 3$).  \textcolor{black}{This corresponds to Assumption I in \citet{Rothenberg1971}.} We have that
\begin{eqnarray*}
\frac{\partial\Pi}{\partial W_{ij}} & = & \rho\left(I-\rho W\right)^{-1}\Delta_{ij}\left(I-\rho W\right)^{-1}\left(\beta I+\gamma W\right)+\left(I-\rho W\right)^{-1}\gamma\Delta_{ij}\\
\frac{\partial\Pi}{\partial\rho} & = & \left(I-\rho W\right)^{-1}W\left(I-\rho W\right)^{-1}\left(\beta I+\gamma W\right)\\
\frac{\partial\Pi}{\partial\gamma} & = & \left(I-\rho W\right)^{-1}W\\
\frac{\partial\Pi}{\partial\beta} & = & \left(I-\rho W\right)^{-1},
\end{eqnarray*}
where $\Delta_{ij}$ is the $N \times N$ matrix with 1 in the $\left(i,j\right)$-th
position and zero elsewhere. Write the $N^{2}\times m$
derivative matrix $\nabla_{\Pi}\equiv\frac{\partial\text{vec}\left(\Pi\right)}{\partial\theta'}$.
By assumption, row $i$ in matrix $W$ sums up to one, incorporated through the
restriction that $ \varphi \equiv\sum_{j=1,j\neq i}^{N}W_{ij}-1=0$
for the unit-normalized row $i$. The derivative of the restriction $\varphi$ is the $m$-dimensional 
vector $\nabla_{W}' \equiv \frac{\partial\varphi}{\partial\theta'}=\left[e_{i}' \otimes \iota'_{N-1}\; 0_{1\times3} \right]$ (where $e_i$ is an $N$-dimensional vector with 1 in the $i$th component and zero otherwise).
Following Theorem 6 of \citet{Rothenberg1971}, the structural parameters
$\theta \in \Theta$ are locally identified if, and only if, the matrix $\nabla\equiv\left[\nabla_{\Pi}'\;\nabla_{W}'\right]'$
has rank $m$.\footnote{For a parameter vector to be locally identified, \citet{Rothenberg1971} requires that the derivative matrix $\nabla$ have rank $m$ at that point and that this vector be (rank-)regular.  A (rank-)regular point of the parameter space is one for which there is a neighborhood where the rank of $\nabla$ is constant (see Definition 4 in \citealp{Rothenberg1971}).  Because we show that the derivative matrix has rank $m$ at every point in the parameter space, this also guarantees that every point in the parameter space is (rank-)regular.}

If $\nabla$ does not have rank $m$, there is a nonzero vector
$\mathbf{c} \equiv \left(c_{W_{12}},\dots,c_{W_{N,N-1}},c_{\rho},c_{\gamma},c_{\beta}\right)'$
such that $\nabla \cdot \mathbf{c} = 0$.  This implies that
\begin{eqnarray}
c_{W_{12}}\frac{\partial\Pi}{\partial W_{12}}+\cdots+c_{W_{N,N-1}}\frac{\partial\Pi}{\partial W_{N,N-1}}+c_{\rho}\frac{\partial\Pi}{\partial\rho}+c_{\gamma}\frac{\partial\Pi}{\partial\gamma}+c_{\beta}\frac{\partial\Pi}{\partial\beta} & = & 0\label{eq:sysPi}
\end{eqnarray}
and, for the unit-normalized row $i$ (see A4),
\begin{eqnarray}
\sum_{j\ne i, j=1,\dots,n} c_{W_{ij}} & = & 0.\label{eq:sysW}
\end{eqnarray}
Pre-multiplying equation (\ref{eq:sysPi}) by $\left(I-\rho W\right)$
and substituting the derivatives, 
\begin{eqnarray*}
\sum_{i,j=1,i\neq j}^{N}c_{W_{ij}}\left[\rho\Delta_{ij}\left(I-\rho W\right)^{-1}\left(\beta I+\gamma W\right)+\gamma\Delta_{ij}\right]+\\
+c_{\rho}W\left(I-\rho W\right)^{-1}\left(\beta I+\gamma W\right)+c_{\gamma}W+c_{\beta}I & = & 0.
\end{eqnarray*}
Define $C\equiv\sum_{i,j=1,i\neq j}^{N}c_{W_{ij}}\Delta_{ij}$. Since the spectral radius of $\rho W$ is strictly less than one by A2, one can show (by representing $\left(I-\rho W\right)^{-1}$ as a Neumann series, for instance) that
$\left(\beta I+\gamma W\right)$ and $\left(I-\rho W\right)^{-1}$
commute.  Then, the expression above is equivalent to
\begin{eqnarray*}
\rho C\left(\beta I+\gamma W\right)\left(I-\rho W\right)^{-1}+\gamma C+c_{\rho}W\left(\beta I+\gamma W\right)\left(I-\rho W\right)^{-1}+c_{\gamma}W+c_{\beta}I & = & 0.
\end{eqnarray*}
Post-multiplying by $\left(I-\rho W\right)$, we obtain
\begin{eqnarray*}
\rho C\left(\beta I+\gamma W\right)+\gamma C\left(I-\rho W\right)+c_{\rho}W\left(\beta I+\gamma W\right)+c_{\gamma}W\left(I-\rho W\right)+c_{\beta}\left(I-\rho W\right) & = & 0
\end{eqnarray*}
which, upon rearrangement, yields
\begin{eqnarray}
\left(\gamma+\rho\beta\right)C+c_{\beta}I+\left(\beta c_{\rho}-c_{\beta}\rho+c_{\gamma}\right)W+\left(c_{\rho}\gamma-\rho c_{\gamma}\right)W^{2} & = & 0.\label{eq:sys1}
\end{eqnarray}
Because $C_{ii}=0$ and $W_{ii}=0$ (by A1), we have that $c_{\beta}+\left(c_{\rho}\gamma-\rho c_{\gamma}\right) (W^{2})_{ii}=0$
for all $i=1,\dots,N$. Since by Assumption (A5) there isn't a constant
$\kappa$ such that $\text{diag}\left(W_{0}^{2}\right)=\kappa\iota$,
then $c_{\beta}=c_{\rho}\gamma-\rho c_{\gamma}=0$. Plugging back
in (\ref{eq:sys1}), we obtain 
\begin{eqnarray*}
\left(\gamma+\rho\beta\right)C+\left(\beta c_{\rho}+c_{\gamma}\right)W & = & 0.
\end{eqnarray*}
which implies that $C=-\frac{\beta c_{\rho}+c_{\gamma}}{\gamma+\rho\beta}W$,
since $\gamma+\rho\beta\neq0$ by Assumption (A3). Taking the sum of the elements in row $i$, we get
\begin{eqnarray*}
\left(\gamma+\rho\beta\right) \sum_{j\ne i, j=1,\dots,n} c_{W_{ij}} +\left(\beta c_{\rho}+c_{\gamma}\right) & = & 0.
\end{eqnarray*}
Note that, by equation (\ref{eq:sysW}), $\sum_{j\ne i, j=1,\dots,n} c_{W_{ij}}=0$. So, $\beta c_{\rho}+c_{\gamma}=0$
and $C=-\frac{\beta c_{\rho}+c_{\gamma}}{\gamma+\rho\beta}W=0$. This implies that $c_{W_{ij}}=0$ for any $i$ and $j$.  Combining
$\beta c_{\rho}+c_{\gamma}=0$ with $c_{\rho}\gamma-\rho c_{\gamma}=0$ obtained above,
we get that $c_{\rho}\left(\rho\beta+\gamma\right)=0$. Since $\rho\beta+\gamma \ne 0$,
then $c_{\rho}=0$. Given that $\beta c_\rho + c_\gamma = 0$, it follows that $c_{\gamma}=0$.  This shows that $\theta \in \Theta$ is locally identified.	
\end{proof}

\subsubsection*{Corollary 1}

\begin{proof}
The parameter $\theta_0$ being locally identified (see Theorem 1
) implies that the set $\{\theta:\Pi(\theta) = \Pi(\theta_0)\}$ is discrete.  If it is also compact, then the set is finite.  To establish that, we now show that $\Pi$ is a proper function: the inverse image $\Pi^{-1}(K)$ of any compact set $K\subset \mathbb{R}^m$ is also compact (see \citealp{KrantzParks2013}, p. 124). 

Let $\mathcal{A}$ be a compact set in the space of $N \times N$ real matrices.  Since it is a compact set in a finite dimensional space, it is closed and bounded.  Since $\Pi$ is a continuous function of $\theta$, the pre-image of a compact set, which is closed, is also closed.  Because $\|W\|$ is bounded and $\rho \in (-1,1)$, their corresponding coordinates in $\theta \in \Pi^{-1}(\mathcal{A})$ are bounded.  Suppose the coordinates for $\beta$ or $\gamma$ in $\theta \in \Pi^{-1}(\mathcal{A})$ are not bounded.  One can find a sequence $(\theta_k)_{k=1}^\infty$ such that $|\beta_k| \rightarrow \infty$ or $|\gamma_k| \rightarrow \infty$.  

Denote the Frobenius norm of the matrix $A$ as $\left\Vert A\right\Vert $.
By the submultiplicative property $\left\Vert AB\right\Vert \leq\left\Vert A\right\Vert \cdot\left\Vert B\right\Vert $,
\begin{eqnarray*}
\left\Vert \beta I+\gamma W\right\Vert  & \leq & \left\Vert I-\rho W\right\Vert \cdot \left\Vert \left(I-\rho W\right)^{-1} \left(\beta I+\gamma W\right)\right\Vert = \left\Vert I-\rho W\right\Vert \cdot \left\Vert \Pi\right\Vert .
\end{eqnarray*}
It follows that
\begin{eqnarray*}
\frac{\left\Vert \beta I+\gamma W\right\Vert }{\left\Vert I-\rho W\right\Vert } & \leq & \left\Vert \Pi\right\Vert .
\end{eqnarray*}
Given $W$ has zero main diagonal,
\begin{eqnarray*}
\left\Vert \beta I+\gamma W\right\Vert ^{2} & = & \beta^{2}\left\Vert I\right\Vert ^{2}+\gamma^{2}\left\Vert W\right\Vert ^{2}\quad=\quad\beta^{2} N+\gamma^{2}\left\Vert W\right\Vert ^{2}.
\end{eqnarray*}
Also, $\left\Vert I-\rho W\right\Vert^2 =N+\rho^{2}\left\Vert W\right\Vert ^{2}\leq N+ \rho^2 C$,
for some constant $C\in\mathbb{R}$ by Assumption (A2).  We then
have that 
\begin{eqnarray*}
\frac{\sqrt{\beta^{2}N+\gamma^{2}\left\Vert W\right\Vert ^{2}}}{\sqrt{N+\rho^2 C}} & \leq & \left\Vert \Pi\right\Vert .
\end{eqnarray*}
By Assumption (A2), the denominator above is bounded.  Hence, $\left|\beta_k\right|\rightarrow\infty\Rightarrow\left\Vert \Pi(\theta_k)\right\Vert \rightarrow\infty$. We now use the fact that $\sum_{j}W_{ij}=1$ to show that there is
a lower bound on $\left\Vert W\right\Vert ^{2}$, and so $\left|\gamma_k\right|\rightarrow\infty\Rightarrow\left\Vert \Pi(\theta_k)\right\Vert \rightarrow\infty$.
To see this, note that
\begin{eqnarray*}
\min_{\text{s.t. }\sum_{j}W_{ij}=1}\left\Vert W\right\Vert ^{2} & \geq & \min_{\text{s.t. }\sum_{j}W_{ij}=1}\sum_{j=1}^{N}W_{ij}^{2}.
\end{eqnarray*}
Since the objective function in above is convex, it can be shown that it is minimized at $W_{ij}=\frac{1}{N-1}, j \ne i$ and, consequently, $\left\Vert W\right\Vert ^{2}\geq (N-1)\frac{1}{(N-1)^{2}}=\frac{1}{N-1}$.  Hence, if $\left|\gamma_k\right|\rightarrow\infty$, the numerator in the lower bound for $\left\Vert \Pi\right\Vert$ above also goes to infinity.  Consequently, $\mathcal{A}$ is not compact. 

Therefore, if $\mathcal{A}$ is compact, the coordinates in $\theta \in \Pi^{-1}(\mathcal{A})$ corresponding to $\beta$ and $\gamma$ are also bounded.  Hence, $ \Pi^{-1}(\mathcal{A})$ is bounded (and closed).  Consequently, it is compact.

For a given reduced-form parameter matrix $\Pi$, the set $\{\theta:\Pi(\theta) = \Pi(\theta_0)\}$ is then compact.  Since it is also discrete, it is finite.	
\end{proof}

The following lemmas are used in proving Theorem 2.

\begin{lem}
\label{lem:gamma0globallyidentified} Assume (A1)-(A6). If $\gamma_0=0$, $W_0$
is such that $(W_0)_{1,2}=(W_0)_{2,1}=1$ and $(W_0)_{ij}=0$ otherwise, with $%
\rho_0\neq0$ and $\beta_0\neq0$, then $\theta_0\in\Theta$ is identified.
\end{lem}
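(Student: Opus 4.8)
The plan is to exploit the very special structure that $\gamma_0=0$ and the two-node $W_0$ force on the reduced form, turning the global comparison into a statement about commuting matrices plus a small linear system. First I would compute $\Pi_0=\Pi(\theta_0)$ explicitly: since $\gamma_0=0$, $\Pi_0=\beta_0(I-\rho_0W_0)^{-1}$, and because $W_0$ is block-diagonal with a single $2\times2$ block $\left(\begin{smallmatrix}0&1\\1&0\end{smallmatrix}\right)$ on coordinates $\{1,2\}$ and zeros elsewhere, $\Pi_0$ is symmetric, equal to $\frac{\beta_0}{1-\rho_0^{2}}\left(\begin{smallmatrix}1&\rho_0\\\rho_0&1\end{smallmatrix}\right)$ on $\{1,2\}$ and to $\beta_0 I$ on the remaining coordinates. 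Its eigenstructure is then transparent: $e_1+e_2$ has eigenvalue $\beta_0/(1-\rho_0)$, $e_1-e_2$ has eigenvalue $\beta_0/(1+\rho_0)$, and $V_3\equiv\mathrm{span}\{e_3,\dots,e_N\}$ is the eigenspace of $\beta_0$, of multiplicity $N-2$. Since $\rho_0\neq0$ and $\beta_0\neq0$ these three eigenvalues are distinct (this is also where (A5), equivalently $N\ge3$ for this $W_0$, enters).

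Next I would take any $\theta=(W,\rho,\gamma,\beta)\in\Theta$ with $\Pi(\theta)=\Pi_0$ and exploit that $\Pi(\theta)$ is a power series in $W$ by (A2), so $W$ commutes with $\Pi_0$. As $\Pi_0$ is diagonalizable with the three distinct eigenvalues above, $W$ must leave each eigenspace invariant. On the one-dimensional spans of $e_1+e_2$ and $e_1-e_2$ it acts as scalars $\mu$ and $\mu'$; combining these with $W_{11}=W_{22}=0$ from (A1) forces $\mu'=-\mu$ and pins the $\{1,2\}$-block of $W$ to $\left(\begin{smallmatrix}0&\mu\\\mu&0\end{smallmatrix}\right)$ with $W_{i1}=W_{i2}=W_{1k}=W_{2k}=0$ for $i,k\ge3$. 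Hence $W$ also preserves $V_3$; write $W'$ for its restriction there.

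The hard part, and the step I expect to be the main obstacle, is the degenerate eigenspace $V_3$, where a priori $W'$ could be an arbitrary $(N-2)\times(N-2)$ matrix. Restricting $\Pi(\theta)=\Pi_0$ to $V_3$ gives $(I-\rho W')^{-1}(\beta I+\gamma W')=\beta_0 I$, i.e. $(\beta-\beta_0)I+(\gamma+\beta_0\rho)W'=0$. Here (A3) is decisive: if $\gamma+\beta_0\rho=0$ then $\beta=\beta_0$ and $\gamma=-\rho\beta$, so $\rho\beta+\gamma=0$, contradicting (A3); therefore $\gamma+\beta_0\rho\neq0$ and $W'$ is a scalar multiple of $I$, which by the zero-diagonal condition (A1) must vanish, while simultaneously $\beta=\beta_0$. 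Thus $W$ agrees with $W_0$ except for the scalar $\mu$ in its single off-diagonal block, and applying (A4) — the only nonzero candidate row-sums of $W$ are those of rows $1$ and $2$, both equal to $\mu$ — forces $\mu=1$, so $W=W_0$. This is exactly where \emph{full} identification, rather than identification only up to the sign flip $W\mapsto-W_0$, is obtained: the mirror image would require $\mu=-1$, which fails (A4).

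Finally, with $\beta=\beta_0$ and $W=W_0$, I would read the eigenvalue relation \eqref{eq:eigenvaluePiW} on $e_1\pm e_2$ (whose $W_0$-eigenvalues are $\pm1$):
\[
\frac{\beta_0+\gamma}{1-\rho}=\frac{\beta_0}{1-\rho_0},\qquad
\frac{\beta_0-\gamma}{1+\rho}=\frac{\beta_0}{1+\rho_0}.
\]
Clearing denominators yields $\gamma(1-\rho_0)=\beta_0(\rho_0-\rho)$ and $\gamma(1+\rho_0)=\beta_0(\rho_0-\rho)$; subtracting gives $2\gamma\rho_0=0$, so $\gamma=0$ since $\rho_0\neq0$, and then $\beta_0(\rho_0-\rho)=0$ gives $\rho=\rho_0$ since $\beta_0\neq0$. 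Hence $\theta=\theta_0$, which is the claimed identification. I anticipate the $2\times2$ block and the closing linear system to be entirely routine, with all the subtlety concentrated in ruling out nontrivial behaviour of $W$ on the high-multiplicity eigenspace $V_3$.
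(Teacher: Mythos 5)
Your proof is correct, but it takes a genuinely different route from the paper's. The paper argues entry-by-entry: using $\gamma_0=0$ and the commutativity of $(I-\rho W)^{-1}$ with $\beta I+\gamma W$, it reduces $\Pi(\theta)=\Pi(\theta_0)$ to the single matrix equation \eqref{eq:lem1main}, $(\beta_0-\beta)I-(\gamma+\beta_0\rho)W+\rho_0\beta W_0+\rho_0\gamma W_0W=0$, then reads off $\beta=\beta_0$ from the $(N,N)$ entry, $(W)_{ij}=0$ for $i\geq 3$ from (A3), $\gamma (W)_{12}=\gamma (W)_{21}=0$ from the diagonal entries, and $\rho_0=\rho (W)_{12}=\rho (W)_{21}$ from the $(1,2)$ and $(2,1)$ entries; it then invokes (A5) \emph{for the candidate} $W$ to rule out $(W)_{12}=(W)_{21}=0$, and finally the row-sum normalization pins $(W)_{12}=(W)_{21}=1$ and $\rho=\rho_0$. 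You instead exploit the commutant: since $\Pi(\theta)$ is a Neumann series in $W$ by (A2), $W$ commutes with $\Pi_0$, whose three eigenvalues $\beta_0/(1-\rho_0)$, $\beta_0/(1+\rho_0)$, $\beta_0$ are distinct exactly because $\rho_0\neq 0$, $\beta_0\neq 0$ and $|\rho_0|<1$; eigenspace invariance forces the block structure of $W$, the restriction to the multiplicity-$(N-2)$ eigenspace together with (A1) and (A3) kills $W'$ and yields $\beta=\beta_0$, (A4) fixes $\mu=1$, and the eigenvalue map $\lambda\mapsto(\beta+\gamma\lambda)/(1-\rho\lambda)$ from \eqref{eq:eigenvaluePiW} evaluated at $\lambda=\pm 1$ delivers $\gamma=0$ and $\rho=\rho_0$. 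Both arguments are sound and consume the same raw ingredients, but they distribute the work differently: the paper's proof is elementary bookkeeping of matrix entries, whereas yours is structural and makes the role of each assumption transparent --- you need (A5) only for $\theta_0$ (to guarantee $N\geq 3$, so the degenerate eigenspace is nonempty), never for the candidate $\theta$, and your use of (A4) to exclude $\mu=-1$ cleanly exhibits why the identified set here is a singleton rather than the two-point set of Corollary \ref{cor:globalidentification1}. (Incidentally, your reliance on (A4) at that step is the right attribution; the paper's text cites (A5) there, but what its argument actually uses is the row-sum normalization.) The trade-off is that your approach leans on $\Pi_0$ being diagonalizable with well-separated eigenvalues --- automatic for this particular $W_0$, and potentially generalizable to other benchmark networks with explicitly known eigenstructure, but not a feature one gets for free from an arbitrary $W_0$, which is presumably why the paper's entry-wise argument, though less illuminating, is the one that travels with the rest of its global identification machinery.
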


\begin{proof}
Take $\theta=(W_{12},\dots,W_{N,N-1},\rho,\gamma,\beta)\in \Theta$ possibily different from $\theta_0$ such that the models are observationally equivalent, so $\Pi_0=\Pi$. Then,
\begin{eqnarray*}
(I-\rho_0W_0)^{-1}(\beta_0I+\gamma_0W_0)&=&(I-\rho W)^{-1}(\beta I+\gamma W).	
\end{eqnarray*}
Since $\gamma_0=0$ and $(I-\rho W)^{-1}$ and $(\beta I+\gamma W)$ commute (see the proof for Theorem 1 
), it follows that
\begin{eqnarray*}
\Pi_0=\Pi  &\Leftrightarrow&  \beta_0(I-\rho_0W_0)^{-1}=(\beta I+\gamma W)(I-\rho W)^{-1}	
\end{eqnarray*}
or, equivalently,
\begin{eqnarray*}
\beta_0(I-\rho W)&=&(I-\rho_0W_0)(\beta I+\gamma W).
\end{eqnarray*}
This last equation implies that
\begin{eqnarray}\label{eq:lem1main}
(\beta_0-\beta)I-(\gamma+\beta_0\rho)W+\rho_0\beta W_0+\rho_0\gamma W_0W&=&0.
\end{eqnarray}
We first note that $(W_0W)_{N,N}=0$ since $(W_0)_{N,i}=0$ for any $1 \le i \le N$ and, by Assumption (A1), $(W)_{N,N}=(W_0)_{N,N}=0$. So $\beta_0=\beta$. Taking elements $(i,j)$ such that $i\geq3$ and $i\neq j$ in equation (\ref{eq:lem1main}), and using the fact that $\beta_0=\beta$, we find that $-(\gamma+\beta_0\rho)(W)_{ij}=-(\gamma+\beta\rho)(W)_{ij}=0$ for any $(i,j)$ such that $i\geq3$ and $i\neq j$. By Assumption (A3), $\gamma+\beta\rho \ne 0$ and it follows that $(W)_{ij}=0$ for any $(i,j)$ such that $i\geq3$ and $i\neq j$.  In fact, since $(W)_{i,i} = 0$ by Assumption (A1), we get that $(W)_{ij}=0$ for any $(i,j)$ such that $i\geq3$.

Using Assumption (A1) and since $\beta_0=\beta$, elements $(1,1)$ and $(2,2)$ in equation (\ref{eq:lem1main}) imply that $\rho_0\gamma (W)_{2,1}=\rho_0\gamma (W)_{1,2}=0$. Given that $\rho_0\neq0$, we get that $\gamma (W)_{2,1}=\gamma (W)_{1,2}=0$. From element $(1,2)$ in equation (\ref{eq:lem1main}), we find that $-(\gamma+\beta_0\rho)(W)_{1,2}+\rho_0\beta=0$ or, equivalently, $(\rho_0-\rho(W)_{1,2})\beta_0-\gamma (W)_{1,2}=0$. Given that $\gamma (W)_{1,2}=0$ and $\beta_0\neq0$, it must be that $\rho_0-\rho(W)_{1,2}=0$. Making the analogous argument for element $(2,1)$, we would also obtain that $\rho_0-\rho(W)_{2,1}=0$.

If both $(W)_{1,2}$ and $(W)_{2,1}$ are equal to zero, using the fact that $W_{ij}=0$ for any $(i,j)$ such that $i\geq3$, we would then obtain that $W^2$ is equal to a matrix of zeros,
which contradicts Assumption (A5). Hence, $(W)_{1,2}\neq0$ or $(W)_{2,1}\neq0$. If $(W)_{1,2}\neq0$, using the fact that $\gamma (W)_{1,2}=0$, we get that $\gamma=0$. Equivalently, if $(W)_{2,1}\neq0$, and using the fact that $\gamma (W)_{2,1}=0$, we again get that $\gamma=0$. So, in either case, $\gamma=\gamma_0=0$. 

Taking element $(1,j)$ in equation (\ref{eq:lem1main}), with $j\geq 3$, we get that $-(\gamma+\rho\beta_0)W_{1,j}+\gamma\rho_0(W)_{2,j}=-\rho\beta_0 W_{1,j}=0$. Similarly, element $(2,j)$, with $j\geq 3$ implies that $-(\gamma+\rho\beta_0)W_{2,j}+\gamma\rho_0(W)_{1,j}=-\rho\beta_0 W_{2,j}=0$.   Then, from $-\rho\beta_0 (W)_{1,j}=-\rho\beta_0 (W)_{2,j}=0$ for $j\geq 3$, it follows that $-\rho (W)_{1,j}=-\rho (W)_{2,j}=0$ since $\beta_0 \ne 0$. 

From $\rho_0-\rho(W)_{1,2}=0$, if $(W)_{1,2}\neq0$, we get that $\rho=\rho_0/(W)_{1,2}\neq0$. Equivalently, if $(W)_{2,1} \ne 0$, we get that $\rho=\rho_0/(W)_{2,1}\neq0$. Since $(W)_{1,2}\ne 0$ or $(W)_{2,1}\ne 0$, we obtain that $\rho\neq0$. Then, because $-\rho (W)_{1j}=\rho (W)_{2j}=0$ for $j\geq 3$, we have that $(W)_{1j}=(W)_{2j}=0$ for $j\geq 3$.

Given that $\rho_0-\rho(W)_{1,2}=0$, $\rho_0-\rho(W)_{2,1}=0$ and $\rho\neq0$, we obtain that $(W)_{1,2}=(W)_{2,1}=\frac{\rho_0}{\rho}$. Since $(W)_{1,j}=0$ for $j\neq2$, $(W)_{2,j}=0$ for $j\neq1$ and $(W)_{ij} = 0$ for $i \ge 3$, by Assumption (A5) we get that $(W)_{1,2}=(W)_{2,1}=1$ and $\rho=\rho_0$. Hence, $((W)_{1,2},\dots,(W)_{N,N-1},\rho,\gamma,\beta)=((W_0)_{1,2},\dots,(W_0)_{N,N-1},\rho_0,\gamma_0,\beta_0)$.
\end{proof}

\begin{lem}
\label{lem:pathconnected} Assume (A1)-(A2) and (A4)-(A6). The image of $%
\Pi(\cdot)$, for $\theta\in\Theta_{+}$, is path-connected and, therefore,
connected.
\end{lem}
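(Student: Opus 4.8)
The plan is to exploit the elementary fact that the continuous image of a path-connected set is path-connected. Since $\Pi(\theta)=(I-\rho W)^{-1}(\beta I+\gamma W)$ is continuous on $\Theta_{+}$ (the inverse exists by (A2), and matrix inversion and multiplication are continuous), it suffices to show that the \emph{domain} $\Theta_{+}$ is itself path-connected; the image then inherits path-connectedness, and path-connectedness implies connectedness.

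To show $\Theta_{+}$ is path-connected, I would fix a convenient reference point $\theta^{\ast}=(W^{\ast},\rho^{\ast},\gamma^{\ast},\beta^{\ast})$ with $\rho^{\ast}=0$, $\gamma^{\ast}=1$, $\beta^{\ast}=0$, and $W^{\ast}$ a fixed matrix satisfying (A1), (A4), (A5) with $\|W^{\ast}\|<C$ (such $W^{\ast}$ exists whenever $N\ge 3$, which is forced by (A5) since for $N=2$ the diagonal of $W^{2}$ is always constant). Note $\theta^{\ast}\in\Theta_{+}$ because $\rho^{\ast}\beta^{\ast}+\gamma^{\ast}=1>0$. I would then connect an arbitrary $\theta=(W,\rho,\gamma,\beta)\in\Theta_{+}$ to $\theta^{\ast}$ in three stages. \emph{Stage 1:} holding $(W,\rho)$ fixed, move $(\beta,\gamma)$ along the straight segment to $(0,1)$; since $\{(\beta,\gamma):\rho\beta+\gamma>0\}$ is an open half-plane, hence convex, the segment stays in $\Theta_{+}$, and (A1), (A2), (A4), (A5) are untouched as they depend only on $(W,\rho)$. \emph{Stage 2:} now $\beta=0$, so $\rho\beta+\gamma=\gamma=1$ is insensitive to $\rho$; shrink $\rho$ linearly to $0$. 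Throughout, $|\rho(s)|\le|\rho|<1$ and $\sum_{j}|\rho(s)W_{ij}|\le\sum_{j}|\rho W_{ij}|<1$, so (A2) is preserved while $\rho\beta+\gamma$ stays equal to $1$. \emph{Stage 3:} with $(\rho,\gamma,\beta)=(0,1,0)$ fixed, deform $W$ to $W^{\ast}$ inside the set $\mathcal{W}_{0}\equiv\{W:\operatorname{diag}(W)=0,\ \|W\|<C,\ \text{some row sums to }1,\ \operatorname{diag}(W^{2})\ \text{nonconstant}\}$, which is exactly the constraint on $W$ once $\rho=0$ trivializes the row-sum part of (A2).

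The main obstacle is Stage 3: proving $\mathcal{W}_{0}$ is path-connected. I would write $\mathcal{W}_{0}=\big(B\cap\bigcup_{i}H_{i}\big)\setminus S$, where $B=\{\operatorname{diag}(W)=0,\ \|W\|<C\}$ is a convex body, $H_{i}=\{\text{row }i\text{ sums to }1\}$ is an affine hyperplane, and $S=\{\operatorname{diag}(W^{2})\ \text{constant}\}$ is the (A5)-violating set. Each piece $B\cap H_{i}$ is convex, and the pieces glue because they share common points in $B$, so $B\cap\bigcup_{i}H_{i}$ is path-connected via straight segments, which automatically keep $\|\cdot\|<C$ by convexity of the ball. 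The delicate point is removing $S$: it is a real-algebraic set cut out by the $N-1$ equations $(W^{2})_{11}=\cdots=(W^{2})_{NN}$, so it has codimension at least $2$ once $N\ge 3$, and a convex set minus a real-algebraic subset of codimension $\ge 2$ remains path-connected; I would also verify that the overlaps $(B\cap H_{i}\cap H_{j})\setminus S$ are nonempty so that the punctured pieces still glue. This codimension/non-separation argument, together with the observation that $N\ge 3$ is the only regime where $\Theta_{+}$ is nonempty, is where the real work lies; the three deformation stages are routine once the half-plane convexity and the $\rho=0$ simplification are in hand.
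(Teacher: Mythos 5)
Your top-level reduction is exactly the paper's: $\Pi$ is continuous on $\Theta_{+}$, so it suffices to prove $\Theta_{+}$ itself is path-connected. Your Stages 1 and 2 are correct, and arguably cleaner than the paper's handling of the scalar parameters (the paper interpolates $\rho,\beta,\gamma$ jointly, with $\beta(t)$ defined as a ratio whose denominator $t\rho^{*}+(1-t)\rho$ is never checked to be nonzero; your decoupling of $(\beta,\gamma)$ from $\rho$ sidesteps that). The genuine gap is Stage 3, which you flag as ``where the real work lies'' but do not carry out, and the justification you do offer is logically backwards. From the fact that $S$ is cut out by the $N-1$ equations $(W^{2})_{11}=\cdots=(W^{2})_{NN}$ you infer that $S$ ``has codimension at least $2$ once $N\ge 3$.'' Counting defining equations bounds the codimension of an algebraic set from \emph{above} (each equation cuts dimension by at most one), never from below: $N-1$ equations are perfectly compatible with $S$ containing a hypersurface (e.g.\ if the polynomials $g_{k}=(W^{2})_{11}-(W^{2})_{k+1,k+1}$ shared a common factor), and a codimension-one hypersurface can disconnect the convex pieces $B\cap H_{i}$, which would kill the argument. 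To close the gap you would need to actually prove the codimension claim --- for instance, show that the complexifications of two of the $g_{k}$, restricted to each affine slice $H_{i}$, are coprime (this is plausible: each $g_{k}$ is a quadratic form of rank at least $4$ after cancellation of the $W_{12}W_{21}$ terms, hence irreducible, and distinct $g_{k}$ are not proportional), conclude $\dim_{\mathbb{C}}V(g_{1},g_{2})\le\dim H_{i}-2$ so the real locus has codimension $\ge 2$, and then invoke the fact that a convex set minus a closed semialgebraic set of codimension $\ge 2$ remains path-connected. None of this is in the proposal, and it is the entire difficulty of the lemma.

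For contrast, the paper disposes of the hard step with elementary algebra and no algebraic geometry at all. It fixes a reference matrix $W_{*}$ whose only links are $\{(1,2),(2,1)\}$ together with the cycle $(3,4),(4,5),\dots,(N,3)$, relabels nodes so that $(W^{2})_{11}>(W^{2})_{22}$, and takes the straight segment $W(t)=tW_{*}+(1-t)W$. The point of this particular $W_{*}$ is the identity
\begin{equation*}
\bigl(W(t)^{2}\bigr)_{11}-\bigl(W(t)^{2}\bigr)_{22}=(1-t)^{2}\bigl[(W^{2})_{11}-(W^{2})_{22}\bigr],
\end{equation*}
which holds because the cross terms $t(1-t)(W_{*}W+WW_{*})$ contribute $W_{12}+W_{21}$ equally to the $(1,1)$ and $(2,2)$ entries and $(W_{*}^{2})_{11}=(W_{*}^{2})_{22}=1$; hence (A5) survives along the whole segment, while (A1), (A2), (A4) are preserved by convexity. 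A second reference matrix $W_{**}$ (links $\{(1,3),(3,1)\}$ plus a complementary cycle) covers the case where the target matrix satisfies $(W^{*2})_{11}=(W^{*2})_{22}$, and $W_{*}$ and $W_{**}$ are joined to each other by the same convex-combination device. If you want to keep your staged structure, the cleanest repair is to retain your Stages 1--2 and replace Stage 3 wholesale with this explicit $W_{*}/W_{**}$ construction.
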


\begin{proof}
Take $\theta$ and $\theta^* \in \Theta_+$.  Consider first the subvectors corresponding to the adjacency matrices $W$ and $W^*$.  Without loss of generality, let $1,\dots,N$ be ordered such that $({W}^2)_{11}>({W}^2)_{22}$. Consider the adjacency matrix $W_*$ corresponding to the network of unweighted directed connections $\{(1,2),(2,1)\}$ and $\{(3,4),(4,5),\dots,(N-1,N),(N,3)\}$.   Note that $\text{diag}(W_*^2)=(1,1,0,\dots,0)$ and this is an admissible adjacency matrix under assumptions (A1)-(A2) and (A4)-(A6).  We first show that $W$ is path-connected to $W_*$. 

Consider the path given by
\begin{eqnarray*}
W(t) &=& t W_* + (1-t){W}
\end{eqnarray*}
which implies that
\begin{eqnarray*}
(W(t)^2)_{11} &=& (1-t)^2 ({W}^2)_{11} + t^2 + (1-t)t({W}_{12} + {W}_{21})\\
(W(t)^2)_{22} &=& (1-t)^2 ({W}^2)_{22} + t^2 + (1-t)t({W}_{12} + {W}_{21}).
\end{eqnarray*}
Since $(W(t)^2)_{11}-(W(t)^2)_{22} = (1-t)^2[({W}^2)_{11}-({W}^2)_{22}]>0$ for $t\in [0,1)$ and $W(1)=W_*$, (A5) is satisfied for any matrix $W(t)$ such that $t\in [0,1]$.  Since all rows in $W_*$ sum to one and $(W_*)_{ii}=0$ for any $i$, it is straightforward to see that $W(t)$ also satisfies (A1) and (A4). Finally, $\sum_{j=1}^N|W_{ij}(t) | \le t \sum_{j=1}^N|(W_*)_{ij} | + (1-t) \sum_{j=1}^N|W_{ij} | \le 1$ for every $i=1,\dots,N$, and $W(t)$ satisfies Assumption (A2).

\bigskip

If $W^*$ is such that $(W^{*2})_{11} \neq (W^{*2})_{22}$, the convex combination of $W^*$ and $W_*$ is also seen to satisfy (A1)-(A2) and (A4)-(A6), and a path between $W$ and $W^*$ can be constructed via $W_*$.   If, on the other hand $(W^{*2})_{11} = (W^{*2})_{22}$, suppose without loss of generality that $(W^{*2})_{11} \neq (W^{*2})_{33}$.   In this case, one can construct a path between $W^*$ and $W_{**}$ where $W_{**}$ represents the network of unweighted directed connections $\{(1,3),(3,1)\}$ and $\{(2,4),(4,5),\dots,(N-1,N),(N,2)\}$.

Like $W(t)$ above, this path can be seen to satisfy assumptions (A1)-(A2) and (A4)-(A6). Now note that a path can also be constructed between $W_*$ and $W_{**}$, as their convex combination also satisfies (A1)-(A2) and (A4)-(A6). For example, note that $\hat W(t) = t W_* + (1-t) W_{**}$ is such that $(\hat W(t)^2)_{11} = t^2 + (1-t)^2$ and $(\hat W(t)^2)_{NN} = 0$, so $(\hat W(t)^2)_{11} - (\hat W(t)^2)_{NN} > 0$ for any $t \in (0,1)$ and both $\hat W(0)$ and $\hat W(1)$ satisfy (A5).  Hence, we can construct a path $W(t)$ between $W$ and $W^*$ through $W_*$ and $W_{**}$.

\bigskip

Furthermore, $\rho(t)=t\rho^*+(1-t){\rho}$, $\beta(t)=(t \rho ^* \beta^*+(1-t){\rho \beta})/(t\rho^*+(1-t){\rho})$, $\gamma(t)=t\gamma^*+(1-t){\gamma}$ are such that
\begin{eqnarray*}
f(t)\;\;\equiv\;\;\rho(t)\beta(t)+\gamma(t)\;\;=\;\;t(\rho^*\beta^*+\gamma^*)+(1-t)({\rho}{\beta}+{\gamma}) > 0,
\end{eqnarray*}
since $\theta^*$ and ${\theta}\in\Theta_{+}$. (Note also that $| \rho(t) | < 1$, so Assumption (A2) is satisfied.)  These facts taken together imply that
\begin{eqnarray*}
\theta(t)\;\;\equiv\;\;\left(W(t)_{12},\dots,W(t)_{N,N-1},\rho(t),\gamma(t),\beta(t)\right)\;\;\in\;\;\Theta_{+}.
\end{eqnarray*}
That is, $\Theta_{+}$ is path-connected and therefore connected.  Since $\Pi(\cdot)$ is continuous on $\Theta_+$, $\Pi(\Theta_+)$ is connected.
\end{proof}

\subsubsection*{Theorem 2}

\begin{proof}
The proof uses  Corollary 1.4 in \citet[p. 46]{AmbrosettiProdi1995},\footnote{Related results can be found in \citet{AmbrosettiProdi1972} and \citet{Marcoetal2014}.} which we reproduce here with our notation for convenience: \emph{Suppose the function $\Pi(\cdot)$ is continuous, proper, and locally invertible with a connected image.  Then, the cardinality of $\Pi^{-1}(\overline \Pi)$ is constant for any $\overline \Pi$ in the image of $\Pi(\cdot)$.}  

The mapping $\Pi(\theta)$ is continuous and proper (by Corollary 1),  and non-singular Jacobian at any point (as per the proof for Theorem 1),
 which guarantees local invertibility. Following Corollary 1.4 in \citet[p. 46]{AmbrosettiProdi1995} reproduced above, we obtain that the cardinality of the pre-image of $\Pi(\theta)$ is finite and constant. Take $\theta\in\Theta_{+}$ such that $\gamma=0$, $(W)_{1,2}=(W)_{2,1}=1$ and $(W)_{i,j}=0$ otherwise, with $\rho\neq0$ and $\beta\neq0$. By Lemma 1, that cardinality is one. 

\end{proof}

\subsubsection*{Corollary 3}

\begin{proof}
Since $\rho \in (0,1)$ and $W_{ij} \ge 0$, $\sum_{k=1}^\infty \rho^{k-1}W^k$ is a non-negative matrix. By (5), the off-diagonal elements of $\Pi(\theta)$ are equal to the off-diagonal elements of $ \left(\rho \beta +\gamma \right)\sum_{k=1}^\infty \rho^{k-1}W^k$, and the sign of those elements identifies the sign of $\rho \beta +\gamma$.  By Theorem 2, the model is identified. 

\end{proof}

\subsubsection*{Corollary 4}

\begin{proof}
Since $W_0$ is non-negative and irreducible, there is a real eigenvalue equal to the spectral radius of $W_0$ corresponding to the unique eigenvector whose entries can be chosen to be strictly positive (i.e., all the entries share the same sign).  A generic eigenvalue of $W_0$,  $\lambda_{0}$, corresponds to an eigenvalue of $\Pi_0$ according to:
$$\lambda_{\Pi_0} = \beta_0 + (\rho_0 \beta_0 + \gamma_0) \frac{\lambda_{0}}{1-\rho_0 \lambda_{0}}$$
If $\lambda_{0} = a_0 + b_0 i$ where $a_0, b_0 \in \mathbb{R}$ and $i = \sqrt{-1}$, then
$$\lambda_{\Pi_0} = \beta_0 + (\rho_0 \beta_0 + \gamma_0) \frac{a_0(1-\rho_0 a_0)-\rho_0 b_0^2}{(1-\rho_0 a_0)^2 + \rho_0^2 b_0^2} + (\rho_0 \beta_0 + \gamma_0) \frac{b_0}{(1-\rho_0 a_0)^2 + \rho_0^2 b_0^2} i.$$
If the eigenvalue $\lambda_0$ is real, $b_0=0$ and the corresponding eigenvalue $\lambda_{\Pi_0}$ is also real.  Differentiating $Re(\lambda_{\Pi_0})$, the real part of $\lambda_{\Pi_0}$, with respect to $Re(\lambda_0) = a_0$, we get
\begin{equation} \label{eq:derivative}
\frac{\partial Re(\lambda_{\Pi_0})}{\partial a_0} = \frac{(1-\rho_0 a_0)^2-\rho_0^2 b_0^2}{[(1-\rho_0 a_0)^2 + \rho_0^2 b_0^2 ]^2} \times (\rho_0 \beta_0 + \gamma_0).
\end{equation}
If the eigenvalue $\lambda_0$ is real, expression (\ref{eq:derivative}) becomes:
$$\frac{\partial Re(\lambda_{\Pi_0})}{\partial a_0}=\frac{\partial \lambda_{\Pi_0}}{\partial a_0} = \frac{1}{(1-\rho_0 a_0)^2} \times (\rho_0 \beta_0 + \gamma_0).$$
The fraction multiplying $\rho_0 \beta_0 + \gamma_0$ is positive.  If $\rho_0 \beta_0 + \gamma_0 < 0$, the real eigenvalues of $\Pi_0$ are decreasing on the real eigenvalues of $W_0$.  Consequently, the eigenvector corresponding to the largest (real) eigenvalue of $W_0$ will be associated with the smallest real eigenvalue of $\Pi_0$.  If, on the other hand, $\rho_0 \beta_0 + \gamma_0 > 0$, the eigenvector corresponding to the largest real eigenvalue of $W_0$ will correspond to the largest real eigenvalue of $\Pi_0$.  Since that eigenvector is the unique eigenvector that can be chosen to have strictly positive entries, the sign of $\rho_0 \beta_0 + \gamma_0$ is identified by the $\lambda_{\Pi_0}$ eigenvalue it is associated with and whether it is the largest or smallest real eigenvalue. By Theorem 2,
the model is identified.

If there is only one real eigenvalue, note that the denominator in the fraction in (\ref{eq:derivative}) is positive.  The minimum value of the numerator subject to $|\lambda_0|^2 = a_0^2 + b_0^2 \le 1$ is given by
$$\min_{a_0, b_0} (1-\rho_0 a_0)^2-\rho^2 b_0^2 \textrm{ s.t. } a_0^2 + b_0^2 \le 1.$$
The Lagrangean for this minimization problem is given by
\begin{eqnarray*}
\mathcal{L}\left(a_0,b_0;\mu\right) & = & (1-\rho_0 a_0)^2-\rho^2 b_0^2 + \mu(a_0^2 + b_0^2 - 1)
\end{eqnarray*}
where $\mu$ is the Lagrange multiplier associated with the constraint $a_0^2 + b_0^2 \le 1$.  The Kuhn-Tucker necessary conditions for the solution $(a_0^*,b_0^*,\mu^*)$ of this problem are given by
\begin{eqnarray*}
 &( \partial a_0:) & \rho_0 (1-\rho_0 a_0^*) - \mu^* a_0^* = 0 \\
 & (\partial b_0:) & (\rho_0^2 - \mu^*) b^*_0 = 0 \\
 & & \mu^* (  a_0^{*2} + b_0^{*2} - 1 ) = 0 \\
 & & a_0^{*2} + b_0^{*2} \le 1 \textrm{ and } \mu^* \ge 0.
 \end{eqnarray*} 
Let $\rho_0 \ne 0$.  (Otherwise, the objective function above is equal to one irrespective of $a_0$ or $b_0$, and the partial derivative is $\rho_0 \beta_0 + \gamma_0$).  If $\mu^* = 0$, $\partial b_0$ implies that $b_0^*=0$.  Then, $\partial a_0$ will have $a^*_0=\rho_0^{-1}$, which violates $a_0^{*2} + b_0^{*2} \le 1$.   

Hence, a solution should have $\mu^*>0$.  In this case, there are two possibilities: $b^*_0=0$ or $b^*_0 \ne 0$.  If $b^*_0 \ne 0$, condition $\partial b_0$ implies that $\mu^* = \rho_0^2$, and $\partial a_0$ then gives $a_0^* = (2 \rho_0)^{-1}$.  Because the constraint is binding, $b_0^{*2} = 1 - (4 \rho_0^2)^{-1}$.  In this case, $a_0^{*2} \le 1$ and $b_0^{*2} \ge 0$ requires that $|\rho_0| \ge 1/2$.  The value of the minimised objective function in this case is $1/2 - \rho_0^2$.  This is positive if $| \rho_0 | < \sqrt{2}/2$.

The other possibility is to have $b_0 = 0$.  Because the constraint is binding, $a_0=1$ and the objective function takes the value $(1-\rho_0)^2 > 0$.  Since $(1-\rho_0)^2 - 1/2 + \rho_0^2 = 2 \rho_0^2 - 2 \rho_0 + 1/2 \ge 0$, this solution is dominated by the previous one when $|\rho_0| \ge 1/2$.  

Consequently, the fraction multiplying $\rho_0 \beta_0 + \gamma_0$ is non-negative, and it can be ascertained that

$$ \textrm{sgn}\left[ \frac{\partial Re(\lambda_{\Pi_0})}{\partial a_0} \right] = \textrm{sgn}[ \rho_0 \beta_0 + \gamma_0 ] $$

\noindent as long as $| \rho_0 | < \sqrt{2}/2$.

If $\rho_0 \beta_0 + \gamma_0 < 0$, the real part of the eigenvalues of $\Pi_0$ is decreasing on the real part of the eigenvalues of $W_0$.  Since the dominant eigenvalue for $W_0$ will be real and thus the one with the largest real part, we only need to focus on the real part of the eigenvalues. 
 Consequently, the eigenvector corresponding to the eigenvalue of $W_0$ with the largest real part will correspond to the eigenvalue of $\Pi_0$ with the smallest real part.  If, on the other hand, $\rho_0 \beta_0 + \gamma_0 > 0$, the eigenvector corresponding to the eigenvalue of $W_0$ with the largest real part will correspond to the eigenvalue of $\Pi_0$ with the largest real part.  Since that eigenvector is the unique eigenvector that can be chosen to have strictly positive entries, the sign of $\rho_0 \beta_0 + \gamma_0$ is identified by the $\lambda_{\Pi_0}$ eigenvalue it is associated with. 

By Theorem 2, the model is identified. 
\end{proof}

\subsubsection*{Proposition 1}

\begin{proof}
From equation 
 (6), we observed that $\Pi_0v_j=\lambda_{\Pi_0,j}v_j$, where $v_j$ is an eigenvector of both $W_0$ and $\Pi_0$ with the corresponding eigenvalue $\lambda_{\Pi_0,j}=\frac{\beta_0+\gamma_0\lambda_{0,j}}{1-\rho_0\lambda_{0,j}}$. Defining $c$ as the row-sum of $\Pi_0$, we also have that 
\begin{eqnarray*}
\tilde{\Pi}_0(I-H)v_j&=&(I-H)\Pi_0 (I-H)v_j\;\;=\;\;(I-H)\Pi_0v_j-(I-H)\Pi_0 Hv_j\\
&=& \lambda_{\Pi_0,j}(I-H)v_j-(I-H) c H v_j \;\;=\;\; \lambda_{\Pi_0,j}(I-H)v_j-(H-H^2)c v_j\\
&=&\lambda_{\Pi_0,j}(I-H)v_j-(H-H)c v_j\;\;=\;\;\lambda_{\Pi_0,j}(I-H)v_j,
\end{eqnarray*}
where the third equality is obtained from $\Pi_0 H = c H$, where $c=\frac{\beta_0+\gamma_0}{1-\rho_0}$, and the fifth equality holds since $H$ is idempotent. So, $\tilde{\Pi}_0$ and $\Pi_0$ have common eigenvalues, with the corresponding eigenvector $\tilde{v}_j=v_j-\bar{v}_j\iota$ for $\tilde{\Pi}_0$, where $\bar{v}_j=\frac{1}{N}\iota'v_j$, $j=1,\dots,N$. 

If $\Pi_0$ is diagonalizable, since $\lambda_{\Pi_0,j}$ and $\tilde{v}_j$ are observed from $\tilde{\Pi}_0$, identification of $\Pi_0$ is equivalent to identification of $\bar{v}_j$.  In order to handle non-diagonalizable matrices, we relate the Jordan canonical forms for $\Pi_0$ and $\tilde \Pi_0$.  (If $\Pi_0$ is diagonalizable, the Jordan canonical form will coincide with its eigendecomposition.) In this case, $\Pi_0=MJM^{-1}$,where $J$ is the Jordan form and $M$ is a matrix comprised of ordinary and generalized eigenvectors.  The generalized eigenvectors can be obtained recursively from ordinary eigenvectors.\footnote{``The search for the Jordan form of $A$ becomes a search for these strings of vectors, each one headed by an eigenvector. For each $i$, either $Ax_i = \lambda_i x_i$ or $Ax_i = \lambda_i x_i + x_{i-1}$.  The vectors $x_i$ go into the columns of $M$, and each string produces a single block in $J$.'' (\citealp{strang2006})} We have established above that if $v_j$ is an ordinary eigenvector for $\Pi_0$, $\tilde{v}_j=v_j-\bar{v}_j\iota$ is an ordinary eigevector for $\tilde \Pi_0$, corresponding to the same eigenvalue. If this eigenvalue has multiplicity larger than one, let $x_j$ be the initial generalized eigenvector for $\Pi_0$, obtained as $\Pi_0 x_j = \lambda_{\Pi_0,j} x_j + v_j$. 
 Then, notice that
 \begin{eqnarray*}
\tilde{\Pi}_0(I-H)x_j&=&(I-H)\Pi_0 (I-H)x_j\;\;=\;\;(I-H)\Pi_0x_j-(I-H)\Pi_0 Hx_j\\
&=& \lambda_{\Pi_0,j}(I-H)x_j + (I-H)v_j-(I-H) c H x_j \;\;=\;\; \lambda_{\Pi_0,j}(I-H)x_j + (I-H)v_j,
\end{eqnarray*}
where we reproduce similar steps as above.  This implies that $\tilde x_j = x_j - \overline x_j$ is a generalized eigenvector for $\tilde \Pi_0$ obtained from the ordinary eigevector $\tilde v_j$. If additional generalized eigevectors are needed for the Jordan block, they can be obtained recursively from $x_j$ and $\tilde x_j$.  Notice also that both $\Pi_0$ and $\tilde \Pi_0$ will have the same Jordan form $J$ since their eigenvalues are the same.  Hence, to establish identification, it suffices to establish identification of $\overline v_j$ and $\overline x_j$.

To establish identification of $\bar{v}_j$, note that $\Pi_0(\tilde{v}_j+\bar{v}_j\iota)=\lambda_{\Pi_0,j}(\tilde{v}_j+\bar{v}_j\iota)$ since $v_j$ is an eigenvector of $\Pi_0$. Consider an alternative constant $\bar{v}_j^*\neq\bar{v}_j$ that satisfies the previous equation. Then,
\begin{eqnarray*}
\Pi_0\iota(\bar{v}_j-\bar{v}_j^*)&=&\lambda_{\Pi_0,j}\iota(\bar{v}_j-\bar{v}_j^*).	
\end{eqnarray*}
Since $\Pi_0\iota=c$, $v_j$ must satisfy $(c-\lambda_{\Pi_0,j})(\bar{v}_j-\bar{v}_j^*)=0$. For $j=2,\dots,N$, $|\lambda_{0,j}|<1$ which implies that $c \neq \lambda_{\Pi_0,j}$. So, $\bar{v}_j=\bar{v}_j^*$ and therefore is identified. For $j=1$, $\lambda_{\Pi_0,1}=c$ with eigenvector $v_1=\iota$ if $W_0$ is non-negative and irreducible, since it corresponds to $\lambda_{0,1}=1$, which is a simple eigenvalue with both algebraic and geometric multiplicity equal to one by the Perron-Frobenius theorem. 

Now consider $x_j$ and assume, without loss, that it is a generalized eigenvector obtained from $v_j$ as above.  In this case, $\Pi_0(\tilde{x_j}+\bar{x}_j\iota)=\lambda_{\Pi_0,j}(\tilde{x}_j+\bar{x}_j\iota) + (\tilde{v}_j+\bar{v}_j\iota)$. Since we have established the identification $\overline v_j$ above, consider an alternative constant $\bar{x}_j^*\neq\bar{x}_j$ that satisfies the previous equation.  Then, similar arguments as those used above allow one to obtain identification for $\overline x_j$ and analogously for possible successive generalized eigenvectors.  This then allows us to reconstruct the Jordan decomposition for $\Pi_0$ from the Jordan decomposition of $\tilde \Pi_0$.
\end{proof}

\subsubsection*{Proposition 2} 

\begin{proof}
Under row-sum normalization and $|\rho_0|<1$, $(I-\rho_0W_0)^{-1}\iota=\iota+\rho_0W_0\iota+\rho_0^2W_0^2\iota+\cdots=\iota+\rho_0\iota+\rho_0^2\iota+\cdots=\iota\frac{1}{1-\rho_0}$, so $\Pi_{01}\equiv(I-\rho_0W)^{-1}$ has constant row-sums. If row-sum normalization fails, $\Pi_{01}$ may not have constant row-sums. Define $h_{ij}$ as the $(ij)$-th element of $\tilde{H}$. The first row of the system $(I-\tilde{H})(I-\rho_0W)^{-1}\iota=(I-\tilde{H})r_{W_{0}}=0$ is $h^*_{11}r_{W_{0},1}-h_{12}r_{W_{0},2}-\cdots-h_{1N}r_{W_{0},N}=0$ where $h^*_{11}=1-h_{11}$ and $r_{{W_0},l}$ is the $l$-th element of $r_{W_{0}}$. If there are $N$ possible $W_0$, $W_0^{(1)},\dots,W_0^{(n)}$, such that $[r_{W_{0}^{(1)}}\;\cdots\;r_{W_{0}^{(N)}}]$ has rank $N$, then $h^*_{11}=h_{12}=\cdots=h_{1N}=0$. Since the same reasoning applies to all rows, $\tilde{H}$ is the trivial transformation $\tilde{H}=I$.
\end{proof}

\section{Extensions}

\subsubsection*{Extension: Multivariate Covariates\label{sec:additionaldiscussion}}

Allowing for multivariate $x_{t}$ of dimension $n\times k$, the
reduced-form model (4) is,
\begin{equation*}
y_{t}=\sum_{k=1}^{K}\Pi _{0,k}x_{k,t}+\nu _{t},
\end{equation*}%
where $\Pi _{0,k}=\left( I-\rho _{0}W_{0}\right) ^{-1}\left(
\beta_{0,k}I+\gamma _{0,k}W_{0}\right) $, $x_{k,t}$ refers to the $k$-th
column of $x_{t}$, and $\beta _{0,k}$ and $\gamma _{0,k}$ select the $k$-th
element of $K$-dimensional $\beta _{0}$ and $\gamma _{0}$, respectively. The
previous identification results then apply element by element to each $\Pi _{0,k}$%
, $k=1,\dots ,K$. In fact, we only then need to maintain $W_{x}=\gamma
_{0}W_{0} $ and $\gamma_0\neq0$ for one covariate. It is therefore possible to allow the
structure of endogenous and exogenous social effects to differ for $K-1$ of
the covariates. With $K$ covariates, equation (3) is,
\begin{equation*}
y_{t}=\rho _{0}W_{0}y_{t}+\sum_{k=1}^{K}\beta
_{0,k}x_{k,t}+\sum_{k=1}^{K}\gamma _{0,k}W_{0,k}x_{k,t}+\epsilon _{t}.
\end{equation*}%
Let $W_{0,k}=W_{0}$ be the case for $k=1$. Then, having identified $\rho
_{0} $ and $W_{0}$ from $\Pi _{0,1}$,%
\begin{equation*}
(I-\rho _{0}W_{0})\Pi _{0,k}=\beta _{0,k}I+\gamma _{0,k}W_{0,k},
\end{equation*}%
for $k=2,\dots ,K$. The parameter $\beta _{0,k}$ then corresponds to the
diagonal elements of $(I-\rho _{0}W_{0})\Pi _{0,k}$ and the off-diagonal
entries correspond to the off-diagonal elements of $\gamma _{0,k}W_{0,k}$.
If Assumption (A4) holds for every $k=1,\dots ,K$, we can identify $\gamma
_{0,k}$ and thus $W_{0,k}$ for every $k=1,\dots ,K$.\footnote{%
If $x$ is a scalar, $\Pi = (I-W_{y})^{-1}(\beta I + W_x)$, where we absorb $%
\rho$ and $\gamma$ into the neighborhood matrices for simplicity. Supposing
both $W_y$ and $W_x$ have zero diagonals (Assumption (A1)) one has $%
2N(N-1)+1 $ structural parameters against $N^2$ elements in $\Pi$. Since $%
2N(N-1)+1 > N^2$ for $N>1$ one would not be able to identify the parameters
of interest without further information. \citet{Blumeetal2015} also study
the case in which the social structure mediating endogenous and exogenous
social effects might differ. When $W_{x}$ is known and there is partial
knowledge of the endogenous social interaction matrix $W_{0}$, they show
that the parameters of the model can be identified (their Theorem 6).
Analogously, when there are enough unconnected nodes in each of the social
interaction matrices represented by $W_{x}$ and $W_{0}$, and the identity of
those nodes is known, identification is also (generically) possible (their
Theorem 7).}

\subsubsection*{Extension: Heterogeneous $\protect\beta_{0}$}

While many applications assume $\beta _{0}$ to be homogeneous across
individual units, we here consider possible avenues allowing for
heterogeneous coefficients. In a slight abuse of notation, consider for this
subsection $\beta _{0}$ in equation system (3) to be 
$\text{diag}(\beta _{01},\dots ,\beta _{0N})_{N\times N}$. Instead of a homogeneous
scalar, $\beta _{0}$ is a diagonal matrix with the individual-specific
coefficients $\beta _{01},\dots ,\beta _{0N}$ along its diagonal. When $\rho
_{0}=0$ as in \citet{Manresa2016}, $\Pi _{0}=\beta _{0}+\gamma _{0}W_{0}$.
In this case, under Assumption (A1), $\beta _{0}$ is identified from the
diagonal elements in $\Pi _{0}$ and $\gamma _{0}W_{0}$ is identified from
its off-diagonal elements.

With multiple covariates, as long as the coefficients are homogeneous for one of
the covariates, one can also identify heterogeneous coefficients on the
remaining covariates, as done in the previous subsection. For example, let
there be $K$ covariates and $\beta _{0,k}=\text{diag}(\beta _{01,k},\dots ,\beta
_{0N,k})$ for $k=1,\dots ,N$. Suppose $\beta _{01,k}=\dots =\beta _{0N,k}$
for one of these covariates, and let $k=1$ without loss of generality. Having
identified $\rho _{0}$ and $W_{0}$ from $\Pi _{0,1}$,%
\begin{equation*}
(I-\rho _{0}W_{0})\Pi _{0,k}=\beta _{0,k}+\gamma _{0,k}W_{0,k},
\end{equation*}%
for $k=2,\dots ,K$. Then, under Assumption (A1), $\beta _{0,k}$, is
identified from the diagonal elements in $(I-\rho _{0}W_{0})\Pi _{0,k}$ and $%
\gamma _{0,k}W_{0,k}$ is identified from its off-diagonal elements for $%
k=2,\dots ,K-1$.

Alternatively, when $\gamma _{0}=0$, one can apply traditional simultaneous
equation methods to attain identification. For example, let $B\equiv \lbrack
(I-\rho _{0}W_{0})^{\prime },~-(\beta _{0}+\gamma _{0}W)^{\prime
}]_{2N\times N}^{\prime }$ and $R_{(N-1)\times 2N}=[0_{(N-1)\times (N+1)} ~
I_{N-1}]$. The restriction that $\gamma _{0}=0$ in the first equation in
equation system (3) can then be expressed as $RB_{\cdot 
,1}=0_{(N-1)\times 1}$, where $B_{\cdot ,1}$ is the first column in $B$. The
rank condition for the identification of the first equation is then given by,%
\begin{equation*}
RB=\left[ 
\begin{array}{cccc}
0 & -\beta _{0,2} & \dots & 0 \\ 
& \dots &  &  \\ 
0 & 0 & \dots & -\beta _{0,N}%
\end{array}%
\right]
\end{equation*}%
having rank equal to $N-1$ (see Theorem 9.2 in \citealp{Wooldridge2002}).
This will be the case if $\beta _{0,2},\dots ,\beta _{0,N}\neq 0$.
Intuitively, this guarantees that individual specific covariates are valid
instrumental variables for their outcomes. Hence, if $\beta _{0,1},\dots
,\beta _{0,N}$ are each different from zero, identification from $\Pi _{0}$
is obtained.

More generally, because there are $N^{2}$ equations corresponding to the
entries in $\Pi _{0}$ and, allowing for heterogeneity in $\beta _{0}$ and
imposing assumptions (A1)-(A6), there are $N^{2}+1$ parameters, further
restrictions (like row-sum normalization) would be necessary. We conjecture
that adequate restrictions would deliver positive identification results,
but we focus on the more conventional setting with homogeneous $\beta _{0}$.

\section{Estimation}

\subsection{Sparsity of $W_0$ and $\Pi_0$}

\label{appsubec:sparsityWPi}

Define $\tilde{M}$ as the number of nonzero elements of $\Pi_0$. We say
that $\Pi_0$ is sparse if $\tilde{M}\ll NT$. Denote the number of connected
pairs in $W_0$ via paths of any length as $\tilde{m}_c$. We equivalently say
that $W_0$ is ``sparsely connected'' if $\tilde{m}_c\ll NT$. We show that
the sparsity of $\Pi_0$ is related to the sparse connectedness of $W_0$.

\setcounter{prop}{2}
\begin{prop}
$\Pi_0$ is sparse if, and only if, the number of unconnected pairs $W_0$ is
large.
\end{prop}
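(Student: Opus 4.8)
The plan is to reduce the proposition to a count of the nonzero entries of $\Pi_0$, which rests on a single structural lemma pinning down exactly which entries vanish. Throughout I would take $W_0$ to be non-negative, as is standard for social interaction matrices.

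First I would expand the off-diagonal entries using the Neumann series \eqref{eq:RFexpanded}, valid by (A2):
$$[\Pi_0]_{ij}=(\rho_0\beta_0+\gamma_0)\sum_{k=1}^{\infty}\rho_0^{k-1}[W_0^k]_{ij},\qquad i\neq j,$$
since the $\beta_0 I$ term contributes only on the diagonal. As $[W_0^k]_{ij}$ sums the weight-products over directed walks of length $k$ from $i$ to $j$, it vanishes for every $k$ exactly when no directed path of any length joins the pair $(i,j)$. This yields at once the easy implication: an unconnected pair gives $[\Pi_0]_{ij}=0$.

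The key lemma, and the main obstacle, is the converse --- a connected pair gives $[\Pi_0]_{ij}\neq 0$ --- where one must exclude exact cancellation in the infinite sum. When $\rho_0>0$ every summand $\rho_0^{k-1}[W_0^k]_{ij}$ is non-negative and the term at the shortest-path length is strictly positive, so the sum is strictly positive; multiplying by $(\rho_0\beta_0+\gamma_0)\neq0$, which holds by (A3), keeps it nonzero. For $\rho_0<0$ the signs alternate, so I would instead argue by analyticity: the map $g(z)=\sum_{k\geq1}z^{k-1}[W_0^k]_{ij}=[(I-zW_0)^{-1}W_0]_{ij}$ is a nonzero rational function of $z$ on the disc where $zW_0$ has spectral radius below one (its power-series coefficients are non-negative and not all zero for a connected pair), so its zeros are isolated and $g(\rho_0)\neq0$ for all but finitely many admissible $\rho_0$. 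Thus the biconditional is exact when $\rho_0>0$ and holds generically in $\rho_0$ otherwise.

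Finally I would translate the lemma into the stated equivalence by counting. Letting $\tilde{m}_c$ denote the number of connected ordered pairs, the lemma makes the nonzero off-diagonal entries of $\Pi_0$ number exactly $\tilde{m}_c$, while the $N$ diagonal entries add at most $N$ further nonzeros; hence $\tilde{m}_c\leq\tilde{M}\leq\tilde{m}_c+N$. Since $N\ll NT$, sparsity of $\Pi_0$ (that is, $\tilde{M}\ll NT$) is equivalent to sparse-connectedness of $W_0$ (that is, $\tilde{m}_c\ll NT$). Because connected and unconnected ordered pairs partition the $N(N-1)$ off-diagonal pairs, $\tilde{m}_c$ being small is exactly the statement that the number of unconnected pairs is large, which is the desired conclusion.
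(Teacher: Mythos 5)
Your proposal is correct and follows essentially the same route as the paper's own proof: expand $\Pi_0$ via the Neumann series \eqref{eq:RFexpanded}, use Assumption (A3) to conclude that $[\Pi_0]_{ij}=0$ exactly when no directed path joins $i$ and $j$, and then translate this into the sparsity equivalence by counting pairs. You are in fact more careful than the paper, which asserts the biconditional ``follows directly'' and passes over the possibility of exact cancellation in the series: your observation that the equivalence is exact only for $\rho_0>0$ with non-negative $W_0$ (strict positivity of every summand, with the shortest-path term positive), and otherwise holds only generically in $\rho_0$ by an isolated-zeros argument, identifies a genuine looseness in the paper's one-line proof rather than a flaw in yours.
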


\begin{proof}
For $|\rho_0|<1$, we have that
\begin{eqnarray*}
\Pi_0 &=& \beta_0I + \left(\rho_0\beta_0+\gamma_0\right)\sum_{k=1}^\infty \rho_0^{k-1}W_0^k.
\end{eqnarray*}
Given that $\rho_0\beta_0+\gamma_0\neq0$, it follows directly that $[\Pi_0]_{ij}=0$ if, and only if, there are no paths between $i$ and $j$ in $W_0$. Therefore, the sparsity of $\Pi_0$ translates into a large number of $\left(i,j\right)$ unconnected pairs in $W_0$. 
\end{proof}

On the one hand, sparsity does not imply sparse connectedness. A circular
graph is clearly sparse, but all nodes connect with all other nodes through
a path of length at most $\frac{N}{2}$. On the other hand, sparse
connectedness implies sparsity and therefore is a stronger requirement. To
see this, take any arbitrary network $G$ with $\tilde{m}\left( G\right) $
non-zero elements and $\tilde{m}_{c}\left( G\right) $ connected pairs. Now
consider the operation of \textquotedblleft completing\textquotedblright\ $G$%
: for every connected $\left( i,j\right) $ pair, add a direct link between $%
\left( i,j\right) $ if nonexistent in $G$ and denote the resulting matrix
as $\mathcal{C}\left( G\right) $. It is clear that $\tilde{m}\left( G\right)
\leq \tilde{m}\left( \mathcal{C}\left( G\right) \right) $. Yet, $\tilde{m}%
\left( \mathcal{C}\left( G\right) \right) =\tilde{m}_{c}\left( G\right) $.

\label{appsubsec:OLS}

\subsection{Adaptive Elastic Net\label{appsubsec:ENet}}

In this section, we detail the algorithm, the computational steps, and the
property of the estimator as derived in \citet{CanerZhang2014}. For any
given choice of $p=(p_{1},p_{1}^{\ast },p_{2})$, the algorithm is composed
of three main steps pertaining to the estimation of the Elastic Net (step 5
below), the Adaptive Elastic Net version (step 6), and the Unpenalised
Post-GMM estimator (step 7). Other steps deal with the selection of the
initial conditions and other details of the implementation. The runtime for a given set of
penalization parameters (that is, steps 1-7 below) is expected to be around
2-3 mins for $N=30$ and 15-20 mins for $N=70$. The increase in the
computational time is due to the fact that the number of parameters to
estimate grows at an $N^{2}$ rate.\footnote{%
Benchmarked on a 2020 Macbook Pro with an M1 processor and 16gb RAM.}

\begin{itemize}
\item[1.] For any choice of penalization parameters $p=(p_1,p_1^*,p_2)$, run
steps 2-7 below.

\item[2.] Data is standardised (subtracting the mean and dividing by the
standard error), and individual and time effects are removed.

\begin{itemize}
\item[a.] For the dynamic versions, use time weights $\omega_{t}$, which are
chosen by the practitioner and applied after standardization.
\end{itemize}

\item[3.] Initial conditions are set at $(\rho,\beta,\gamma)=(0.5,\hat{\beta}%
_{ols},0)$, where $\hat{\beta}_{ols}$ is the OLS regression of $y_{it}$ on $%
x_{it}$ after demeaning.

\item[4.] The derivatives of the objective function with respect to all $%
w_{ij}$ are computed, and set to $w_{ij}=0$ for the derivatives that are
smaller than a small number $\eta$. This number can be chosen to be equal to 
$p_1$ following \citet{CanerZhang2014}. To be conservative, we choose $%
\eta=\min(p_1,0.01)$. Those $w_{ij}$ are set at zero for all the algorithm
that follow.

\item[5.] \emph{Elastic Net.} We devise an algorithm in the following manner.

\begin{itemize}
\item[a.] For any given $(\rho,\beta,\gamma)$, we write, 
\begin{eqnarray*}
\underbrace{y_t-x_t\beta}_{\equiv \tilde{y}_{it}}&=&W\underbrace{(\rho y_t +
x_t \gamma)}_{\equiv \tilde{x}_{t}}+\epsilon_{t},
\end{eqnarray*}
which is endogenous due to the mechanical dependence of $\tilde{x}_{t}$ on $%
y_{t}$, so it is instrumented with $x_t$. This expression is implemented
through a fast Least-Angle Regression algorithm (LARS) from %
\citet{Caneretal2018} (Section 4.2) that extends the original LARS algorithm
to the GMM case. We obtain $\hat{W}$ for any given $(\rho,\beta,\gamma)$. At
this stage, only impose that $|w_{ij}|\leq 1$ and, in particular, do not
impose row-sum normalization.

\item[b.] Use the L-BFGS-B algorithm to minimize the objective function and
obtain $(\tilde{\rho},\tilde{\beta},\tilde{\gamma})$, and then compute $%
\tilde{W}$ from step 5a.
\end{itemize}

\item[6.] \emph{Adaptive Elastic Net.} Use $\tilde{W}$ to compute the
adaptive weights. Following \citet{CanerZhang2014}, we select the adaptive
weight penality as $w_{ij}^{-\kappa}$ if $w_{ij}>0.05$ where $\kappa=2.5$.
If $w_{ij}<0.05$, we set $0.05^{-\kappa}$. This ensures that the second-stage
estimates can be nonzero even if the first-stage estimates were zero or small.

\begin{itemize}
\item[a.] We then run a version of the algorithm described in step 5a, with
the difference that we impose row-sum normalization. Due to this, we can no
longer use a restriction-free LARS algorithm. We instead use CVXR, a convex
optimization routine package in \texttt{R}.

\item[b.] Similar to step 5b, we use the L-BFGS-B package to solve for $(\hat{%
\rho},\hat{\beta},\hat{\gamma})$ and compute the associated $\widehat{W}$.
\end{itemize}

\item[7.] \emph{Unpenalised Post-GMM.} Re-estimate $(\rho,\beta,\gamma)$ and 
$W$ on the support estimated in the previous step. In other words, the zero
elements of $\widehat{W}$ are set to zero, i.e., the GMM objective function
is estimated under the restriction that $\{W_{ij}=0:\widehat{W}_{ij}=0\}$
and setting the penalization to zero. Asymptotic standard errors are also
computed at this step.

\item[8.] Re-compute steps 1-6 on a grid for $(p_1,p_1^*,p_2)$ and choose
the penalization parameter by BIC.
\end{itemize}

\noindent We implement the following modifications and adjustments of the
algorithm for the empirical analysis presented in Section 4. Steps 4$^\prime$ and 8$^\prime$
provide more stability to the period-by-period estimates by avoiding
small-sample biases in the estimation of the derivative and of the
information criteria.

\begin{itemize}
\item[2a$^\prime$.] We use uniform weights $\omega_t=1$ and a Gaussian kernel with a
center varying period by period from $t=1963$ to $2015$, i.e., ranging from
the beginning to the end of the sample. The variance of the kernel is set
such that 75\% of the weight is given to the first half of the data
(pre-1988), when the center of the kernel is at the beginning of the sample
at $t=1963$.

\item[4$^\prime$.] We compute the derivative under uniform weights, store it, and
input the derivative for each dynamic-network version. This is computationally
efficient (given that the numerical approximation of the derivative is
intensive) and ensures that the derivative is computed from the full sample.

\item[8$^\prime$.] The grid for the dynamic-network versions is set around the
penalization parameters chosen under the uniform kernel. More specifically, we
run the procedure for the penalization parameters for one point in all
directions in the grid and select the dynamic-network penalization
parameters by BIC.
\end{itemize}

\subsection{OLS\label{appsubsec:OLSb}}

For the purpose of estimation, it is convenient to write the model in the
stacked form. Let $x=\left[ x_{1},\dots ,x_{T}\right] ^{\prime }$ be the $%
T\times N$ matrix of explanatory variables, $y_{i}=\left[ y_{i1},\dots
,y_{iT}\right] ^{\prime }$ be the $T\times 1$ vector of response variables
for individual $i$, and $\pi _{i}^{0}=\left[ \pi _{i1}^{0},\dots ,\pi
_{iN}^{0}\right] ^{\prime }$, where $\pi _{ij}^{0}$ is a short notation for
the $\left( i,j\right) $-th element of $\Pi _{0}$. The concise model is then%
\begin{equation}
y_{i}=x\pi _{i}^{0}+v_{i}  \label{eq:shortModel}
\end{equation}%
for each $i=1,\dots ,N$, where also $v_{i}=\left[ v_{i1},\dots ,v_{iT}\right]
^{\prime }$. Model \eqref{eq:shortModel} can then be estimated
equation by equation. Denote $\pi ^{0}=[{\pi _{1}^{0}}^{\prime },\dots ,{\pi
_{N}^{0}}^{\prime }]^{\prime }$. Stacking the full set of $N$ equations,%
\begin{equation}
y=X\pi ^{0}+v  \label{eq:shortModelFull}
\end{equation}%
where $y=\left[ y_{1},\dots ,y_{N}\right] $, $X=I_{N}\otimes x$, $\pi ^{0}=%
\text{vec}\left( \Pi _{0}^{\prime }\right) $, and $v=\left[ v_{1},\dots
,v_{N}\right] $. If the number of individuals in the network $N$ is fixed
and much smaller than the number of data points available, $N^{2}\ll NT$, equation %
\eqref{eq:shortModelFull} can be estimated via ordinary least squares (OLS).
Under suitable regularity conditions, the OLS estimator $\hat{\pi}=\left(
X^{\prime }X\right) ^{-1}X^{\prime }y$ is asymptotically distributed,%
\begin{equation*}
\sqrt{NT}(\hat{\pi}-\pi ^{0})\overset{d}{\longrightarrow }\mathcal{N}\left(
0,Q^{-1}\Sigma Q^{-1}\right)
\end{equation*}%
where $Q_{T}\equiv \frac{1}{NT}{X^{\prime }X}$, $Q\equiv p\lim_{T\rightarrow
\infty }Q_{T}$, $\Sigma _{T}\equiv \frac{1}{NT}X^{\prime }vv^{\prime }X$, and 
$\Sigma \equiv p\lim_{T\rightarrow \infty }\Sigma _{T}$. The proof is
standard and omitted here. As noted above, in typical applications, it is
customary to row-sum normalize matrix $W$. If no individual is isolated, one
obtains that, by equation (5),
\begin{eqnarray}
\Pi _{0}\iota _{N} &=&\beta _{0}\iota +(\rho _{0}\beta _{0}+\gamma
_{0})\sum_{k=1}^{\infty }\rho _{0}^{k-1}W_{0}^{k}\iota  \notag \\
&=&\frac{\beta _{0}+\gamma _{0}}{1-\rho _{0}}\iota  \label{eq:PiRowSum}
\end{eqnarray}%
where $\iota _{N}$ is the $N$-length vector of ones. The last equality
follows from the observation that, under row-normalization of $W_{0}$, $%
W^{k}\iota =W\iota =\iota $, $k>0$. Equation \eqref{eq:PiRowSum} implies
that $\Pi _{0}$ has constant row-sums, which implies that row-sum
normalization is, in principle, testable. This suggests a simple Wald
statistic applied to the estimates of $\pi ^{0}$. Under the null hypothesis,%
\begin{equation*}
\sqrt{NT}R\hat{\pi}\overset{d}{\longrightarrow }\mathcal{N}\left(
0,RQ^{-1}\Sigma Q^{-1}R^{\prime }\right)
\end{equation*}%
where $R=\left[ I_{N-1}\otimes \iota _{N}^{\prime };-\iota _{N-1}\otimes
\iota _{N}^{\prime }\right] $. The Wald statistic is $W=NT\left( R\hat{\pi}%
\right) ^{\prime }\left( Q^{-1}\Sigma Q^{-1}\right) ^{-1}\left( R\hat{\pi}%
\right) \sim \chi _{N-1}^{2}$, which is a convenient expression for testing
row-sum normalization of $W_{0}$. We also note that the asymptotic
distribution of $\hat{\theta}$ can be immediately obtained by the Delta
Method,%
\begin{equation*}
\sqrt{T}(\hat{\theta}-\theta _{0})\overset{d}{\longrightarrow }\mathcal{N}%
\left( 0,\nabla _{\theta }^{\prime -1}Q^{-1}\Sigma Q^{-1}\nabla _{\theta
}\right)
\end{equation*}%
where $\nabla _{\theta }$ is the gradient of $\hat{\theta}$ with respect to $%
\hat{\pi}$. 
\textcolor{black}{We note that the derivation of  the Wald
statistic for testing the row-sum normalization and the asymptotic
distribution of $\hat{\theta}$ does not depend on the OLS implementation,
and can be easily adjusted for any estimator for which the asymptotic
distribution is known.}

\section{Simulations\label{sec:simulations copy(1)}}

\subsection{Set-up}

The simulations are based on two stylized random network structures and two
real-world networks. These networks vary in their size, complexity, and
aggregate and node-level features. All four networks are also sparse. Networks (i) and (ii) are stylized, while (iii) and (iv) are based on real data:

\begin{itemize}
\item[(\emph{i})] Erd\"{o}s-Renyi network: We randomly pick exactly one element
in each row of $W_{0}$ and set that element to $1$. This is a random graph
with in-degree equal to $1$ for every individual (\citealp{ErdosRenyi1960}).
Such a network could be observed in practice if connections were formed
independently of one another. With $N=30$, the resulting density of links is 
$3.45$\%.

\item[(\emph{ii})] Political party network: There are two parties, each with
a party leader. The leader directly affects the behavior of half the party
members. We assume that one party has twice members as the
other. More specifically, we assume individuals $i=1,\dots ,\frac{N}{3}$ are
affiliated with Party A and led by individual $1$; individuals $i=\frac{N}{%
3}+1,\dots ,N$ are affiliated with Party B and led by individual $\frac{N}{%
3}+1$. This difference in party size allows us to evaluate our ability to
recover and identify central leaders, even in the smaller party. To test the
procedure further, we add one random link per row to represent ties that are
not determined by links to the party leader. We simulate this network for
various choices of $N$. If $N$ is not a multiple of three, we round $\frac{N%
}{3}$ to the nearest integer. With $N=30$, this network has a density of $%
5.17$\%.

\item[(\emph{iii})] Coleman's (1964)\nocite{Coleman1964} high school
friendship network survey: In $1957-8$, students in a small high school in
Illinois were asked to name \textquotedblleft fellows that they go around
with most often.\textquotedblright\ A link was considered if the student
nominated a peer in either survey wave. The full network has $N=73$ nodes,
of which $70$ are non-isolated and so have at least one link to another
student. On average, students named just over five friendship peers. This
network has a density of $7.58$\%. Furthermore, the in-degree distribution shows
that most individuals received a small number of links, while a small number
received many peer nominations.

\item[(\emph{iv})] Banerjee \emph{et al.}'s (2013)\nocite{banerjeeetal2013}
village network survey: These authors conducted a census of households in $%
75 $ villages in rural Karnataka, India, and survey questions included
several about relationships with other households in the village. To begin
with, we use social ties based on family relations (later examining
insurance networks). We focus on village $10$, which is comprises $N=77$
households and so similar in size to network \emph{(iii)}. In this village,
there are $65 $ non-isolated households, with at least one family link to
another household. This network has a density of $5.07$\%.
\end{itemize}

We simulate the real-world networks (\emph{iii}) and (\emph{iv}) using the
non-isolated nodes in each (so $N=70$ and $65$ respectively). We proceed as
follows for each network. First, for each node, we randomly assign one of
its links to have three times the strength of other links. As the
underlying data generating process is assumed to allow for common time
effects ($\alpha _{t}$), we then set the weight of all the node's other links to be equal and such that row-sum normalization (A4$^\prime$) is complied with.
For example, if in a given row of $W_{0}$ there are two links, one will be
randomly selected to be set to $.75$, and the other will be set to $.25$. If there
are three links, one is set to $.6$ and the other two are set to each have
weight $.2$ to maintain row-sum normalization, and so on. For the
Erd\"{o}s-Renyi network, there are thus only strong ties, as each node has only
one link to another node.

As we consider larger networks, we typically expect them to have more
nonzero entries in each row of $W_{0}$, but row-sum normalization means
that each weaker link will be of lower value. This works \emph{against} the
detection of weaker links using estimation methods involving penalization
because they impose that small-parameter estimates shrink to zero.\footnote{%
\citet{CanerZhang2014} state that \textquotedblleft local to zero
coefficients should be larger than \textcolor{black}{$T^{-\frac{1}{2}}$} to
be differentiated from zero.\textquotedblright} Finally, to aid exposition,
we set a threshold value for link strength to distinguish ``strong'' and
``weak'' links. A strong (weak) link is defined as one for which $W_{0,ij}>$ $%
(\leq )$ $.3$.

Panel data for each of the four simulations is generated as%
\begin{equation*}
y_{t}=(I-\rho _{0}W_{0})^{-1}(x_{t}\beta +W_{0}x_{t}\gamma +\alpha _{t}\iota
+\alpha ^{\ast }+\epsilon _{t}),
\end{equation*}%
where $\alpha _{t}$ is a (scalar) time effect and $\alpha ^{\ast }$ is a $%
N\times 1$ vector of fixed effects, drawn respectively from $N(1,1)$ and $%
N(\iota ,I_{N\times N})$ distributions. We consider $T=%
\{5,10,15,25,50,75,100,125,150\}$. The true parameters are set to $\rho
_{0}=.3$, $\beta _{0}=.4$, and $\gamma _{0}=.5$ (thus satisfying Assumption
A3). The exogenous variable ($x_{t}$) and error term ($\epsilon _{t}$) are
simulated as standard Gaussian, both generated from $N(0_{N},I_{N\times N})$
distributions. 
\textcolor{black}{This is similar to the variance terms set in
other papers, e.g., \citet{Lee2004}.} We later conduct a series of
robustness checks to evaluate the sensitivity of the simulations to
alternative parameter choices and the presence of common and
individual-level shocks. For each combination of parameters, we conduct $%
1,000$ simulation runs.

\subsection{Robustness}

\textcolor{bluez}{Table A2} presents results for the recovered stylized
networks under varying network sizes, $N=\{15,30,50\}$. Differences between
the true and estimated networks are fairly constant as $N$ increases: even
for small $N=15$, a large proportion of zeros and non-zeros are correctly
estimated. In all cases, biases in $\hat{\rho}$ and $\hat{\gamma}$ decrease
with larger $T$.

\textcolor{bluez}{Table A3} conducts robustness checks on the sensitivity of
the estimates to parameter choices for the stylized networks. We consider the 
true parameters $\rho _{0}=\{.1,.3,.7,.9\}$, $\gamma _{0}=\{.3,.7\}$, and $\beta
_{0}=\{.2,.6\}$. We also introduce a common shock in the disturbance
variance-covariance matrix by varying $q$ in%
\begin{equation*}
\epsilon _{t}\sim N\left( 0,\left[ 
\begin{array}{cccc}
1 & q & \cdots & q \\ 
q & 1 & \cdots & q \\ 
\vdots & \vdots & \ddots & \vdots \\ 
q & q & \cdots & 1%
\end{array}%
\right] \right)
\end{equation*}%
where we consider $q=\{.3,.5,.8,1\}$. We find the procedure to be robust to
the true values of $\rho _{0}$, $\beta _{0}$, $\gamma _{0}$, and $q$. The
method performs well in all scenarios.

The next set of robustness checks demonstrates the gains from using the
Adaptive Elastic Net GMM\ estimator over alternative estimators. %
\textcolor{bluez}{Table A4} shows simulation results using Adaptive Lasso
estimates of the interaction matrix $\Pi _{0}$, so estimating and penalizing
the reduced-form. The Adaptive Lasso estimator performs worse,
and increased sample sizes are necessary to achieve similar performance
compared to the Adaptive Elastic Net GMM. \textcolor{bluez}{Table A5} then
shows the performance of the procedure based on OLS estimates of $\Pi _{0}$.
Given OLS requires $m\ll NT$, we use a time dimension ten times larger, $%
T=\{500,1000,1500\}$, and still find a deterioration in performance compared
to the Adaptive Elastic Net GMM estimator.\footnote{%
As opposed to the penalized estimates, all OLS estimates are different from
zero. We compute the \lq\lq\% True Zeroes\rq\rq\ as the proportion of true
zero elements in the social interaction matrix that are estimated as smaller
than .05.}

Taken together, these robustness checks suggest the Adaptive Elastic Net
GMM\ estimator is preferred over Adaptive Lasso and OLS\ estimators. As discussed in the text, the procedure recovers true strong links. In finite samples, weak links can be detected as zeros due to the shrinkage estimator employed. In turn, row-sum normalization may imply that the strength of strong edges is over-estimated. We also showed that \emph{a
fortiori}, the procedure can recover network- and node-level statistics. It does so in networks that
vary in size and complexity, and as the underlying social interactions model
varies in the strength of endogenous and exogenous social effects, and the
structure of shocks.

\includepdf[pages={1-23}]{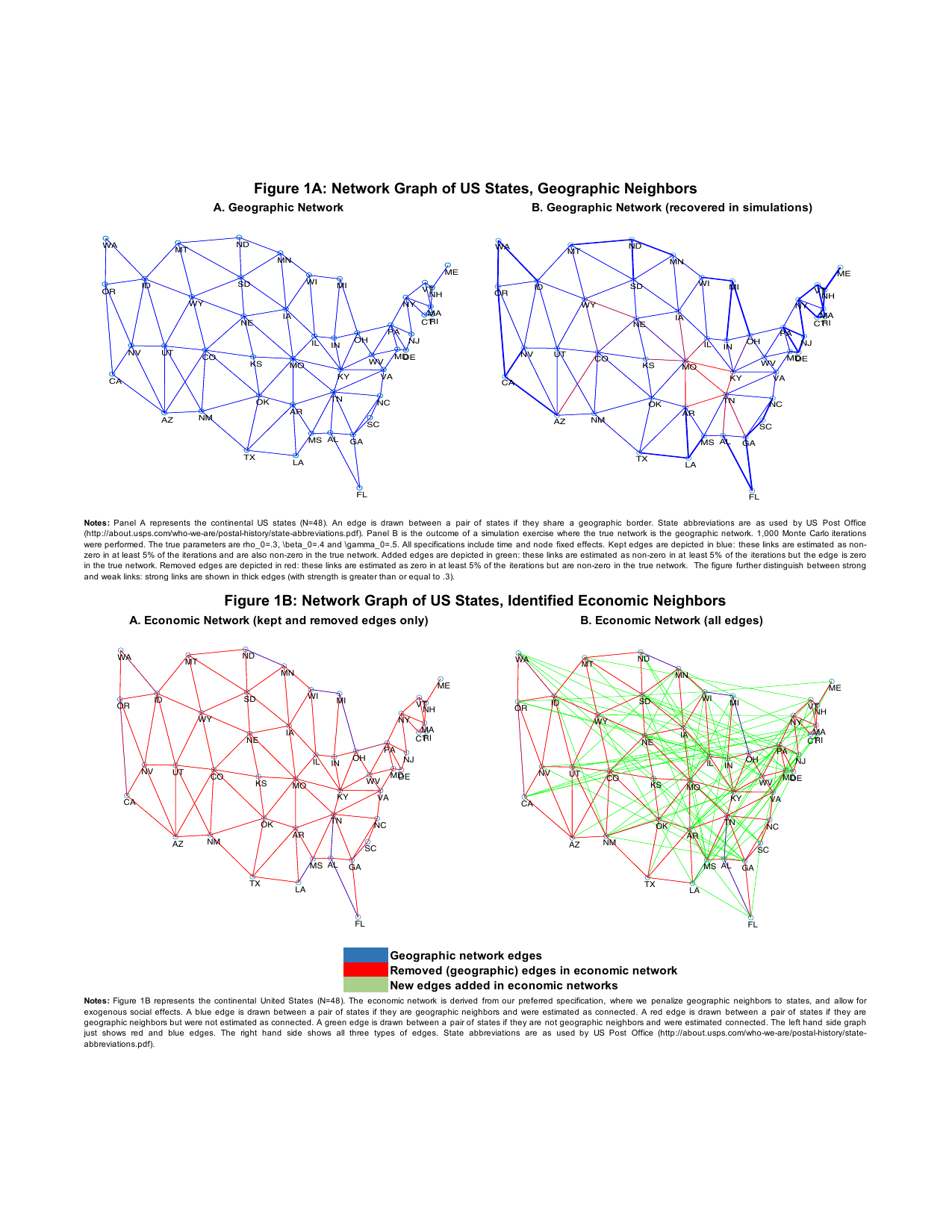}

\end{document}